\documentclass[hidelinks,a4paper,UKenglish]{lipics-v2018}
%This is a template for producing LIPIcs articles. 
%See lipics-manual.pdf for further information.
%for A4 paper format use option "a4paper", for US-letter use option "letterpaper"
%for british hyphenation rules use option "UKenglish", for american hyphenation rules use option "USenglish"
% for section-numbered lemmas etc., use "numberwithinsect"

% use these definitions for the *full* version of the paper
\newcommand{\short}[1]{}
\newcommand{\full}[1]{#1}

% use these definitions for the *short* version of the paper
% \renewcommand{\short}[1]{#1}
% \renewcommand{\full}[1]{}

\usepackage{microtype}%if unwanted, comment out or use option "draft"

%\graphicspath{{./graphics/}}%helpful if your graphic files are in another directory

\bibliographystyle{plainurl}% the recommnded bibstyle

\title{Up-To Techniques for Behavioural Metrics
  via~Fibrations}

\titlerunning{Up-To Techniques for Behavioural Metrics via~Fibrations}%optional, please use if title is longer than one line

% \author[1]{Filippo Bonchi}
% \author[2]{Barbara K\"onig}
% \author[3]{Daniela Petri\c san}
% \affil[1]{University of Pisa}
% \affil[2]{Universit\"at Duisburg-Essen}
% \affil[3]{Universit\'e Paris Diderot}

\author{Filippo Bonchi}{Universit{\'a} di Pisa}{filippo.bonchi@unipi.it}{}{}
\author{Barbara K\"onig}{Universit\"at Duisburg-Essen}{barbara\_koenig@uni-due.de}{}{}
\author{Daniela Petri\c san}{CNRS, IRIF, Universit\'e Paris Diderot}{petrisan@irif.fr}{}{}

\authorrunning{F. Bonchi, B.  K\"onig, D. Petri\c san}%mandatory. First: Use abbreviated first/middle names. Second (only in severe cases): Use first author plus 'et. al.'

\Copyright{Filippo Bonchi, Barbara  K\"onig, Daniela Petri\c san}%mandatory, please use full first names. LIPIcs license is "CC-BY";  http://creativecommons.org/licenses/by/3.0/

\subjclass{
  \ccsdesc[500]{Theory of computation~Concurrency},
  \ccsdesc[500]{Theory of computation~Formal languages and automata theory},
  \ccsdesc[500]{Theory of computation~Logic and verification}
}

% \subjclass{Theory of computation $\to$ Models of computation $\to$
%   Concurrency; Theory of computation $\to$ Formal languages and
%   automata theory; Theory of computation $\to$ Logic $\to$ Logic and verification}% mandatory: Please choose ACM 2012 classifications from https://www.acm.org/publications/class-2012 or https://dl.acm.org/ccs/ccs_flat.cfm . E.g., cite as "General and reference $\rightarrow$ General literature" or \ccsdesc[100]{General and reference~General literature}. 

\keywords{behavioural metrics, bisimilarity, up-to techniques,
  coalgebras, fibrations}%mandatory

%\category{}%optional, e.g. invited paper

%\relatedversion{}%optional, e.g. full version hosted on arXiv, HAL, or other respository/website

%\supplement{}%optional, e.g. related research data, source code, ... hosted on a repository like zenodo, figshare, GitHub, ...

\funding{The first author acknowledges financial support from project
  ANR-16-CE25-0011 REPAS, the second author from DFG project BEMEGA,
  and the third author from the European Research Council (ERC) under
  the European Union's Horizon 2020 research and innovation programme
  (grant agreement No.670624).
}%optional, to capture a funding statement, which applies to
        %all authors. Please enter author specific funding statements
        %as fifth argument of the \author macro.

\acknowledgements{The authors are grateful to Shin-ya Katsumata,
  Henning Kerstan, Damien Pous and Paolo Baldan for precious
  suggestions and inspiring discussions\full{ and to Thomas Colcombet
    for the \kl{knowledge} package used for creating the hyperlinks in
    this paper}.}

%Editor-only macros:: begin (do not touch as author)%%%%%%%%%%%%%%%%%%%%%%%%%%%%%%%%%%
\EventEditors{Sven Schewe and Lijun Zhang}
\EventNoEds{2}
\EventLongTitle{29th International Conference on Concurrency Theory
(CONCUR 2018)}
\EventShortTitle{CONCUR 2018}
\EventAcronym{CONCUR}
\EventYear{2018}
\EventDate{September 4--7, 2018}
\EventLocation{Beijing, China}
\EventLogo{}
\SeriesVolume{118}
\ArticleNo{17} % “New number” (=<article-no>) goes here!
\nolinenumbers %uncomment to disable line numbering
\full{\hideLIPIcs}  %uncomment to remove references to LIPIcs series (logo, DOI, ...), e.g. when preparing a pre-final version to be uploaded to arXiv or another public repository
%%%%%%%%%%%%%%%%%%%%%%%%%%%%%%%%%%%%%%%%%%%%%%%%%%%%%%

\usepackage[svgnames]{xcolor} 

\usepackage{thmtools}
\usepackage{thm-restate}
\usepackage{multicol}
\usepackage{rotating}
\usepackage{adjustbox}
\usepackage{wrapfig}

\usepackage[hyperref,notion,electronic]{knowledge}
\usepackage{amsmath,amssymb}
\usepackage{mathbbol}
\usepackage{enumitem}
\usepackage{stmaryrd}
\usepackage{stackrel}
\usepackage{aliascnt}
\usepackage{xypic}
\usepackage{tikz}
\usepackage{tikz-cd}
\usetikzlibrary{matrix}
\usetikzlibrary{positioning,arrows,decorations.markings}
\usetikzlibrary{shapes,arrows} % neue Tikz-Einbindungen fÂ¸r Flow-Chart
\usetikzlibrary{positioning}
\usetikzlibrary{chains}
\usetikzlibrary{automata}
\usetikzlibrary{patterns}
\usetikzlibrary{decorations.pathreplacing}
\usetikzlibrary{calc,decorations.markings,decorations.pathreplacing,fit,backgrounds,shapes.symbols,shapes.geometric}
\tikzset{
  gnode/.style={draw,shape=circle,inner sep=0,minimum height=.2cm,
    minimum width=.2cm},
  hyperedge/.style={shape=rectangle,draw,inner sep=0,minimum width=.6cm,
    minimum height=.4cm}
}

\tikzset{every fit/.style={shape=rectangle,inner sep=5pt}}

\tikzset{
  mono/.style={>->},  
  ontop/.style={preaction={draw,-,line width=3pt,white}},
  arlab/.style={circle,inner sep=1pt,font=\scriptsize}
}

\newcommand\pseudopar[1]{{\sffamily\bfseries\raggedright{#1}}}
\usepackage[colorinlistoftodos,
textsize=footnotesize,color=orange!70]{todonotes}

\renewcommand{\todo}[1]{}

% \input{knowledges}

%%%%%%%%% KNOWLEDGES FOR FULL ARXIV VERSION

\full{
  \knowledgedirective{math}{notion}
\knowledgedirective{mathsymb}{math}

\knowledge{\VV}{mathsymb}
\knowledge{sound}{notion}
\knowledge{coalgebra for a functor}[$F$-coalgebra|coalgebra|coalgebras]{notion}

\knowledge{Algebras for the functor}[Algebra for a functor|algebra for
a functor|$F$-algebra|algebra for the functor|$F$-algebras|$T$-algebra]{notion}
\knowledge {behaviour functor}{notion}

\knowledge{carrier}{notion}
\knowledge{shortest-distinguishing-word-distance}{notion}

%%%% categorical preliminary notions

\knowledge {total category}{notion}
\knowledge {base category}{notion}
\knowledge {bialgebra}[bialgebras]{notion}
\knowledge {fibration}[fibrations]{notion}
\knowledge{fibre above $X$}[fibre|fibres]{notion}
\knowledge{fibrePred}{notion}
\knowledge{fibreRel}{notion}
\knowledge {Cartesian lifting}[Cartesian liftings]{notion}
\knowledge {split fibrations}{notion}
\knowledge {bifibration}[bifibrations]{notion}
\knowledge{direct images along $f$}[direct image]{notion}
\knowledge {lifting}{notion}
\knowledge {restriction of lifting}{notion}
\knowledge {reindexing functor}[reindexing]{notion}
\knowledge {fibred lifting}[fibred liftings]{notion}
\knowledge {isoFibre }{notion}
\knowledge{Beck-Chevalley}{notion}

%%%% quantale-valued predicates and relations

\knowledge {quantale}[quantales]{notion}
\knowledge {$\VV $-valued predicate}[$\VV $-valued predicates]{notion}
\knowledge {$\VV $-valued relation}[$\VV $-valued relations]{notion}
\knowledge {morphism between $\VV $-valued predicates}{notion}
\knowledge {morphism between $\VV $-valued relations}[morphism between
  $\VV$-relations]{notion}
\knowledge {unital}{notion}
\knowledge {unit}[identity element]{notion}
\knowledge {commutative}{notion}
\knowledge {reflexive}[reflexivity|Reflexivity]{notion}
\knowledge {transitive}[transitivity]{notion}
\knowledge {symmetric}[symmetry]{notion}

%%%%

\knowledge{evaluation maps}[evaluation map]{notion}
\knowledge{monotone evaluation maps}{notion}

%%%%% liftings

\knowledge {Wasserstein lifting}[Wasserstein liftings]{notion}
\knowledge {isoFibre}{notion}
\knowledge {distributive law}{notion}
\knowledge{Hausdorff metric}[Hausdorff distance]{notion}

%%%% up-to techniques

\knowledge{$b$-compatible}[compatible|compatibility]{notion}

%%%%% MATH SYMBOLS

\knowledge{\revOrd}{mathsymb}
\knowledge{\sdwDist}{mathsymb}
\knowledge{\bard}{mathsymb}
\knowledge{\liftF}{mathsymb}
\knowledge{\barF}{mathsymb}
\knowledge{\liftT}{mathsymb}
\knowledge{\Set}{mathsymb}
\knowledge {\cartLiftfR }{mathsymb}
\knowledge {\reindexf }{mathsymb}
\knowledge {\Vpred }{mathsymb}
\knowledge {\Vrel }{mathsymb}
\knowledge {\Vcat }{mathsymb}
\knowledge {\Vcatsym }{mathsymb}
\knowledge {\isoPR }{mathsymb}
\knowledge {\chbaseFunct }{mathsymb}

\knowledge{\lambdaChBase }{mathsymb}
\knowledge{constant predicate}{notion}
\knowledge{diagonal function}{notion}
\knowledge{sym}{mathsymb}
\knowledge{\uparr}{mathsymb}
\knowledge {\liftFcan }{mathsymb}
\knowledge {\liftFcanTwo }{mathsymb}
\knowledge {\true }{mathsymb}
\knowledge {\truer }{mathsymb}
\knowledge {\evcan }{mathsymb}
\knowledge {\objTruth }{mathsymb}
\knowledge{\distrLawNFA}{mathsymb}

\knowledge {\evNFAF }{mathsymb}
\knowledge {\evNFAT }{mathsymb}

\knowledge {\barTF }{mathsymb}
\knowledge {\barFT }{mathsymb}

\knowledge {\liftDistr }{mathsymb}
\knowledge {\barDistr }{mathsymb}
\knowledge {\barT }{mathsymb}
\knowledge {lambdaChBaseF}{mathsymb}
\knowledge {\orderFV }{mathsymb}
\knowledge {\distrLaw }{mathsymb}

\knowledge{knowledge}{url={https://ctan.org/pkg/knowledge?lang=en}}

\knowledge {diagonal relation}{notion}
\knowledge {composition}{notion}
\knowledge {\barFcan }{mathsymb}

\knowledge {behavioural closure}{notion}
\knowledge {up-to reflexivity}{notion}
\knowledge {up-to transitivity}{notion}
\knowledge {up-to symmetry}{notion}
\knowledge {up-to metric closure}{notion}
\knowledge {canonical evaluation map}{notion}
\knowledge {up-to contextual closure}{notion}
\knowledge {up-to convex closure}{notion}
\knowledge {\VVlll }{mathsymb}
\knowledge {constructively completely distributive}[cccd-lattice]{notion}     
\knowledge {constructively-convergent}{notion}
\knowledge {canonical Wasserstein lifting}{notion}
}

\short{\input{macros}}

%%%%%%%%% MACROS FOR FULL ARXIV VERSION

%%%%% Categories %%%%%

%NEEDED FOR THE LIPiCS style
\theoremstyle{plain}

\newtheorem{prop}{Proposition}
\newtheorem{lem}[prop]{Lemma}

\newtheorem{defi}[prop]{Definition}
\newtheorem{ex}[prop]{Example}

\newrobustcmd\VV{{\kl[\VV]{\mathcal{V}}}}
\newrobustcmd{\Vpred}{\kl[\Vpred]{\mathcal{V}\textrm{-}\mathsf{Pred}}}
\newrobustcmd{\Pred}{\mathsf{Pred}}
\newrobustcmd{\Vrel}{\kl[\Vrel]{\mathcal{V}\textrm{-}\mathsf{Rel}}}
\newrobustcmd{\Rel}{\mathsf{Rel}}

\newcommand{\Vcat}{\kl[\Vcat]{\mathcal{V}\textrm{-}\mathsf{Cat}}}
\newcommand{\Vcatsym}{\kl[\Vcatsym]{\mathcal{V}\textrm{-}\mathsf{Cat}_\mathsf{sym}}}

\newcommand{\B}{\mathcal B}

\newcommand{\PP}{\mathcal P}
\newcommand{\Ccal}{\mathcal C}

\renewcommand{\P}{\ensuremath{\mathcal E}}

\newcommand{\Set}{\kl[\Set]{\mathsf{Set}}}

%%%%% Lifting %%%%%

%%%%% Closures %%%%%
\newrobustcmd\cref{\kl[up-to reflexivity]{\mathit{ref}}}
\newrobustcmd\csym{\kl[up-to transitivity]{\mathit{sym}}}
\newrobustcmd\ctrn{\kl[up-to symmetry]{\mathit{trn}}}
\newrobustcmd\cbhv{\kl[behavioural closure]{\mathit{bhv}}}
\newrobustcmd\cmtr{\kl[up-to metric closure]{\mathit{mtr}}}
\newrobustcmd\cctx{\kl[up-to contextual closure]{\mathit{ctx}}}
\newrobustcmd\ccvx{\kl[up-to convex closure]{\mathit{cvx}}}

%%%%% Operations %%%%%
\newcommand{\tensor}{\otimes}
\newcommand{\righthom}[2]{[{#1},{#2}]}
\newcommand{\lefthom}[2]{\llbracket{#1},{#2}\rrbracket}

%%%%% misc %%%%%
\newcommand{\ev}{\mathit{ev}}

\newcommand{\To}{\Rightarrow}

\makeatletter
\newcommand{\dotminus}{\mathbin{\text{\@dotminus}}}
\newcommand{\op}{\mathit{op}}
\newcommand{\pullbackcorner}[1][dr]{\save*!/#1-1.5pc/#1:(-1.2,1.2)@^{|-}\restore}

\newcommand{\@dotminus}{%
  \ooalign{\hidewidth\raise1ex\hbox{.}\hidewidth\cr$\m@th-$\cr}%
}
\makeatother

\newcommand{\pair}[2]{\langle{#1},{#2}\rangle}
%%%%% Theorem envirnments %%%%%

 % add [section] for numbering within sections
\newcommand{\mynewtheorem}[2]{
  \newaliascnt{#1}{dummy}
  \newtheorem{#1}[#1]{#2}
  \aliascntresetthe{#1}
  \expandafter\def\csname #1autorefname\endcsname{#2}
}

 \newcommand\restr[2]{{% we make the whole thing an ordinary symbol
  \left.\kern-\nulldelimiterspace % automatically resize the bar with \right
  #1 % the function
  \vphantom{\big|} % pretend it's a little taller at normal size
  \right|_{#2} % this is the delimiter
  }}

\newcommand{\id}{\mathit{id}}

%%%% MISC

\newrobustcmd\RelQ{\Rel_Q}
\newrobustcmd\bard{\kl[\bard]{\bar{d}}}
\newrobustcmd\revOrd{\mathrel{\kl[\revOrd]{\preceq}}}
\newrobustcmd\Pow{\ensuremath{\mathcal{P}}}
\newrobustcmd\Distr{\ensuremath{\mathcal{D}}}

\newrobustcmd\sdwDist{\kl[\sdwDist]{d_{\mathtt{sdw}}}}
\newrobustcmd\SigmaBB{\mathbb{\Sigma}}

%%%% KL-FIBRATION NOTATIONS

\newrobustcmd\liftF{\kl[\liftF]{\widehat{F}}}
\newrobustcmd\liftT{\kl[\liftF]{\widehat{T}}}
\newrobustcmd\liftG{\kl[\liftF]{\widehat{G}}}
\newrobustcmd\liftFT{\kl[\liftF]{\widehat{FT}}}
\newrobustcmd\liftTF{\kl[\liftF]{\widehat{TF}}}
\newrobustcmd\liftFcT{\kl[\liftF]{\widehat{F\circ T}}}
\newrobustcmd\liftTcF{\kl[\liftF]{\widehat{T\circ F}}}

\newrobustcmd\liftFrest[1]{\kl[restriction of lifting]{\widehat{F}_{#1}}}
\newrobustcmd\liftGrest[1]{\kl[restriction of lifting]{\widehat{G}_{#1}}}

\newrobustcmd\cartLiftfR{\kl[\cartLiftfR]{\widetilde{f_R}}}
\newrobustcmd\cartLiftf{\kl[\cartLiftfR]{\widetilde{f}}}
\newrobustcmd\cartLift[2]{\kl[\cartLiftfR]{\widetilde{#1}_{#2}}}
\newrobustcmd\reindexf{\kl[\reindexf]{f^*}}
\newrobustcmd\reindexg{\kl[\reindexf]{g^*}}
\newrobustcmd\reindexfg{\kl[\reindexf]{(fg)^*}}
\newrobustcmd\reindex[1]{\kl[\reindexf]{({#1})^*}}
\newrobustcmd\reindexnobrk[1]{\kl[\reindexf]{{#1}^*}}
\newrobustcmd\fibre[1]{\kl[fibre]{\P_{#1}}}
\newrobustcmd\fibrePrim[1]{\kl[fibre]{\P'_{#1}}}
\newrobustcmd\fibrePred[1]{\kl[fibrePred]{\mathcal{V}\textrm{-}\mathsf{Pred}_{#1}}}
\newrobustcmd\fibreRel[1]{\kl[fibreRel]{\ensuremath{\mathcal{V}\textrm{-}\mathsf{Rel}_{#1}}}}

\newrobustcmd\directImage[1]{\kl[direct image]{\Sigma_{#1}}}

\newrobustcmd\chbaseFunct{\kl[\chbaseFunct]{\Delta}}
\newrobustcmd\isoPR{{\kl[\isoPR]{\iota}}}
\newrobustcmd\isoFibre[1]{{\kl[isoFibre]{\iota_{#1}}}}

\newrobustcmd\lambdaChBase{\kl[\lambdaChBase]{\lambda}}
\newrobustcmd\lambdaChBaseF[1]{\kl[lambdaChBaseF]{\lambda^{F}_{#1}}}
\newrobustcmd\lambdaChBaseG[1]{\kl[lambdaChBaseF]{\lambda^{G}_{#1}}}
\newrobustcmd\lambdaChBaseT[1]{\kl[lambdaChBaseF]{\lambda^{T}_{#1}}}
\newrobustcmd\lambdaChBaseFT[1]{\kl[lambdaChBaseF]{\lambda^{FT}_{#1}}}
\newrobustcmd\lambdaChBaseTF[1]{\kl[lambdaChBaseF]{\lambda^{TF}_{#1}}}

\newrobustcmd\barF{\kl[\barF]{\overline{F}}}
\newrobustcmd\barG{\kl[\barF]{\overline{G}}}
\newrobustcmd\barT{\kl[\barF]{\overline{T}}}
\newrobustcmd\barTF{\kl[\barTF]{\overline{T\circ F}}}
\newrobustcmd\barFT{\kl[\barFT]{\overline{F\circ T}}}
\newrobustcmd\barFrest[1]{\kl[restriction of lifting]{\overline{F}_{#1}}}
\newrobustcmd\barGrest[1]{\kl[restriction of lifting]{\overline{G}_{#1}}}
\newrobustcmd\barFcan{\kl[\barFcan]{\overline{F}_{\mathsf{can}}}}
\newrobustcmd\barTcan{\kl[\barFcan]{\overline{T}_{\mathsf{can}}}}

\newrobustcmd\sym[1]{\kl[sym]{\mathsf{sym}_{#1}}}
\newrobustcmd\constPred[1]{\kl[constant predicate]{\kappa_{#1}}}
\newrobustcmd\diago[1]{{\kl[diagonal function]{\delta_{#1}}}}

%%% canonical evaluation maps

\newrobustcmd\true{\kl[\true]{\mathsf{true}}}
\newrobustcmd\truer{\kl[\truer]{\mathsf{true}_{r}}}
\newrobustcmd\trues{\kl[\truer]{\mathsf{true}_{s}}}
\newrobustcmd\truers{\kl[\truer]{\mathsf{true}_{r\otimes s}}}
\newrobustcmd\trueone{\kl[\truer]{\mathsf{true}_{1}}}
\newrobustcmd\liftFcan{\kl[\liftFcan]{\widehat{F}_{\mathsf{can}}}}
\newrobustcmd\liftTcan{\kl[\liftFcan]{\widehat{T}_{\mathsf{can}}}}
\newrobustcmd\liftFcanTwo{\kl[\liftFcanTwo]{\widehat{F}_{\mathsf{can}}}}
\newrobustcmd\uparr{\kl[\uparr]{\uparrow\!r}}
\newrobustcmd\upars{\kl[\uparr]{\uparrow\!s}}
\newrobustcmd\uparu{\kl[\uparr]{\uparrow\!u}}
\newrobustcmd\uparrs{\kl[\uparr]{\uparrow\!(r\otimes s)}}
\newrobustcmd\upar[1]{\kl[\uparr]{\uparrow\!{#1}}}
\newrobustcmd{\evcan}{\kl[\evcan]{\mathit{ev}_{\mathsf{can}}}}
\newrobustcmd{\evTcan}{\kl[\evcan]{\mathit{ev}^{T}_{\mathsf{can}}}}
\newrobustcmd{\evPcan}{\kl[\evcan]{\mathit{ev}^{\Pow}_{\mathsf{can}}}}
\newrobustcmd{\evFcan}{\kl[\evcan]{\mathit{ev}^{F}_{\mathsf{can}}}}
\newrobustcmd\bitensor[2]{{#1}\langle\!\otimes\!\rangle{#2}}
\newrobustcmd\orderFV{\kl[\orderFV]{\ll}}
\newrobustcmd\restrictionle{\le_{\uparr}}

\newrobustcmd\onebb{\mathbb{1}}
\newrobustcmd\liftDistr{\kl[\liftDistr]{\widehat{\zeta}}}
\newrobustcmd\barDistr{\kl[\barDistr]{\overline{\zeta}}}
\newrobustcmd\rotle{\rotatebox[origin=c]{45}{$\ge$}}
\newrobustcmd\rotlen{\rotatebox[origin=c]{90}{$\ge$}}
\newrobustcmd\objTruth{\kl[\objTruth]{\Omega}}
\newrobustcmd\distrLaw{\kl[\distrLaw]{\zeta}}
\newrobustcmd\distrLawNFA{\kl[\distrLawNFA]{\zeta}}
\newrobustcmd\evNFAF{\kl[\evNFAF]{\mathit{ev}_{F}}}
\newrobustcmd\evNFAT{\kl[\evNFAT]{\mathit{ev}_{T}}}

\newrobustcmd\leR{{\le}_{\mathbb{R}}}
\newrobustcmd\geR{\ge_{\mathbb{R}}}

\newrobustcmd\diagRel[1]{\kl[diagonal relation]{\mathit{diag}_{#1}}}
\newrobustcmd\comp{\kl[composition]{\boldsymbol{\ \cdot\ }}}

\newrobustcmd\VVlll{\mathrel{\kl[\VVlll]{\lll}}}

\begin{document}
\maketitle

\begin{abstract}
  Up-to techniques are a well-known method for enhancing coinductive
  proofs of behavioural equivalences. We introduce up-to techniques
  for behavioural metrics between systems modelled as coalgebras and
  we provide abstract results to prove their soundness in a
  compositional way.

  In order to obtain a general framework, we need a systematic way to
  lift functors: we show that the Wasserstein lifting of a functor,
  introduced in a previous work, corresponds to a change of base in a
  fibrational sense.  This observation enables us to reuse existing
  results about soundness of up-to techniques in a fibrational
  setting.
  We focus on the fibrations of predicates and relations valued in a
  quantale, for which pseudo-metric spaces are an example. To
  illustrate our approach we provide an example on distances between
  regular languages.
\end{abstract}

\section{Introduction}
\label{sec:introduction}

Checking whether two systems have an equivalent (or similar) behaviour
is a crucial problem in computer science. In concurrency theory, one
standard methodology for establishing behavioural equivalence of two
 systems is constructing a bisimulation relation between
them. When the systems display a quantitative behaviour, the notion of
behavioural equivalence is replaced with the more robust notion of
behavioural metric
\cite{bw:behavioural-pseudometric,afs:linear-branching-metrics-quant,DBLP:journals/tcs/DesharnaisGJP04}.

Due to the sheer complexity of state-based systems, computing their
behavioural equivalences and metrics can be very costly, therefore
optimization techniques---the so called \emph{up-to techniques}---have
been developed to render these computations more efficient. These
techniques found applications in various domains such as checking
algorithms~\cite{bp:checking-nfa-equiv,DBLP:conf/tacas/Bonchi0K17},
abstract interpretation~\cite{AbstractInt} and proof
assistants~\cite{danielsson2017up}.  In the qualitative setting and in
particular in concurrency, the theory of up-to techniques for
bisimulations and various other coinductive predicates has been
thoroughly
studied~\cite{ms:up-to,ps:enhancements-coind-proofs,DBLP:conf/popl/HurNDV13}. On
the other hand, in the quantitative setting, so far, only
\cite{cgv:up-to-bisim-metrics} has studied up-to techniques for
behavioural metrics. However, the notion of up-to techniques therein
and the accompanying theory of soundness are specific for
probabilistic automata and are not instances of the standard lattice
theoretic framework, which we will briefly recall next.

Suppose we want to verify whether two states in a system
behave in the same way, (e.g. whether two states of an NFA accept the
same language). The starting observation is that the relation of
interest (e.g. behavioural equivalence or language equivalence) can be
expressed as the greatest fixed point $\nu b$ of a monotone function
$b\colon\RelQ\to\RelQ$ on the complete lattice $\RelQ$ of relations on
the state space $Q$ of the system.
Hence, in order to prove that two
states $x$ and $y$ are behaviourally equivalent, i.e., 
$(x,y)\in\nu b$, it suffices to find a witness relation $r$ which on
one hand is a post-fixpoint of $b$, that is, $r\subseteq b(r)$ and on
the other hand contains the pair $(x,y)$. This is simply the
coinduction proof principle. 
However, exhibiting such a witness relation $r$ can be sometimes
computationally expensive. In many situations this computation can be
significantly optimized, if instead of computing a post-fixpoint of
$b$ one exhibits a relaxed invariant, that is a relation $r$ such that
$r\subseteq b(f(r))$ for a suitable function $f$. \AP The function $f$
is called a \intro{sound} up-to technique when the proof principle
\[
  \frac{(x,y)\in r \quad r\subseteq b(f(r))}{(x,y)\in \nu b}
\]
is valid.  Establishing the soundness of up-to techniques on a
case-by-case basis can be a tedious and sometimes delicate problem,
see e.g.~\cite{m:cac}. 
For this reason, several
works~\cite{San98MFCS,DBLP:conf/aplas/Pous07,ps:enhancements-coind-proofs,DBLP:conf/popl/HurNDV13,parrow2016largest,DBLP:conf/lics/Pous16}
have established a lattice-theoretic framework for proving soundness
results in a modular fashion. The key notion is compatibility: for
arbitrary monotone maps $b$ and $f$ on a complete lattice $(C,\leq)$,
the up-to technique $f$ is \intro{$b$-compatible} iff
$f\circ b\leq b\circ f$. Compatible techniques are sound and, most
importantly, can be combined in several useful ways.

In this paper we develop a generic theory of up-to techniques for
behavioural metrics applicable to different kinds of systems and
metrics, which reuses established methodology.
To achieve this we exploit the theory developed
in~\cite{bppr:up-to-fibration} by modelling systems as
\emph{coalgebras} \cite{rutten2000universal,jacobs2012introduction}
and behavioural metrics as coinductive predicates in a
\emph{fibration} \cite{hj:structural-induction-coinduction}. In order
to provide general soundness results, we need a principled way to lift
functors from $\intro\Set$ to metric spaces, a problem that has been studied in
\cite{h:closed-objects} and
\cite{bbkk:behavioral-metrics-functor-lifting}.
Our key observation is that these liftings arise from a change-of-base
situation between $\Vrel$ and $\Vpred$, namely the fibrations of
relations, respectively predicates, valued over a quantale $\VV$ (see
Section~\ref{sec:lifting-vrel-vcat} and~\ref{sec:wass-lifting}).

In Section~\ref{sec:up-to-techniques} we provide sufficient conditions
ensuring the compatibility of basic quantitative up-to techniques, as
well as proper ways to compose them. Interestingly enough, the
conditions ensuring compatibility of the quantitative analogue of
up-to reflexivity and up-to transitivity are subsumed by those used
in~\cite{h:closed-objects} to extend monads to a bicategory of
many-valued relations and generalize those
in~\cite{bbkk:behavioral-metrics-functor-lifting} (see the discussion
after Theorem~\ref{thm:restricting-Wasserstein}).

When the state space of a system is equipped with an algebraic
structure, e.g. in process algebras, one can usually exploit this
structure by reasoning up-to context. Assuming that the system forms a
\emph{bialgebra}
\cite{tp:math-operational-sem,DBLP:journals/tcs/Klin11}, i.e., that the
algebraic structure distributes over the coalgebraic behaviour as in
GSOS specifications, we give sufficient conditions ensuring the
compatibility of the quantitative version of contextual closure
(Theorem \ref{thm:Lifting-Distr-Law-Wass}).

In the qualitative setting, the sufficient conditions for
compatibility are automatically met when taking as lifting the
\emph{canonical} relational one (see~\cite{bppr:up-to-fibration}).  We
show that the situation is similar in the quantitative setting for a
certain notion of \emph{quantitative canonical} lifting. In
particular, up-to context is compatible for the canonical lifting
under very mild assumptions (Theorem \ref{thm:summary}). As an
immediate corollary we have that, in a bialgebra, syntactic contexts
are \emph{non-expansive} with respect to the behavioural metric induced by the
canonical lifting. This property and weaker variants of it (such as
non-extensiveness or uniform continuity), considered to be the
quantitative analogue of behavioural equivalence being a congruence,
have recently received considerable attention (see
e.g.~\cite{DBLP:journals/tcs/DesharnaisGJP04,bacci2013computing,tini2017compositional}).

To fix the intuition, Section~\ref{sec:motivating-example} provides a
motivating example, formally treated in Section~\ref{sec:examples}. We
conclude with a comparison to related work and a discussion of open
problems in Section~\ref{sec:conclusion}.

\full{All proofs and additional material are provided in the
  appendix.}\short{All proofs and additional material are provided in
  the full version of this paper
  \cite{bkp:up-to-behavioural-metrics-fibrations-arxiv}.}

\section{Motivating example: distances between regular languages}
\label{sec:motivating-example}

\AP Computing various distances (such as the edit-distance or Cantor
metric) between strings, and more generally between regular languages
or string distributions, has found various practical applications in
various areas such as speech and handwriting recognition or
computational biology. In this section we
focus on a simple distance between regular languages, which we
will call \intro{shortest-distinguishing-word-distance} and is defined
as $\intro\sdwDist(L,K)=c^{|w|}$ -- where $w$ is the shortest word
which belongs to exactly one of the languages $L,K$ and $c$ is a
constant such that $0<c<1$. 

As a running example, which will be formally explained in
Section~\ref{sec:examples}, we consider the non-deterministic finite
automaton in Figure~\ref{fig:ex-automaton} and the languages accepted
by the states $x_0$, respectively $y_0$. We can similarly define a
distance on the states of an automaton as the aforementioned distance
between the languages accepted by the two states. The inequality
\begin{equation}
  \label{eq:ineq}
  \sdwDist(x_0,y_0)\le c^n \qquad \mbox{(even $\sdwDist(x_0,y_0) =
    c^n$)} 
\end{equation}
holds in this example since no word of length smaller than
$n$ is accepted by either state. Note that computing this distance is
$\textsf{PSPACE}$-hard since the language equivalence problem for
non-deterministic automata can be reduced to it.

\begin{figure}
  \begin{center}
    \tikzset{every state/.style={minimum size=3em}}
    \tikzset{elliptic state/.style={draw,ellipse}}
    \scalebox{0.85}{\begin{tikzpicture}[x=2cm,y=-1cm,double distance=4pt]
      \node[elliptic state] (x0) at (0,1) {$x_0$} ;
      \node[elliptic state] (x1) at (1,1) {$x_1$} ;
      \node[elliptic state] (x2) at (2,1) {$x_2$} ;
      \node[elliptic state] (x3) at (3,1) {$x_{n-1}$} ;
      \node[elliptic state,accepting] (x4) at (4,1) {$x_n$} ;
      \node[elliptic state] (y0) at (0,2) {$y_0$} ;
      \node[elliptic state] (y1) at (1,2) {$y_1$} ;
      \node[elliptic state] (y2) at (2,2) {$y_2$} ;
      \node[elliptic state] (y3) at (3,2) {$y_{n-1}$} ;
      \node[elliptic state,accepting] (y4) at (4,2) {$y_n$} ;
      \begin{scope}[->]
        \path[shorten <=1pt] 
        (x0) edge[loop left] node[arlab] {$a,b$} (x0) 
        edge node[arlab,above] {$a$} (x1) ;
        \path[shorten <=1pt] 
        (x1) edge node[arlab,above] {$a,b$} (x2) ;
        \path[shorten <=1pt,dotted] 
        (x2) edge (x3) ;
        \path[shorten <=1pt] 
        (x3) edge node[arlab,above] {$a,b$} (x4) ;
        \path[shorten <=1pt] 
        (y0) edge[loop left] node[arlab] {$a,b$} (y0) 
        edge node[arlab,above] {$b$} (y1) ;
        \path[shorten <=1pt] 
        (y1) edge node[arlab,above] {$a,b$} (y2) ;
        \path[shorten <=1pt, dotted] 
        (y2) edge (y3) ;
        \path[shorten <=1pt] 
        (y3) edge node[arlab,above] {$a,b$} (y4) ;
      \end{scope}
    \end{tikzpicture}}
  \end{center}	

  \caption{Example automaton}
  \label{fig:ex-automaton}
\end{figure}

\AP One way to show this is to determinize the automaton in
Figure~\ref{fig:ex-automaton} and to use the fact that for
deterministic automata the \kl{shortest-distinguishing-word-distance}
can be expressed as the greatest fixpoint for a monotone
function. Indeed, for a finite deterministic automaton
$(Q,(\delta_a\colon Q\to Q)_{a\in A}, F\subseteq Q)$ over a finite
alphabet $A$, we have that $\sdwDist\colon Q\times Q\to[0,1]$ is the
greatest fixpoint of a function $b$ defined on the complete lattice
$[0,1]^{Q\times Q}$ of functions ordered with the \emph{reversed}
pointwise order $\intro*\revOrd$ and given by
\begin{eqnarray}
  \label{eq:def-of-b-NFA}
  b(d)(q_1,q_2)\!\!\!\!\! & = \!\!\!\!\!& \left\{
    \begin{array}{ll}
      1,\ \mbox{if only one of $q_1,q_2$ is in $F$}\\%\mbox{if $|\{q_1,q_2\}\cap F|=1$}\\
      \max\limits_{a\in A} c\cdot \{ d(\delta_a(q_1),\delta_a(q_2)) \},\ 
      \mbox{otherwise}
    \end{array}
    \right.
\end{eqnarray}
Notice that we use the reversed order on $[0,1]$, for technical
reasons (see Section~\ref{sec:lifting-vrel-vcat}).

\AP In order to prove~\eqref{eq:ineq} we can define a
witness distance $\intro\bard$ on the states of the determinized
automaton such that $\bard(\{x_0\},\{y_0\})\le c^n$ and which is a
post-fixpoint for $b$, i.e., $\bard\revOrd b(\bard)$. Notice that this
would entail $\bard\revOrd \sdwDist$ and hence
$\sdwDist(\{x_0\},\{y_0\})\le \bard(\{x_0\},\{y_0\})\le c^n$.

This approach is problematic since the determinization of the
automaton is of exponential size, %(e.g. for $|w| = n$,
so we have to define $\bard$ for exponentially many pairs of sets of
states.  In order to mitigate the state space explosion we will use an
up-to technique, which, just as up-to congruence
in~\cite{bp:checking-nfa-equiv}, exploits the join-semilattice
structure of the state set $\Pow Q$ of the determinization of an NFA
with state set $Q$. The crucial observation is the fact that given the
states $Q_1,Q_2,Q'_1,Q'_2\in\Pow Q$ in the determinization of an NFA,
the following inference rule holds
\[
  \frac{\sdwDist(Q_1,Q_2)\le r\qquad \sdwDist(Q'_1,Q'_2)\le
    r}{\sdwDist(Q_1\cup Q'_1,Q_2\cup Q'_2)\le r}
\]
Based on this, we can define a monotone function $f$ on
$[0,1]^{\Pow Q\times\Pow Q}$ that closes a function $d$ according to
such proof rules, producing $f(d)$ such that $d\revOrd f(d)$ (the
formal definition of $f$ is given in Section~\ref{sec:examples}).
The general theory developed in this paper allows us to show in
Section~\ref{sec:examples} that $f$ is a sound up-to technique, i.e.,
it is sufficient to prove $\bard\revOrd b(f(\bard)) $ in order to
establish $\bard \revOrd \sdwDist$.

Using this technique it suffices to consider a quadratic number of
pairs of sets of states in the example. In particular we define a
function $\bard\colon \Pow Q \times \Pow Q \to [0,1]$ as follows:
\[\bard(\{x_i\},\{y_j\}) = c^{n-\max\{i,j\}} \] and
$\bard(X_1,X_2) = 1$ for all other values. Note that this function is not a
metric but rather, what we will call in Section \ref{sec:lifting-vrel-vcat}, a relation valued in $[0,1]$.

It holds that $\bard(\{x_0\},\{y_0\}) = c^n$. It remains to show that
$ \bard \revOrd b(f(\bard))$.
For this, it suffices to prove that
\[
  b(f(\bard))(\{x_i\},\{y_j\}) \le \bard(\{x_i\},\{y_j\})\,.
\]
For instance, when $i=j=0$ we compute the sets of $a$-successors,
which are $\{x_0,x_1\}$, $\{y_0\}$. We have that
$\bard(\{x_0\},\{y_0\}) = c^n \le c^{n-1}$,
$\bard(\{x_0\},\{y_1\}) = c^{n-1}$ and using the up-to proof rule
introduced above we obtain that
$f(\bard)(\{x_0,x_1\},\{y_0\}) \le c^{n-1}$. The same holds for the
sets of $b$-successors and since $x_0$ and $y_0$ are both non-final we
infer
$b(f(\bard))(\{x_0\},\{y_0\}) \le c\cdot c^{n-1} = c^n =
\bard(\{x_0\},\{y_0\})$. The remaining cases (when $i\neq 0\neq j$)
are analogous.

Our aim is to introduce such proof techniques for behavioural metrics,
to make this kind of reasoning precise, not only for this specific
example, but for coalgebras in general. Furthermore, we will not limit
ourselves to metrics and distances, but we will consider more general
relations valued in arbitrary quantales, of which the interval $[0,1]$
is an example.

\section{Preliminaries}
\label{sec:preliminaries}

We recall here formal definitions for notions
such as \kl{coalgebras}, \kl{bialgebras} or \kl{fibrations}.

\begin{defi}
  \AP A \intro{coalgebra for a functor} $F\colon\Ccal\to\Ccal$, or an
  \kl{$F$-coalgebra} is a morphism $\gamma\colon X\to FX$ for some
  object $X$ of $\Ccal$, referred to as the \intro{carrier} of the
  coalgebra $\gamma$. A morphism between two coalgebras
  $\gamma\colon X\to FX$ and $\xi\colon Y\to FY$ is a morphism
  $f\colon X\to Y$ such that $\xi\circ f=Ff\circ\gamma$.
  \intro{Algebras for the functor} $F$, or \kl{$F$-algebras}, are
  defined dually as morphisms of the form $\alpha\colon FX\to X$.
\end{defi}

\begin{defi}
  \label{def:bialgebra}
  \AP Consider two functors $F,T$ and a natural transformation
  $\intro\distrLaw\colon TF\Rightarrow FT$. A \intro{bialgebra} for
  $\distrLaw$ is a tuple $(X,\alpha,\gamma)$ such that
  $\alpha\colon TX\to X$ is a \kl{$T$-algebra}, $\gamma\colon X\to FX$
  is
\noindent  \begin{minipage}{0.35\linewidth}
   $
    \xymatrix@R=1.5em@C=1.5em{
      TX \ar[r]^\alpha \ar[d]^{T\gamma} & X \ar[r]^\gamma & FX \\
      TFX \ar[rr]^{\distrLaw_X} & & FTX \ar[u]^{F\alpha} }
    $
  \end{minipage}
  \begin{minipage}{0.65\linewidth}
    an \kl{$F$-coalgebra} so that the diagram on the left commutes.
    We call $\distrLaw$ the \intro{distributive law} of the bialgebra
    $(X,\alpha,\gamma)$, even when $T$ is not a monad.
  \end{minipage}
  \end{defi}

  \begin{ex}
  \label{ex:NFA-as-bialg}

  The determinization of an NFA can be seen as a \kl{bialgebra} with
  $X=\Pow Q$, the algebra $\mu_Q\colon \Pow\Pow Q\to \Pow Q$ given by
  the multiplication of the powerset monad, a coalgebra for the
  functor $F(X)=2\times X^A$, and a \kl{distributive law}
  $\intro\distrLawNFA\colon \Pow F\to F\Pow$ defined for
  $M \subseteq 2\times X^A$ by
  $\distrLawNFA_X(M) = (\bigvee_{(b,f)\in M} b, [a\mapsto \{f(a)\mid
  (b,f)\in M\}])$. See
  \cite{DBLP:journals/corr/abs-1302-1046,DBLP:journals/jcss/Jacobs0S15}
  for more details.
\end{ex}

We now introduce the notions of fibration and bifibration.

\begin{defi}
  \label{def:fibration}
  \AP A functor $p\colon\P\to\B$ is called a \intro{fibration} when
  for every morphism $f\colon X\to Y$ in $\B$ and every $R$ in $\P$
  with $p(R)=Y$ there exists a map
  $\intro\cartLiftfR\colon \intro\reindexf(R)\to R$ such that
  $p(\cartLiftfR)=f$,
  \begin{minipage}{0.35\linewidth}
  $
    \begin{tikzcd}[column sep=1.5em, row sep=0.7em]
      Q\ar[dr,dotted,swap,"{\exists ! v}"]\ar[rrd,bend left,"{\forall u}" description] & &  \\
      & f^*(R)\ar[r,swap,"{{\cartLiftfR}}"] & R \\
      Z\ar[rd,swap,"{g}"]\ar[rrd,bend left,"{fg}" description] & &\\
      &X\ar[r,swap,"{f}"] & Y\\
    \end{tikzcd}
  $
\end{minipage}~~
\begin{minipage}[t]{0.65\linewidth}
  \vspace{-4em}
  satisfying the following universal property:\\
  For all maps
  $g:Z\to X$ in $\B$ and $u\colon Q\to R$ in $\P$ sitting above $fg$
  (i.e., $p(u)=fg$) there is a unique map $v\colon Q\to \reindexf(R)$
  such that $u=\cartLiftfR v$ and $p(v)=g$.
  
  \AP For $X$ in $\B$ we denote by $\fibre{X}$ the \intro{fibre above
$X$}, i.e., the subcategory of $\P$ with objects mapped by $p$ to $X$
and arrows sitting above the identity on $X$.
\end{minipage}
\end{defi}

\AP A map $\cartLiftf$ as above is called a \intro{Cartesian lifting}
of $f$ and is unique up to isomorphism.  If we make a choice of
Cartesian liftings, the association $R\mapsto \reindexf(R)$ gives rise
to the so-called \intro{reindexing functor}
$\reindexf\colon \fibre{Y}\to\fibre{X}$. In what follows we will only
consider \intro{split fibrations}, that is, the \kl{Cartesian
  liftings} are chosen such that we have $\reindexfg=\reindexg \reindexf$.

\AP A functor $p\colon \P\to\B$ is called a \intro{bifibration} if both
$p\colon \P\to\B$ and $p^\op\colon \P^\op\to\B^\op$ are
\kl{fibrations}.
Interestingly, a \kl{fibration} 
is a \kl{bifibration} if and
only if each \kl{reindexing functor}
$\reindexf\colon \fibre{Y}\to\fibre{X}$ has a left adjoint
$\directImage{f}\dashv \reindexf$,
see~\cite[Lemma~9.1.2]{Jacobs:fib}. We will call the functors
$\directImage{f}$ \intro{direct images along $f$}.

Two important examples of \kl{bifibrations} are those of relations
over sets, $p\colon \Rel\to\Set$, and of predicates over sets,
$p\colon \Pred\to\Set$, which played a crucial role
in~\cite{bppr:up-to-fibration}.  We do not recall their exact
definitions here, as they arise as instances of the more general
bifibrations of quantale-valued relations and predicates described in
detail in the next section. 
\noindent
\begin{minipage}{0.18\linewidth}
{  \begin{tikzcd}
    \P\ar[d,swap,"p"]\ar[r,"\liftF"] & \P'\ar[d,"p'"]\\
    \B\ar[r,"F"] & \B'
  \end{tikzcd}}
\end{minipage}~~
\begin{minipage}{0.80\linewidth}
  \AP Given \kl{fibrations} $p\colon \P\to\B$ and $p'\colon \P'\to\B'$
  and a functor on the base categories $F\colon \B\to\B'$, we call
  $\intro\liftF\colon \P\to\P'$ a \intro{lifting} of $F$ when
  $p'\liftF=Fp$.  \AP Notice that a lifting $\liftF$ restricts to a
  functor between the \kl{fibres} \phantomintro{restriction of
    lifting} $\liftFrest{X}\colon \fibre{X}\to\fibrePrim{FX}$.  We omit the
subscript $X$ when it is clear from the context.
\end{minipage}

  Consider an arbitrary lifting $\liftF$ of $F$ and a morphism
  $f\colon X\to Y$ in $\B$. For any $R\in\fibre{Y}$  the
  maps
  $\cartLift{Ff}{\liftF{R}}\colon\reindex{Ff}(\liftF R)\to\liftF R$
  and $\liftF(\cartLiftfR)\colon\liftF(\reindexf R)\to\liftF R$ sit
  above $Ff$. Using the universal property in
  Definition~\ref{def:fibration}, we obtain a canonical morphism
\begin{equation}
  \label{eq:canonical-ineq}
\liftF\circ \reindexf(R)\to\reindex{Ff}\circ\liftF (R)\,.  
\end{equation}
A lifting $\liftF$ is called a \intro{fibred lifting} when the natural
transformation in~\eqref{eq:canonical-ineq} is an isomorphism.

\section{Moving towards a quantitative setting}
\label{sec:lifting-vrel-vcat}

We start by introducing two \kl{fibrations} which are the foundations
for our quantitative reasoning: predicates and relations valued in a
\kl{quantale}.

\begin{defi}
  \label{def:quantale}
  \AP A \intro{quantale} $\intro\VV$ is a complete lattice equipped with an
  associative operation $\tensor:\VV\times\VV\to\VV$ which is
  distributive on both sides over arbitrary joins $\bigvee$.
\end{defi}

This implies that for every $y\in\VV$ the functor $-\tensor y$ has a
right adjoint $[y,-]$. Similarly, for every $x\in \VV$, the functor
$x\tensor -$ has a right adjoint, denoted by $\lefthom{x}{-}$.  Thus,
for every $x,y,z\in\VV$, we have:
$x\tensor y\le z \iff x\le\righthom{y}{z} \iff y\le\lefthom{x}{z}$.

If $\tensor$ has an \kl{identity element} or \intro{unit} $1$ for
$\tensor$ the quantale is called \intro{unital}.  If
$x\tensor y=y\tensor x$ for every $x,y\in\VV$ the quantale is called
\intro{commutative} and we have $\righthom{x}{-}=\lefthom{x}{-}$.
Hereafter, we only work with \kl{unital}, \kl{commutative}
\kl{quantales}.
\begin{ex}
  \label{ex:quantale}
  The Boolean algebra $2$ with $\tensor=\wedge$ is a \kl{unital} and
  \kl{commutative} \kl{quantale}: the unit is $1$ and
  $\righthom{y}{z}=y\to z$.
  The complete lattice $[0,\infty]$ ordered by the reversed
  order\footnote{To avoid confusion we  use $\lor,\land$
    in the quantale and $\inf,\sup$ in the reals.} of the reals,
  i.e., $\le=\geR$ and with $\tensor = +$ is a unital
  commutative \kl{quantale}: the unit is $0$ and for every
  $y,z\in[0,\infty]$ we have $\righthom{y}{z}=z\dotminus y$ (truncated
  subtraction). Also $[0,1]$ is a unital quantale where
  $r\tensor s=\min(r+s,1)$ (truncated addition).
\end{ex}

\begin{defi}
  \label{def:vrel-vpred}
  \AP Given a set $X$ and a \kl{quantale} $\VV$, a \intro{$\VV$-valued
    predicate} on $X$ is a map $p:X\to\VV$.  A \intro{$\VV$-valued
    relation} on $X$ is a map $r:X\times X\to\VV$.
\end{defi}

Given two \kl{$\VV$-valued predicates} $p,q:X\to\VV$, we say that
$p\le q \iff \forall x\in X.\ p(x)\le q(x)$.

\begin{defi}
  \label{def:vpred-cat}
  A \intro{morphism between $\VV$-valued predicates} $p:X\to \VV$ and
  $q:Y\to\VV$ is a map $f:X\to Y$ such that $p\le q\circ f$. We
  consider the category $\intro\Vpred$ whose objects are
  \kl{$\VV$-valued predicates} and arrows are as above.
\end{defi}

\begin{defi}
  \label{def:vrel-cat}
  \AP A \intro{morphism between $\VV$-valued relations}
  $r:X\times X\to \VV$ and $q:Y\times Y\to\VV$ is a map $f:X\to Y$
  such that $p\le q\circ (f\times f)$. We consider the category
  $\intro\Vrel$ whose objects are $\VV$-valued relations and arrows
  are as above.
\end{defi}

\pseudopar{The bifibration of $\VV$-valued predicates.}  \AP The
forgetful functor $\Vpred\to\Set$ mapping a predicate $p:X\to\VV$ to
$X$ is a bifibration. \AP The \kl{fibre} \phantomintro{fibrePred}
$\fibrePred{X}$ is the lattice of \kl{$\VV$-valued predicates} on $X$.
For $f:X\to Y$
in $\Set$ the \kl{reindexing} and \kl{direct image} functors on a
predicate $p\in\fibrePred{Y}$ are  given by
\begin{equation*}\reindexf(p)=p\circ f \qquad \text{ and } 
\qquad \directImage{f}(p)(y)=\bigvee\{p(x)\mid x\in f^{-1}(y)\}\,.
  \end{equation*}

  \pseudopar{The bifibration of $\VV$-valued relations.} \AP Notice
  that we have the following pullback in $\mathsf{Cat}$, where
  $\intro\chbaseFunct X=X\times X$. This is a change-of-base situation
  and thus the functor $\Vrel\to\Set$ mapping each $\VV$-valued
  relation to its underlying set is also a bifibration.  
  \noindent\begin{minipage}{0.70\linewidth}
    We denote by \phantomintro{fibreRel} $\fibreRel{X}$ the fibre above a set $X$.
    For each set $X$ the functor $\intro\isoPR$ restricts to an
    isomorphism \phantomintro{isoFibre}
  $ \isoFibre{X}:\fibreRel{X}\to\fibrePred{X\times X}\,.$
\end{minipage}
\begin{minipage}{0.20\linewidth}
\begin{center}  $
  \xymatrix@R=1em{
  \Vrel \ar[d]\pullbackcorner\ar[r]^-{\isoPR} & \Vpred\ar[d] \\
  \Set\ar[r]_-{\chbaseFunct} & \Set \\
}
$
\end{center}
\end{minipage}

For $f:X\to Y$ in $\Set$ the \kl{reindexing} and \kl{direct image}
on $p\in\fibreRel{Y}$ are given by
\begin{equation*}
\reindexf(p)=p\circ (f\times f) \qquad \text{ and } \qquad
\directImage{f}(p)(y)=\bigvee\{p(x,x')\mid (x,x')\in (f\times f)^{-1}(y,y')\}\,.
\end{equation*}

\AP For two relations $p,q\in \Vrel_X$, we define their
\intro{composition} $p\comp q\colon X\times X \to \VV$ by
$p\comp q (x,y) =\bigvee \{p(x,z) \otimes q(z,y)\mid z\in X\}$. We
define the \intro{diagonal relation}
$\diagRel{X}\in \fibreRel{X} $ by $\diagRel{X}(x,y)=1$ if $x=y$ and
$\bot$ otherwise.

\begin{defi}
  \label{def:vcat}
  \AP We say that a \kl{$\VV$-valued relation} $r:X\times X\to\VV$ is
  \begin{itemize}
  \item \intro{reflexive} if for all $x\in X$ we have $r(x,x)\ge 1$,
    (i.e., $r\ge \diagRel{X}$);
  \item \intro{transitive} if $r\comp r\le r$;
  \item \intro{symmetric} if $r=r\circ \sym{X}$, where
    \phantomintro{sym} $\sym{X}\colon X\times X\to X\times X$ is the
    symmetry isomorphism.
  \end{itemize}
  \AP We denote by $\intro\Vcat$ the full subcategory of $\Vrel$
  consisting of \kl{reflexive}, \kl{transitive} relations and by
  $\intro\Vcatsym$ the full subcategory of $\Vrel$ that are
  additionally \kl{symmetric}.
\end{defi}

Note that $\Vcat$ is the category of small categories enriched over
the $\VV$ in the sense of~\cite{k:enriched-category-theory}.
\begin{ex} \label{ex:vcats} For $\VV=2$, \kl{$\VV$-valued
    relations} are just relations. \kl{Reflexivity},
  \kl{transitivity} and \kl{symmetry} coincide with the standard
  notions, so $\Vcat$ is the category of preorders, while $\Vcatsym$
  is the category of equivalence relations.

  For $\VV=[0,\infty]$, $\Vcat$ is the category of generalized metric
  spaces \`a la Lawvere \cite{l:metric-spaces-gen-logic}
  (i.e., directed pseudo-metrics and non-expansive maps),
  while $\Vcatsym$ is the one of pseudo-metrics.
\end{ex}

\section{Lifting functors to $\VV$-$\mathsf{Pred}$ and $\VV$-$\mathsf{Rel}$}
\label{sec:wass-lifting}

In the previous section, we have introduced the \kl{fibrations} of
interest for quantitative reasoning.
In order to deal with coinductive predicates in this
setting, it is convenient to have a structured way to lift
$\Set$-functors to $\VV$-valued predicates and relations, and
eventually to $\VV$-enriched categories. Our strategy is to first lift
functors to $\Vpred$ and then, by exploiting the change of base, move
these liftings to $\Vrel$. A comparison with the extensions of
$\Set$-monads to the bicategory of $\VV$-matrices~\cite{h:closed-objects} is provided in
Section~\ref{sec:conclusion}.

\subsection{$\VV$-predicate liftings}
\label{sec:Vpred-liftings}

Liftings of $\Set$-functors to the category $\Pred$ (for $\VV = 2$) of
predicates have been widely studied in the context of coalgebraic
modal logic, as they correspond to modal operators (see
e.g. \cite{s:coalg-logics-limits-beyond-journal}). For $\Vpred$, we
proceed in a similar way. Let us analyse what it means to have a
\kl{fibred lifting} $\liftF$ to $\Vpred$ of an endofunctor $F$ on
$\Set$. First, recall that the fibre $\fibrePred{X}$ is just the
preorder $\VV^X$. So the restriction $\liftFrest{X}$ to such a fibre
corresponds to a \emph{monotone} map $\VV^X\to\VV^{FX}$. The fact that
$\liftF$ is a \kl{fibred lifting} 
essentially means that the maps $(\VV^X\to\VV^{FX})_{X}$ form a
natural transformation between the contravariant functors $\VV^{-}$
and $\VV^{F-}$. Furthermore, by Yoneda lemma we know that natural
transformations $\VV^-\To\VV^{F-}$ are in one-to-one correspondence
with maps $\ev:F\VV\to\VV$, which we will call hereafter
\intro{evaluation maps}. One can characterise the \kl{evaluation maps}
which correspond to the \emph{monotone} natural transformations. \AP
These are the \intro{monotone evaluation maps}
$\ev:(F\VV,\orderFV)\to(\VV,\le)$ with respect to the usual order $\le$ on
$\VV$ and an order $\orderFV$ on $F\VV$ defined by applying the
standard canonical relation lifting of $F$ to~$\le$.

\begin{restatable}{prop}{propLiftCorrespondences}
  \label{prop:lift-correspondences}
  There is a one-to-one correspondence between
  \begin{itemize}
  \item \kl{fibred liftings} $\liftF$ of $F$ to $\Vpred$,
  \item monotone natural transformations $\VV^-\To\VV^{F-}$,
  \item \kl{monotone evaluation maps} $\ev:F\VV\to\VV$.  
  \end{itemize}
\end{restatable}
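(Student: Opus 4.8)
The strategy is to establish the two bijections indicated in the statement and then compose them. I would split the proof into two halves: (1) a correspondence between fibred liftings $\liftF$ of $F$ to $\Vpred$ and monotone natural transformations $\VV^{-}\To\VV^{F-}$; and (2) a correspondence between such monotone natural transformations and monotone evaluation maps $\ev\colon F\VV\to\VV$. The second half is essentially a monotone refinement of the Yoneda lemma, so most of the work is in being careful about the orders involved; the first half is where the fibrational bookkeeping lives.

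\textbf{From fibred liftings to monotone natural transformations.} Recall that the fibre $\fibrePred{X}$ is the preorder $\VV^X$, and that for $f\colon X\to Y$ in $\Set$ the reindexing functor $\reindexf\colon\fibrePred{Y}\to\fibrePred{X}$ is precomposition $p\mapsto p\circ f$. A lifting $\liftF$ restricts to monotone maps $\liftFrest{X}\colon\VV^X\to\VV^{FX}$, i.e.\ components of a candidate transformation between the contravariant functors $\VV^{-},\VV^{F-}\colon\Set^{\op}\to\mathsf{Poset}$. I would check that $\liftF$ being a \emph{fibred} lifting — i.e.\ the canonical map of~\eqref{eq:canonical-ineq} being an isomorphism — is exactly the naturality square $\liftFrest{X}\circ\reindexf = \reindex{Ff}\circ\liftFrest{Y}$, which unwinds to $\liftF(p\circ f) = \liftF(p)\circ Ff$. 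Here the key point is that in the poset fibre $\fibrePred{X}$, the Cartesian lifting is unique on the nose (not just up to iso) and reindexing is split, so the canonical comparison map~\eqref{eq:canonical-ineq} is an inequality $\liftF(p\circ f)\le\liftF(p)\circ Ff$ in $\VV^{FX}$, and it being ``an isomorphism'' simply means equality. Conversely, given a monotone natural transformation $\alpha\colon\VV^{-}\To\VV^{F-}$, I would define $\liftF$ on an object $p\colon X\to\VV$ by $\liftF(p) = \alpha_X(p)\colon FX\to\VV$ and on a morphism $f\colon X\to Y$ (so $p\le q\circ f$) by the same underlying map $Ff$; monotonicity of $\alpha_X$ gives $\liftF(p)\le\liftF(q\circ f)$, and naturality rewrites the right-hand side as $\liftF(q)\circ Ff$, so $Ff$ is indeed a $\Vpred$-morphism $\liftF(p)\to\liftF(q)$. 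These two constructions are visibly mutually inverse.

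\textbf{From monotone natural transformations to monotone evaluation maps.} This is the Yoneda-style step. Ordinary Yoneda gives a bijection between natural transformations $\VV^{-}\To\VV^{F-}$ and maps $F\VV\to\VV$: to $\alpha$ one associates $\ev := \alpha_\VV(\id_\VV)\colon F\VV\to\VV$, and to $\ev\colon F\VV\to\VV$ one associates the transformation with components $\alpha_X(p) = \ev\circ Fp$ for $p\colon X\to\VV$. I would then verify that under this bijection, monotonicity of all the $\alpha_X$ corresponds exactly to $\ev$ being monotone with respect to the order $\orderFV$ on $F\VV$ (obtained by applying the canonical relation lifting of $F$ to $\le$ on $\VV$) and $\le$ on $\VV$. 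One direction is immediate: if every $\alpha_X$ is monotone, take $X=\VV$ and two elements of $\VV^\VV$ in the pointwise order whose images under $F(-)$ witness a given $\orderFV$-comparison; more precisely, since $\orderFV$ is the canonical $F$-lifting of $\le$, any $u\orderFV v$ in $F\VV$ is (by definition of the canonical relation lifting via spans / the graph of $\le$) of the form $Fp(w)\orderFV Fq(w)$ for suitable $p\le q\colon Z\to\VV$ and $w\in FZ$, whence $\ev(u)=\alpha_Z(p)(w)\le\alpha_Z(q)(w)=\ev(v)$. For the converse, given $p\le q\colon X\to\VV$, monotonicity of the canonical lifting yields $Fp(t)\orderFV Fq(t)$ for all $t\in FX$, so $\ev$-monotonicity gives $\alpha_X(p)(t)=\ev(Fp(t))\le\ev(Fq(t))=\alpha_X(q)(t)$, i.e.\ $\alpha_X$ is monotone.

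\textbf{Main obstacle.} I expect the delicate point to be the precise handling of the order $\orderFV$ on $F\VV$ and the claim that an arbitrary $\orderFV$-related pair $u\orderFV v$ can be presented as $(Fp(w), Fq(w))$ for a pointwise-comparable pair $p\le q$ — this is really the statement that the canonical relation lifting of $F$ applied to the order $\le\subseteq\VV\times\VV$ is computed by the image along $F$ of (a functional presentation of) that order, i.e.\ a Beck--Chevalley / "relation lifting preserves graphs and their intersections" type fact. Everything else — identifying fibredness with the naturality equation, and the bare Yoneda bijection — is routine once the fibre preorders and split reindexing are spelled out.
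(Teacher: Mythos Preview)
Your proposal is correct and follows essentially the same route as the paper: first identify fibred liftings with monotone natural transformations $\VV^{-}\Rightarrow\VV^{F-}$ via the observation that in the poset fibres the comparison map~\eqref{eq:canonical-ineq} is an inequality and ``iso'' means equality, then use Yoneda for the bijection with maps $F\VV\to\VV$ and match up monotonicity on both sides.

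One remark on your ``main obstacle'': you are overestimating the difficulty of presenting an arbitrary pair $u\orderFV v$ as $(Fp(w),Fq(w))$ with $p\le q$. This is immediate from the \emph{definition} of $\orderFV$ as the canonical relation lifting of $\le$: a witness is precisely some $w\in F([\le])$ with $F\pi_i(w)=u,v$, so one may simply take $Z=[\le]$, $p=\pi_1$, $q=\pi_2$ (restricted to $[\le]$), which tautologically satisfy $p\le q$. No Beck--Chevalley-type argument is needed. The paper's proof does exactly this; the step you flagged as delicate is in fact the trivial direction.
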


Notice that the correspondence between \kl{fibred liftings} and
monotone \kl{evaluation maps} is given in one direction by
$\ev=\liftF(\id_\VV)$, and conversely, by
$\liftF(p\colon X\to\VV)=\ev\circ F(p)$.

\smallskip

\pseudopar{Evaluation maps as Eilenberg-Moore algebras.} Evaluation
maps have also been extensively considered in the coalgebraic approach
to modal logics \cite{s:coalg-logics-limits-beyond-journal}.
\AP A special kind of evaluation map arises when the truth values
$\VV$ have an algebraic structure for a given monad $(T,\mu,\eta)$,
that is, we have $\VV=T\objTruth$ for some object $\intro\objTruth$
and the evaluation map $T\VV\to\VV$ is an Eilenberg-Moore algebra for
$T$.  This notion of monadic modality has been studied
in~\cite{DBLP:journals/tcs/Hasuo15} where the category of free
algebras for $T$ was assumed to be order enriched. \full{In
  Lemma~\ref{lem:evfct-monad} in Appendix~\ref{app:predlift} we show
  that under reasonable assumptions, the evaluation map obtained as
  the free Eilenberg-Moore algebra on $\objTruth$ (i.e.,
  $\ev\colon T\VV\to\VV$ is just
  $\mu_{\objTruth}\colon T^2\objTruth\to T\objTruth$) is a monotone
  evaluation map, and hence gives rise to a fibred lifting of
  $T$.}\short{Under reasonable assumptions the evaluation map obtained
  as the free Eilenberg-Moore algebra on $\objTruth$ (i.e.,
  $\ev\colon T\VV\to\VV$ is just
  $\mu_{\objTruth}\colon T^2\objTruth\to T\objTruth$) is a monotone
  evaluation map, and hence gives rise to a fibred lifting of $T$ (see
  \cite{bkp:up-to-behavioural-metrics-fibrations-arxiv} for more
  details.)}

We provide next several examples of monotone evaluation maps which
arise in this fashion.

\begin{ex}
  \label{ex:evaluation-as-mult-pow-1}
  When $T$ is the powerset monad $\Pow$ and $\objTruth=1$ we obtain
  $\VV=2$ and $\mu_1\colon\Pow 2\to 2$ corresponds to the $\Diamond$
  modality, i.e., to an existential predicate transformer,
  see~\cite{DBLP:journals/tcs/Hasuo15}.
\end{ex}

\begin{ex}
  \label{ex:evaluation-as-mult-distr-2}
  When $T$ is the probability distribution functor $\Distr$ on $\Set$
  and $\objTruth=2=\{0,1\}$ equipped with the order $1\sqsubseteq 0$
  we obtain $\VV=\mathcal{D}\{0,1\}\cong[0,1]$ with the reversed order
  of the reals, i.e., $\le\ = \ \geR$. In this case
  $\ev_\mathcal{D}(f) = \sum_{r\in [0,1]} r\cdot f(r)$ for
  $f\colon [0,1]\to [0,1]$ a probability distribution (expectation of
  the identity random variable).
\end{ex}
\pseudopar{The canonical evaluation map.} \AP In the case $\VV=2$,
there exists a simple way of lifting a functor $F \colon \Set\to\Set$:
given a predicate $p\colon U \rightarrowtail X$, one defines the
canonical predicate lifting $\intro\liftFcanTwo(U)$ of $F$ as the
epi-mono factorization of $Fp\colon FU \to FX$. This lifting
corresponds to a canonical \kl{evaluation map}
$\intro\true \colon 1 \to 2$ which maps the unique element of $1$ into
the element $1$ of the quantale $2$. For $\VV$-relations, a
generalized notion of canonical evaluation map was introduced
in~\cite{h:closed-objects}.  \AP For $r\in\VV$ consider the subset
$\intro\uparr=\{v\in\VV\mid v\ge r\}$ and write
$\intro\truer\colon\uparr\to\VV$ for the inclusion.  Given $u\in F\VV$
we write $u\in F(\uparr)$ when $u$ is in the image of the injective
function $F(\truer)$.  \AP Following~\cite{h:closed-objects}, we
define $\intro\evcan:F\VV\to\VV$ \phantomintro{canonical evaluation
  map} as follows:
\[
  \evcan(u)=\bigvee\{ r\mid u\in F(\uparr)\}.
\]
\begin{ex}
  \label{ex:pow-funct-can-ev}
  Assume $F$ is the powerset functor $\PP$ and let $u\in\PP(\VV)$.  We
  obtain that
  \[
    \evcan(u)=\bigvee\{ r\mid u\subseteq\,\uparr\} \text{, or
      equivalently, } \evcan(u)=\bigwedge u\,.
  \]
  When $\VV=2$ we obtain $\evcan\colon\PP2\to 2$ given by
  $\evcan(u)= 1$ iff $u=\emptyset$ or $u=\{1\}$. This corresponds to
  the $\Box$ operator from modal logic.
  If $\VV = [0,\infty]$ we have $\evcan(u)=\sup u$.% Note that the
\end{ex}
\begin{ex}
  \label{ex:distr-funct-can-ev}
The canonical evaluation map for the distribution monad $\Distr$ and
$\VV=[0,1]$ is $\evcan(f) = \sup_{r\in [0,1]} f(r)$, which is not the monad
multiplication.  
\end{ex}

\AP The canonical evaluation map $\evcan$ is monotone whenever the
functor $F$ preserves weak pullbacks
\full{(see~Lemma~\ref{lem:can-ev-map-is-monotone} in
  Appendix~\ref{app:predrel})}\short{(see
  \cite{bkp:up-to-behavioural-metrics-fibrations-arxiv})}.  For such
functors, by Proposition~\ref{prop:lift-correspondences}, the map
$\evcan$ induces a \kl{fibred lifting} $\intro\liftFcan$ of $F$,
called the canonical $\Vpred$-lifting of $F$ and defined by
\[
  \liftFcan(p)(u) = \bigvee\{r\mid F(p)(u)\in F(\uparr)\}\ \text{ for
  } p\in\Vpred_X\text{ and } u\in FX\,.
\]

\subsection{From predicates to relations via Wasserstein}
\label{sec:wasserstein-lifting}

We describe next how functor liftings to $\Vrel$ can be systematically
obtained using the change-of-base situation described above. In
particular, we see how the Wasserstein metric between probability
distributions (defined in terms of couplings of distributions) can be
naturally modelled in the fibrational setting.

Consider a $\VV$-predicate lifting $\liftF$ of a $\Set$-functor $F$. A
natural way to lift $F$ to $\VV$-relations using $\liftF$ is to regard
a $\VV$-relation $r\colon X\times X\to\VV$ as a $\VV$-predicate on the
product $X\times X$. Formally, we will use the isomorphism
$\isoFibre{X}$ described in Section~\ref{sec:lifting-vrel-vcat}. We
can apply the functor $\liftF$ to the predicate $\isoFibre{X}(r)$ in
order to obtain the predicate $\liftF\circ \isoFibre{X}(r)$ on the set
$F(X\times X)$. Ideally, we would want to transform this predicate
into a relation on $FX$. So first, we have to transform it into a
predicate on $FX\times FX$. \AP To this end, we use the natural
transformation \phantomintro{lambdaChBaseF}
\begin{equation}
  \label{eq:nat-trans-chBase}
  \lambdaChBase^{F}\colon F\circ \chbaseFunct \To \chbaseFunct \circ F \text{ defined by } \lambdaChBaseF{X}=\langle F\pi_1, F\pi_2\rangle\colon F(X\times X)\to
FX\times FX\,.
\end{equation}
 We drop the superscript and simply write
$\intro\lambdaChBase$ when the functor $F$ is clear from the
context.
Additionally, the bifibrational structure of
$\Vrel$ plays a crucial role, as we can use the \kl{direct image}
functor $\directImage{\lambdaChBase_{X}}$ to transform
$\liftF\circ \isoFibre{X}(r)$ into a predicate on $FX\times FX$.
Putting all the pieces together, we define a lifting of $F$ on the
fibre $\fibreRel{X}$ as the composite $\barFrest{X}$ given by:
\begin{equation}
  \label{eq:def-V-rel-lift}
  \begin{tikzcd}
    \barFrest{X}\colon \fibreRel{X}\ar[r, "\isoFibre{X}"] &
    \fibrePred{\chbaseFunct X}\ar[r,"\liftFrest{\chbaseFunct X}"]
    & \fibrePred{F\chbaseFunct
      X}\ar[r,"\directImage{\lambdaChBase_{X}}"] &
    \fibrePred{\chbaseFunct FX}\ar[r, "\isoFibre{FX}^{-1}"] &
    \fibreRel{FX}
  \end{tikzcd}
\end{equation}

\AP The aim is to define a lifting $\barF$ of $F$ to $\Vrel$. The
above construction provides the definition of $\intro\barF$ on the
fibres and, in particular, on the objects of $\Vrel$.  For a
\kl{morphism between $\VV$-relations} $p\in\fibreRel{X}$ and
$q\in\fibreRel{Y}$, i.e., a map $f\colon X\to Y$ such that
$p\le \reindexf(q)$, we define $\barF(f)$ as the map
$Ff\colon FX\to FY$. To see that this is well defined it remains to
show that $\barF p\le \reindex{Ff}(\barF q)$. This is the first part
of the next proposition.
\begin{restatable}{prop}{propWassersteinLifting}
  \label{prop:Wasserstein-lifting}
  The functor $\barF$ defined above is a well defined \kl{lifting} of
  $F$ to $\Vrel$. Furthermore, when $F$ preserves weak pullbacks and
  $\liftF$ is a \kl{fibred lifting} of $F$ to $\Vpred$, then $\barF$
  is a \kl{fibred lifting} of $F$ to $\Vrel$.
\end{restatable}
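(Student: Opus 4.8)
The plan is to prove the two assertions separately. For the first part --- that $\barF$ is a well-defined lifting --- the only non-trivial thing, as the text already notes, is functoriality: given $f\colon X\to Y$ with $p\le\reindexf(q)$ in $\Vrel$, we must check $\barF p\le\reindex{Ff}(\barF q)$, so that $Ff$ is a legitimate morphism $\barF p\to\barF q$ in $\Vrel$. I would unfold $\barFrest{X}$ as the composite in~\eqref{eq:def-V-rel-lift} and chase the inequality through the four stages. The isomorphisms $\isoFibre{X}$ and $\isoFibre{FX}^{-1}$ are fibred (they commute with reindexing, since they come from the change-of-base pullback defining $\Vrel$), and $\liftFrest{\chbaseFunct X}$ is monotone and a lifting of $F$ along $\chbaseFunct$, so these three steps transport the inequality in the expected way. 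The step requiring care is the \kl{direct image} $\directImage{\lambdaChBase_X}$: here I would use that direct images are monotone and, crucially, that $\lambdaChBase$ is natural, i.e.\ $\lambdaChBase_Y\circ F(f\times f) = (Ff\times Ff)\circ\lambdaChBase_X$, combined with the lax naturality $\directImage{g}\circ\reindexnobrk{h}\le\reindexnobrk{h'}\circ\directImage{g'}$ for a commuting square $g'h = h'g$ --- or more simply the functoriality $\directImage{}$ of direct images together with $\directImage{Ff\times Ff}\dashv\reindex{Ff\times Ff}$. Assembling these observations gives $\barF p\le\reindex{Ff}(\barF q)$, and functoriality of $\barF$ on morphisms is then immediate since $\barF$ acts as $F$ on underlying maps.

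For the second part --- that $\barF$ is \emph{fibred} when $F$ preserves weak pullbacks and $\liftF$ is fibred --- I would show that the canonical comparison map in~\eqref{eq:canonical-ineq}, namely $\barF\circ\reindexf(q)\to\reindex{Ff}\circ\barF(q)$, is an isomorphism, equivalently that $\barF\circ\reindexf(q) = \reindex{Ff}\circ\barF(q)$ as objects of the poset fibre $\fibreRel{FX}$. Again I would compute both sides via~\eqref{eq:def-V-rel-lift}. Using that $\isoFibre{-}$ is a fibred iso and that $\liftF$ is fibred (so $\liftFrest{}\circ\reindexnobrk{} = \reindexnobrk{}\circ\liftFrest{}$), the problem reduces to a statement purely about $\Vpred$: that the square formed by the two direct-image functors $\directImage{\lambdaChBase_X}$, $\directImage{\lambdaChBase_Y}$ and the reindexing functors $\reindex{F(f\times f)}$, $\reindex{Ff\times Ff}$ satisfies the \kl{Beck-Chevalley} condition, i.e.\ $\directImage{\lambdaChBase_Y}\circ\reindex{F(f\times f)} = \reindex{Ff\times Ff}\circ\directImage{\lambdaChBase_X}$. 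The naturality square for $\lambdaChBase$ is not automatically a pullback, so this is where the hypothesis on $F$ enters: I expect that when $F$ preserves weak pullbacks, the square $\lambdaChBase_Y\circ F(f\times f) = (Ff\times Ff)\circ\lambdaChBase_X$ is a weak pullback in $\Set$, which suffices for the Beck--Chevalley equality in the bifibration $\Vpred\to\Set$ (direct images and reindexing along the four sides of a weak pullback commute).

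\textbf{The main obstacle.} The crux is precisely this last point: verifying that preservation of weak pullbacks by $F$ makes the naturality square of $\lambdaChBase=\langle F\pi_1,F\pi_2\rangle$ a weak pullback, and that a weak pullback (not an honest pullback) is enough for Beck--Chevalley in $\Vpred$. I would prove the weak-pullback claim by exhibiting the square in question as (the $F$-image of) the obvious pullback $X\times X\xleftarrow{f\times f}$ \dots, or rather as a retract/quotient of such, so that weak-pullback-preservation applies; and I would prove the Beck--Chevalley implication by a direct join-manipulation using the explicit formulas for $\reindexf$ and $\directImage{f}$ on $\Vpred$ given in Section~\ref{sec:lifting-vrel-vcat}, observing that the join defining $\directImage{\lambdaChBase_Y}\circ\reindex{F(f\times f)}$ ranges over exactly the fibre that the weak pullback property guarantees is covered. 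Everything else is routine diagram-chasing with monotone maps between complete lattices.
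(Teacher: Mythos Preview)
Your proposal is correct and matches the paper's proof essentially step for step: the paper also reduces well-definedness to the chain $\barF(\reindexf q)\le\reindex{Ff}(\barF q)$ via the components of~\eqref{eq:def-V-rel-lift}, using the lax Beck--Chevalley mate~\eqref{eq:mate-BCC} for the naturality square of $\lambdaChBase$, and then upgrades each inequality to an equality under the two extra hypotheses. Two small remarks: in your displayed Beck--Chevalley identity the subscripts on $\lambdaChBase$ are swapped (it should read $\directImage{\lambdaChBase_X}\circ\reindex{F(f\times f)} = \reindex{Ff\times Ff}\circ\directImage{\lambdaChBase_Y}$); and the claim that the naturality square of $\lambdaChBase$ is a weak pullback is not literally ``the $F$-image of a pullback'' --- the paper proves it (Lemma~\ref{lem:wpb-naturality}) by decomposing into three squares each of which \emph{is} the $F$-image of a pullback, and then pasting.
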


Spelling out the concrete description of the \kl{direct image} functor
and of $\lambdaChBase_X$, we obtain for a relation $p\in\Vrel_{X}$ and
$t_1,t_2\in FX$, that
\begin{equation}
  \label{eq:Wasserstein-lift-formula}
  \barF(p)(t_1,t_2)=\bigvee\{\liftF(p)(t)\mid t\in F(X\times X), F\pi_i(t)=t_i\}
\end{equation}
Unraveling the definition of $\liftF(p)(t)=\ev\circ F(p)$, we obtain
for $\barF(p)$ the same formula as for the extension of $F$
on $\VV$-matrices, as given
in~\cite[Definition~3.4]{h:closed-objects}. This definition
in~\cite{h:closed-objects} is obtained by a direct generalisation of
the Barr extensions of $\Set$-functors to the bicategory of relations.
In contrast, we obtained~\eqref{eq:Wasserstein-lift-formula} by
exploiting the fibrational change-of-base situation and by first
considering a $\Vpred$-lifting.

\AP We call a lifting of the form $\barF$ the \intro{Wasserstein
  lifting} of $F$ corresponding to $\liftF$. This terminology is
motivated by the next example.

\begin{ex}
  \label{ex:wasserstein}
  When $F = \Distr$ (the distribution functor), $\VV = [0,1]$ and
  $ev_F$ is as in Example~\ref{ex:evaluation-as-mult-distr-2} then
  $\barF$ is the original Wasserstein metric from transportation
  theory \cite{v:optimal-transport}, which -- by the
  Kantorovich-Rubinstein duality -- is the same as the Kantorovich
  metric. Here we compare two probability distributions
  $t_1,t_2\in \Distr X$ and obtain as a result the coupling
  $t\in \Distr(X\times X)$ with marginal distributions $t_1,t_2$,
  giving us the optimal plan to transport the ``supply'' $t_1$ to the
  ``demand'' $t_2$. More concretely, given a metric $d\colon X\times
  X\to \VV$, the (discrete) Wasserstein metric is defined as
  \[ d^W(t_1,t_2) = \inf \{ \sum_{x,y\in X} d(x,y)\cdot t(x,y) \mid
    \sum_y t(x,y) = t_1(x), \sum_x t(x,y) = t_2(y) \}. \]
  On the other hand, when $\ev_F$ is the \kl{canonical evaluation map}
  of Example~\ref{ex:distr-funct-can-ev} the corresponding
  $\Vrel$-lifting $\barF$ minimizes the longest distance (and hence
  the required time) rather than the total cost of transport.
\end{ex}

\begin{ex}
  \label{ex:hausdorff}
  Let us spell out the definition when $F = \Pow$ (powerset functor),
  $\VV = [0,1]$ and $\ev_F\colon \Pow[0,1]\to [0,1]$ corresponds to
  $\sup$, which is clearly monotone and is the canonical evaluation
  map as in Example~\ref{ex:pow-funct-can-ev}.

  Then, given a metric $d\colon X\times X\to [0,1]$ and
  $X_1,X_2\subseteq X$, the lifted metric is defined as follows
  (remember that the order is reversed on $[0,1]$):
  \[ \barF(d)(X_1,X_2) = \inf \{ \sup d[Y] \mid Y \subseteq X\times X,
    \pi_i[Y] = X_i \} \] \AP As explained
  in~\cite{bbkk:coalgebraic-behavioral-metrics}, this is the same as
  the \intro{Hausdorff metric} $d^H$ defined by:
  \[ d^H(X_1,X_2) = \sup \{\sup_{x_1\in X_1} \inf_{x_2\in X_2}
    d(x_1,x_2), \sup_{x_2\in X_2} \inf_{x_1\in X_1} d(x_1,x_2)\} \]
\end{ex}

The next lemma establishes that this construction is functorial:
liftings of natural transformations to $\Vpred$ can be converted into
liftings of natural transformations between the corresponding
\kl{Wasserstein liftings} on $\Vrel$.

\AP \phantomintro{\liftDistr}
\phantomintro{\barDistr}

\begin{restatable}{lem}{lemWassIsFunctorial}
  \label{lem:Wass-is-Functorial}
  If there exists a lifting $\liftDistr\colon\liftF\To\liftG$ of
  a natural transformation $\distrLaw\colon F\To G$, then there exists
  a lifting $\barDistr\colon\barF\To\barG$ between the
  corresponding \kl{Wasserstein liftings}. Furthermore, when $\liftF$
  and $\liftG$ correspond to monotone evaluation maps $\ev_F$ and
  $\ev_G$, then the lifting $\liftDistr$ exists and is unique if and only if
  $\ev_F\le\ev_G\circ\distrLaw_\VV$.
\end{restatable}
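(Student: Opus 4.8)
The plan is to reduce everything to fibrewise inequalities, exploiting that the forgetful functors $\Vpred\to\Set$ and $\Vrel\to\Set$ are faithful — a morphism in either category is just a base map satisfying an inequality, so there is at most one morphism above each base map. I would first record the consequence: a lifting of a natural transformation $\distrLaw\colon F\To G$ along fixed liftings of $F$ and $G$, if it exists, necessarily has the map $\distrLaw_X$ as its component at each object over $X$; hence it is automatically unique, its naturality is inherited from that of $\distrLaw$, and its existence is equivalent to the existence of all components, i.e.\ to a family of fibrewise inequalities. Concretely, $\liftDistr\colon\liftF\To\liftG$ exists in $\Vpred$ iff $\liftF(q)\le\reindex{\distrLaw_Y}(\liftG(q))=\liftG(q)\circ\distrLaw_Y$ for every predicate $q\colon Y\to\VV$, and $\barDistr\colon\barF\To\barG$ exists in $\Vrel$ iff $\barF(p)\le\reindex{\distrLaw_X}(\barG(p))=\barG(p)\circ(\distrLaw_X\times\distrLaw_X)$ for every $p\in\Vrel_X$.

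For the first claim I would fix $p\in\Vrel_X$ and $t_1,t_2\in FX$ and use the explicit formula~\eqref{eq:Wasserstein-lift-formula} for $\barF$ and $\barG$: it then suffices to show that for each $t\in F(X\times X)$ with $F\pi_i(t)=t_i$ there is an $s\in G(X\times X)$ with $G\pi_i(s)=\distrLaw_X(t_i)$ and $\liftF(p)(t)\le\liftG(p)(s)$. The natural choice is $s:=\distrLaw_{X\times X}(t)$; naturality of $\distrLaw$ at the projections $\pi_i\colon X\times X\to X$ gives $G\pi_i(s)=\distrLaw_X(F\pi_i(t))=\distrLaw_X(t_i)$, and the fibrewise inequality for $\liftDistr$ applied to $p$ regarded as a predicate on $X\times X$ gives $\liftF(p)(t)\le\liftG(p)(\distrLaw_{X\times X}(t))=\liftG(p)(s)$. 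Taking joins over all such $t$ then yields $\barF(p)(t_1,t_2)\le\barG(p)(\distrLaw_X(t_1),\distrLaw_X(t_2))$, the required inequality. I would point out that this uses only formula~\eqref{eq:Wasserstein-lift-formula}, hence no hypothesis on $\liftF,\liftG$ beyond being liftings.

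For the second claim I would use that, when $\liftF,\liftG$ correspond to monotone evaluation maps, $\liftF(q)=\ev_F\circ F(q)$ and $\liftG(q)=\ev_G\circ G(q)$ by Proposition~\ref{prop:lift-correspondences}. Naturality of $\distrLaw$ at $q\colon Y\to\VV$ rewrites $\liftG(q)\circ\distrLaw_Y$ as $\ev_G\circ\distrLaw_\VV\circ F(q)$, so the existence criterion for $\liftDistr$ recorded above becomes: $\ev_F\circ F(q)\le\ev_G\circ\distrLaw_\VV\circ F(q)$ for all $q$. Assuming $\ev_F\le\ev_G\circ\distrLaw_\VV$, this holds for every $q$ because precomposition with $F(q)$ is monotone for the pointwise order, so $\liftDistr$ exists (and is unique, by the first paragraph). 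Conversely, instantiating the criterion at $Y=\VV$, $q=\id_\VV$ makes $F(q)=\id_{F\VV}$ and reduces it to exactly $\ev_F\le\ev_G\circ\distrLaw_\VV$.

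I do not expect a genuine obstacle: once the faithfulness reduction is in place the proof is a short diagram chase. The step that needs the most care is the bookkeeping of the fibrational data — in particular remembering that the components of $\liftDistr$ and $\barDistr$ sit above $\distrLaw_X$ rather than above identities, so that the relevant fibrewise conditions are inequalities up to reindexing along $\distrLaw_X$ — together with the repeated, and slightly different, uses of naturality of $\distrLaw$ (at the projections $\pi_i$ in the first part, at the predicate $q$ in the second).
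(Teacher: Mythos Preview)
Your proof is correct. For the second claim (the equivalence with $\ev_F\le\ev_G\circ\distrLaw_\VV$) your argument is identical to the paper's.

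For the first claim the two proofs carry out the same underlying computation at different levels of abstraction. The paper works with the composite definition~\eqref{eq:def-V-rel-lift} of $\barF$ and $\barG$ and establishes $\reindex{\distrLaw_X}\circ\barG\ge\barF$ through a chain of functorial inequalities; the key step uses the commuting square $(\distrLaw_X\times\distrLaw_X)\circ\lambdaChBaseF{X}=\lambdaChBaseG{X}\circ\distrLaw_{X\times X}$ together with the generic mate inequality $\directImage{\lambdaChBaseF{X}}\circ\reindex{\distrLaw_{X\times X}}\le\reindex{\distrLaw_X\times\distrLaw_X}\circ\directImage{\lambdaChBaseG{X}}$. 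Your argument is the elementwise unpacking of exactly this via formula~\eqref{eq:Wasserstein-lift-formula}: producing $s=\distrLaw_{X\times X}(t)$ and checking $G\pi_i(s)=\distrLaw_X(t_i)$ is the square's commutativity read pointwise, and bounding each term of the first join by a term of the second is the mate inequality made concrete. Your route is more elementary in that it bypasses the explicit Beck-Chevalley machinery; the paper's route keeps the fibrational structure visible and parallels the style used elsewhere (e.g.\ in the proof of Proposition~\ref{prop:Wasserstein-lifting}).
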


For $\VV=[0,\infty]$, one is also interested in lifting functors
to the category of (generalized) pseudo-metric spaces, not just of
$[0,\infty]$-valued relations. This motivates the next question: when
does the lifting $\barF$ restrict to a functor on $\Vcat$ and
$\Vcatsym$? \AP We have the following characterization theorem, where
\phantomintro{constant predicate} $\constPred{X}\colon X\to \VV$ is
the constant function $x\mapsto 1$ and $u\tensor v\colon X\to \VV$
denotes the pointwise tensor of two predicates $u,v\colon X\to \VV$,
i.e., $(u\tensor v)(x)=u(x)\tensor v(x)$.

\begin{restatable}{theo}{thmRestrictionOfWasserstein}
  \label{thm:restricting-Wasserstein}
  Assume $\liftF$ is a lifting of $F$ to $\Vpred$ and  $\barF$ is the
  corresponding $\Vrel$ \kl{Wasserstein lifting}. Then
  \begin{itemize}
  \item If $\liftF(\constPred{X}) \ge \constPred{FX}$ then
    $\barF(\diagRel{X}) \ge \diagRel{FX}$, hence $\barF$ 
    preserves \kl{reflexive} relations;
  \item If $\liftF$ is a \kl{fibred lifting}, $F$ preserves weak
    pullbacks and $\liftF(p\otimes q) \ge \liftF(p)\otimes \liftF(q)$
    then $\barF(p\comp q) \ge \barF(p)\comp \barF(q)$, hence $\barF$
    preserves \kl{transitive} relations; 
  \item $\barF$ preserves \kl{symmetric} relations.
  \end{itemize}
  Consequently, when all the above hypotheses are satisfied, then the
  corresponding $\Vrel$ \kl{Wasserstein lifting} $\barF$ restricts to
  a lifting of $F$ to both $\Vcat$ and $\Vcatsym$.
\end{restatable}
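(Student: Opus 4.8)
The plan is to prove the three displayed inequalities about the ``generating'' data ($\diagRel{X}$, the composite $p\comp q$, and $r$ versus $r\circ\sym{X}$) and then deduce preservation of \kl{reflexive}, \kl{transitive} and \kl{symmetric} relations using that every fibrewise component $\barFrest{X}$ is monotone, being a composite of the monotone maps $\isoFibre{X}$, $\liftFrest{\chbaseFunct X}$, the left adjoint $\directImage{\lambdaChBase_X}$ and $\isoFibre{FX}^{-1}$: from $r\ge\diagRel{X}$ one gets $\barF(r)\ge\barF(\diagRel{X})\ge\diagRel{FX}$, and from $r\comp r\le r$ one gets $\barF(r)\comp\barF(r)\le\barF(r\comp r)\le\barF(r)$. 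All concrete computations are carried out with the explicit formula~\eqref{eq:Wasserstein-lift-formula}, namely $\barF(p)(t_1,t_2)=\bigvee\{\liftF(p)(s)\mid s\in F(X\times X),\ F\pi_i(s)=t_i\}$, where $\liftF(p)$ abbreviates $\liftFrest{\chbaseFunct X}(\isoFibre{X}(p))$.

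For \kl{reflexivity}, given $t\in FX$ I would use $s=F\diago{X}(t)$ as a witness in the join defining $\barF(\diagRel{X})(t,t)$: the marginal conditions $F\pi_i(s)=t$ hold since $\pi_i\circ\diago{X}=\id_X$, and since $\reindexnobrk{\diago{X}}(\isoFibre{X}(\diagRel{X}))=\constPred{X}$, the canonical lax comparison~\eqref{eq:canonical-ineq} of $\liftF$ with reindexing along $\diago{X}$ gives $\liftF(\diagRel{X})(F\diago{X}(t))\ge\liftF(\constPred{X})(t)\ge\constPred{FX}(t)=1$, the last step being exactly the hypothesis. Since $\diagRel{FX}$ is $\bot$ off the diagonal, $\barF(\diagRel{X})\ge\diagRel{FX}$ follows. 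The \kl{symmetry} case requires no extra hypothesis: writing $Q=\liftF(\isoFibre{X}(r))$, the fact that $\sym{X}$ is an isomorphism makes the comparison~\eqref{eq:canonical-ineq} for $\sym{X}$ invertible, so $\liftF$ commutes with $\reindexnobrk{\sym{X}}$; combined with $\reindexnobrk{\sym{X}}(\isoFibre{X}(r))=\isoFibre{X}(r)$ (symmetry of $r$) this yields $Q\circ F\sym{X}=Q$. As $F\sym{X}$ is a bijection of $F(X\times X)$ with $F\pi_1\circ F\sym{X}=F\pi_2$ and $F\pi_2\circ F\sym{X}=F\pi_1$, reindexing the join~\eqref{eq:Wasserstein-lift-formula} along it shows $\barF(r)(t_1,t_2)=\barF(r)(t_2,t_1)$.

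The \kl{transitivity} case is the hard one. Expanding $\barF(p)\comp\barF(q)(t_1,t_3)$ via~\eqref{eq:Wasserstein-lift-formula} and distributing $\tensor$ over the joins, it reduces to: for all $s,s'\in F(X\times X)$ with $F\pi_2(s)=F\pi_1(s')$ there is $w\in F(X\times X)$ with $F\pi_1(w)=F\pi_1(s)$, $F\pi_2(w)=F\pi_2(s')$ and $\liftF(p\comp q)(w)\ge\liftF(p)(s)\tensor\liftF(q)(s')$. The existence of $w$ is where weak-pullback preservation enters: $X^3$ is (a choice of) the pullback of $\pi_2\colon X\times X\to X$ along $\pi_1\colon X\times X\to X$, with projections $\pi_{12},\pi_{23}\colon X^3\to X\times X$, so $F$ applied to it is a weak pullback, providing $v\in F(X^3)$ with $F\pi_{12}(v)=s$ and $F\pi_{23}(v)=s'$; one then takes $w=F\pi_{13}(v)$, whose marginals are correct since $\pi_1\circ\pi_{13}=\pi_1\circ\pi_{12}$ and $\pi_2\circ\pi_{13}=\pi_2\circ\pi_{23}$. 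For the value inequality, put $m=\reindexnobrk{\pi_{12}}(\isoFibre{X}(p))\tensor\reindexnobrk{\pi_{23}}(\isoFibre{X}(q))$ on $X^3$, i.e.\ $m(x,y,z)=p(x,y)\tensor q(y,z)$; by definition of $\comp$ we have $m\le\reindexnobrk{\pi_{13}}(\isoFibre{X}(p\comp q))$. Applying the monotone \kl{fibred lifting} $\liftF$ and using that fibredness rewrites $\liftF\circ\reindexnobrk{g}$ as $\reindexnobrk{Fg}\circ\liftF$, evaluation at $v$ gives $\liftF(p\comp q)(w)\ge\liftFrest{X^3}(m)(v)$; on the other side, the hypothesis $\liftF(u\tensor u')\ge\liftF(u)\tensor\liftF(u')$ applied to $u=\reindexnobrk{\pi_{12}}(\isoFibre{X}(p))$ and $u'=\reindexnobrk{\pi_{23}}(\isoFibre{X}(q))$, again combined with fibredness, gives $\liftFrest{X^3}(m)(v)\ge\liftF(p)(s)\tensor\liftF(q)(s')$. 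Chaining the two inequalities and taking the join over all admissible $(s,s')$ yields $\barF(p\comp q)\ge\barF(p)\comp\barF(q)$. I expect this item to be the main obstacle, as it is the only one that simultaneously uses weak-pullback preservation (to build the ``three-way coupling'' $v$), fibredness of $\liftF$ (to commute it past the reindexings along $\pi_{12},\pi_{23},\pi_{13}$), and lax monoidality of $\liftF$; the delicate part is keeping the bookkeeping of the three projections and the pointwise tensor consistent.

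Finally, under all the stated hypotheses $\barF$ preserves \kl{reflexive}, \kl{transitive} and \kl{symmetric} relations, hence carries objects of $\Vcat$ to objects of $\Vcat$ and objects of $\Vcatsym$ to objects of $\Vcatsym$; since both are full subcategories of $\Vrel$ and $\barF$ is, by Proposition~\ref{prop:Wasserstein-lifting}, already a well-defined \kl{lifting} of $F$ to $\Vrel$ (acting as $Ff$ on morphisms), it restricts to \kl{lifting}s of $F$ to both $\Vcat$ and $\Vcatsym$.
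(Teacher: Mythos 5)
Your proof is correct and uses the same essential ideas as the paper's (Lemmas~\ref{lem:pres-refl}, \ref{lem:pres-trans} and \ref{lem:lifting-preserves-symmetry}): the identification $\reindexnobrk{\diago X}(\isoFibre X(\diagRel X))=\constPred X$ for reflexivity; the weak-pullback square exhibiting $X^3$ as the coupling space, together with fibredness and lax monoidality of $\liftF$, for transitivity; and the commutation of $\sym{}$ with $\lambdaChBase$ for symmetry. The difference is one of presentation level: you work throughout with the pointwise formula~\eqref{eq:Wasserstein-lift-formula}, which makes the role of the element $v\in F(X^3)$ and the projections explicit, whereas the paper argues abstractly in the fibres and delegates the weak-pullback step to a separate Beck--Chevalley lemma (Lemma~\ref{lem:BCC}). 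Your symmetry argument is a genuine small variant: you upgrade the canonical comparison~\eqref{eq:canonical-ineq} to an equality using that $\sym X$ is an isomorphism, then change variables along the bijection $F\sym{X}$ inside the defining join; the paper instead establishes only the lax inequality $\isoFibre{FX}\barF r\le\reindexnobrk{\sym{FX}}(\isoFibre{FX}\barF r)$ and then upgrades it via idempotence of $\sym{FX}$. Both are sound, and the iso-case equality you rely on holds for every lifting (not just fibred ones), so no extra hypothesis is smuggled into the symmetry item. The closing deduction---preservation of reflexive, transitive and symmetric relations from the three generating inequalities by fibrewise monotonicity, and the restriction of $\barF$ to the full subcategories $\Vcat$ and $\Vcatsym$---matches the paper exactly.
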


For $\VV=[0,\infty]$, the first condition of
Theorem~\ref{thm:restricting-Wasserstein} is a relaxed version of a
condition in \cite[Definition
5.14]{bbkk:coalgebraic-behavioral-metrics} used to guarantee
reflexivity.  The second condition (for transitivity) is equivalent to
a non-symmetric variant of a condition in
\cite{bbkk:coalgebraic-behavioral-metrics} \full{(see
  Lemma~\ref{lem:wbhvd-additivity-compare} in
  Appendix~\ref{app:predrel})}\short{(see
  \cite{bkp:up-to-behavioural-metrics-fibrations-arxiv})}.

We can establish generic sufficient conditions on a monotone
evaluation map $\ev$ so that the corresponding $\Vpred$-lifting
$\liftF$ satisfies the conditions of
Theorem~\ref{thm:restricting-Wasserstein}. In
\full{Proposition~\ref{propCharEvMapWellBeh} in
  Appendix~\ref{app:predrel}}\short{\cite{bkp:up-to-behavioural-metrics-fibrations-arxiv}}
we show that $\liftF(p\tensor q)\ge\liftF(p)\tensor\liftF(q)$ holds
whenever the map $\tensor\colon\VV\times\VV\to\VV$ is the carrier of a
lax morphism in the category of $F$-algebras between
$(\VV,\ev)^2\to(\VV,\ev)$, i.e.,
$\tensor\circ(\ev\times\ev)\circ\lambdaChBase_\VV\le\ev\circ
F(\tensor)$. Furthermore, $\liftF(\constPred{X})\ge\constPred{X}$
holds whenever the map $\constPred{\onebb}\colon\onebb\to\VV$ is the
carrier of a lax morphism from the one-element $F$-algebra
$!\colon F\onebb\to\onebb$ to $(\VV,\ev)$, i.e.,
$\constPred{\onebb}\circ !\le\ev\circ F\constPred{\onebb}$. These two
requirements correspond to the conditions $(Q_\tensor)$, respectively
$(Q_{k})$ satisfied by a topological theory in the sense
of~\cite[Definition~3.1]{h:closed-objects}.
Since these two are satisfied by the canonical evaluation map $\evcan$,\footnote{The same observation is present in~\cite[Theorem~3.3(b)]{h:closed-objects} but in a slightly different setting.} we immediately obtain
\begin{restatable}{prop}{propCanLiftingIsWellBehaved}
  \label{prop:can-lifting-is-well-behaved}
  Whenever $F$ preserves weak pullbacks the canonical lifting
  $\liftFcan$ satisfies the conditions in
  Theorem~\ref{thm:restricting-Wasserstein}:
  \begin{enumerate}
  \item
    $\liftFcan(p\otimes q) \ge \liftFcan(p)\otimes \liftFcan(q)$,
    for all $p,q\in\fibrePred{X}$,
  \item $\liftFcan(\constPred{X})\ge \constPred{X}$.% where
  \end{enumerate}
\end{restatable}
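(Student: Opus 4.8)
The plan is to avoid attacking the two inequalities about $\liftFcan$ directly, and instead reduce them to conditions on the underlying evaluation map $\evcan$, after which Proposition~\ref{propCharEvMapWellBeh} applies. Since $\liftFcan$ is the fibred lifting induced by $\evcan$, it suffices to verify the two hypotheses of that proposition: that $\tensor\colon\VV\times\VV\to\VV$ is the carrier of a lax morphism of $F$-algebras $(\VV,\evcan)^2\to(\VV,\evcan)$, i.e., $\tensor\circ(\evcan\times\evcan)\circ\lambdaChBase_\VV\le\evcan\circ F(\tensor)$ (condition $(Q_\tensor)$), and that $\constPred{\onebb}\colon\onebb\to\VV$ is the carrier of a lax morphism from the one-element algebra $!\colon F\onebb\to\onebb$ to $(\VV,\evcan)$, i.e., $\constPred{\onebb}\circ\,!\le\evcan\circ F(\constPred{\onebb})$ (condition $(Q_{k})$). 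Both ultimately come down to how $\tensor$ interacts with the up-sets $\uparr$ out of which $\evcan$ is built.

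I would dispatch the unit condition $(Q_{k})$ first, as it is immediate. The map $\constPred{\onebb}$ has constant value $1$, hence factors through $\trueone\colon\upar{1}\hookrightarrow\VV$; applying $F$, the map $F(\constPred{\onebb})$ factors through $F(\trueone)$, so for every $u\in F\onebb$ we get $F(\constPred{\onebb})(u)\in F(\upar{1})$ and therefore $\evcan(F(\constPred{\onebb})(u))\ge 1=\constPred{\onebb}(!\,(u))$.

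The tensor condition $(Q_\tensor)$ carries the real content. Unfolding the definition of $\evcan$ and using that $\tensor$ distributes over arbitrary joins on both sides, the inequality reduces to the following pointwise claim: for every $w\in F(\VV\times\VV)$ and all $r,s\in\VV$, if $F\pi_1(w)\in F(\uparr)$ and $F\pi_2(w)\in F(\upars)$, then $F(\tensor)(w)\in F(\uparrs)$ (writing $\uparrs$ for the up-set of $r\tensor s$). To prove this claim I would use the hypothesis that $F$ preserves weak pullbacks, applied twice. First, $\uparr\times\VV=\pi_1^{-1}(\uparr)$ and $\VV\times\upars=\pi_2^{-1}(\upars)$ arise as pullbacks of $\truer$, resp.\ $\trues$, along the projections $\pi_1,\pi_2$, so weak-pullback preservation lets one lift $w$ through $F$ of the inclusions $\uparr\times\VV\hookrightarrow\VV\times\VV$ and $\VV\times\upars\hookrightarrow\VV\times\VV$. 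Then, applying preservation to the pullback square that realizes $\uparr\times\upars$ as the intersection $(\uparr\times\VV)\cap(\VV\times\upars)$ shows that $w$ lies in the image of $F$ applied to $\uparr\times\upars\hookrightarrow\VV\times\VV$. Finally, since $\tensor$ is monotone in each argument (a consequence of its distributivity over joins), its restriction to $\uparr\times\upars$ corestricts along $\truers\colon\uparrs\hookrightarrow\VV$, so applying $F$ yields $F(\tensor)(w)\in F(\uparrs)$. Joining over all admissible $r,s$ and invoking distributivity once more completes the verification of $(Q_\tensor)$.

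With both conditions on $\evcan$ established, Proposition~\ref{propCharEvMapWellBeh} immediately gives $\liftFcan(p\tensor q)\ge\liftFcan(p)\tensor\liftFcan(q)$ and $\liftFcan(\constPred{X})\ge\constPred{X}$. I expect the main obstacle to be the double-lifting step inside $(Q_\tensor)$, which is exactly where the hypothesis that $F$ preserves weak pullbacks is needed (beyond its earlier role in making $\evcan$ monotone, and hence $\liftFcan$ a genuine fibred lifting); everything else is routine manipulation of up-sets and quantale laws. As a shortcut, one could instead appeal to~\cite[Theorem~3.3(b)]{h:closed-objects}.
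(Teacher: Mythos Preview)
Your proposal is correct and follows the route the paper itself advertises in the main text: verify conditions $(Q_\tensor)$ and $(Q_k)$ for $\evcan$, then invoke Proposition~\ref{propCharEvMapWellBeh}. The appendix proof in the paper instead expands both items directly at the level of arbitrary $p,q\colon X\to\VV$ and $t\in FX$, but the heart of the argument is identical to yours: one reduces to showing that $F\pi_1(w)\in F(\uparr)$ and $F\pi_2(w)\in F(\upars)$ force $F(\tensor)(w)\in F(\uparrs)$, and one finishes via the factorisation of $\tensor$ through $\truers$.

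The only technical variation is in how the element of $F(\uparr\times\upars)$ is produced. You perform three explicit weak-pullback lifts (along $\pi_1$, along $\pi_2$, then along the intersection square), whereas the paper applies Lemma~\ref{lem:wpb-naturality} once to the naturality square of $\langle F\pi_1,F\pi_2\rangle$ at $(\truer,\trues)$ to get the lift in a single step. Since Lemma~\ref{lem:wpb-naturality} is itself proved by pasting three weak pullbacks, the two arguments unwind to the same thing; the paper's packaging is slightly tidier, while yours makes the individual uses of weak-pullback preservation more visible.
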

\AP An immediate consequence of
Proposition~\ref{prop:can-lifting-is-well-behaved} and of
Theorem~\ref{thm:restricting-Wasserstein} is that the Wasserstein
lifting \phantomintro{canonical Wasserstein lifting} $\intro\barFcan$
that corresponds to $\liftFcan$ restricts to a lifting of $F$ to both
$\Vcat$ and $\Vcatsym$.

\section{Quantitative up-to techniques}
\label{sec:up-to-techniques}

The fibrational constructions of the previous section provides a
convenient setting to develop an abstract theory of quantitative
up-to techniques. The coinductive object of interest is the greatest
fixpoint of a monotone map $b$ on $\Vrel$, hereafter denoted by
$\nu b$.  Recall that an up-to technique, namely a monotone map $f$ on
$\Vrel$, is \kl{sound} whenever $d \le b(f(d))$ implies $d\le \nu b$,
for all $d\in\Vrel_X$; it is \kl{compatible} if
$f\circ b \le b\circ f$ in the pointwise order. It is well-known that
compatibility entails soundness. Another useful property is:
\begin{equation}\label{eq:inclusion}
  \text{if }f \text{ is \kl{compatible}, then }f(\nu b)\leq \nu b\,.
\end{equation}

Following~\cite{bppr:up-to-fibration}, we assume hereafter that $b$
can be seen as the composite
\begin{equation}
  \label{eq:metric-b-fib}
  \begin{tikzcd}
    b\colon\fibreRel{X}\ar[r,"\barF"]&\fibreRel{FX}\ar[r,"\reindexnobrk{\xi}"]&\fibreRel{X}\,.
  \end{tikzcd}
\end{equation}
where $\xi\colon X\to FX$ is some coalgebra for
$F\colon \Set \to \Set$. \AP When $F$ admits a final coalgebra
$\omega\colon \Omega\to F\Omega$, the unique morphism
$! \colon X \to \Omega$ induces the \intro{behavioural closure} up-to
technique
\begin{equation}
  \label{eq:up-to-beh}
  \begin{tikzcd}
    \cbhv\colon\fibreRel{X}\ar[r,"\directImage{!}"]&\fibreRel{\Omega}\ar[r,"\reindexnobrk{!}"]&\fibreRel{X}
  \end{tikzcd} 
\end{equation}
where
$\cbhv(p)(x,y)=\bigvee\{p(x',y') \mid \ !(x)=\ !(x') \text{ and }
!(y)=\ !(y')\}$. For $\VV=2$, behavioural closure corresponds to the
usual up-to behavioural equivalence (bisimilarity). \AP Other
immediate generalisations are the \intro{up-to reflexivity} ($\cref$),
\intro{up-to transitivity} ($\ctrn$) and \intro{up-to symmetry}
($\csym$) techniques. Whenever $\barF$ is obtained through the
Wasserstein construction of some $\liftF$ satisfying the conditions of
Theorem~\ref{thm:restricting-Wasserstein}, these techniques are
compatible \full{(see Appendix \ref{app:basic} for more
  details)}\short{(see
  \cite{bkp:up-to-behavioural-metrics-fibrations-arxiv} for more
  details)}.

\AP As usual, compatible techniques can be combined together either by
function composition ($\circ$) or by arbitrary joins ($\bigvee$). For
instance compatibility of \intro{up-to metric closure}, defined as the
composite $\cmtr=\ctrn \circ \csym \circ \cref$ follows immediately
from compatibility of $\ctrn$, $ \csym$ and $\cref$.  In $\Vrel$ there
is yet another useful way to combine up-to techniques -- called
\emph{chaining} in~\cite{cgv:up-to-bisim-metrics} -- and defined as
the \kl{composition} ($\comp$) of
relations.

\begin{restatable}{prop}{proptns}\label{prop:tns}
  Let $f_1,f_2\colon \Vrel_X \to \Vrel_X$ be compatible
  with respect to $b\colon \Vrel_X \to \Vrel_X$.  If
  $\barF(p\comp q) \geq \barF(p) \comp \barF(q)$ for all
  $p,q\in \Vrel_X$, then $f_1\comp f_2$ is $b$-compatible.
\end{restatable}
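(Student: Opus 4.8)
The plan is to reduce the statement to one lemma: that $b$ lax-preserves relation composition, i.e.\ $b(p)\comp b(q)\le b(p\comp q)$ for all $p,q\in\Vrel_X$. Granting this, $b$-compatibility of $f_1\comp f_2$ follows by a short chain of inequalities.

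First I would record that $\comp\colon\Vrel_X\times\Vrel_X\to\Vrel_X$ is monotone in each argument: this is immediate from the definition $(p\comp q)(x,y)=\bigvee\{p(x,z)\tensor q(z,y)\mid z\in X\}$ together with the monotonicity of $\tensor$ and of arbitrary joins in $\VV$. In particular $f_1\comp f_2$, the map $d\mapsto f_1(d)\comp f_2(d)$, is a well-defined monotone endomap of $\Vrel_X$, so the statement makes sense.

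Next, the key lemma. Since $b=\reindexnobrk{\xi}\circ\barF$, and using the hypothesis $\barF(p\comp q)\ge\barF(p)\comp\barF(q)$ together with monotonicity of $\reindexnobrk{\xi}$, it is enough to show that every reindexing functor lax-preserves composition, i.e.\ $\reindexnobrk{\xi}(r)\comp\reindexnobrk{\xi}(s)\le\reindexnobrk{\xi}(r\comp s)$ for $r,s\in\Vrel_{FX}$. This is a direct computation with the formulas for $\comp$ and for reindexing: on the left we get $\bigvee_{z\in X}r(\xi x,\xi z)\tensor s(\xi z,\xi y)$, and on the right $\bigvee_{w\in FX}r(\xi x,w)\tensor s(w,\xi y)$; the left-hand supremum ranges over the subfamily of the right-hand one indexed by $w=\xi(z)$, hence is bounded by it. Chaining these facts,
\[
 b(p)\comp b(q)=\reindexnobrk{\xi}(\barF p)\comp\reindexnobrk{\xi}(\barF q)\le\reindexnobrk{\xi}\big(\barF p\comp\barF q\big)\le\reindexnobrk{\xi}\big(\barF(p\comp q)\big)=b(p\comp q)\,.
\]

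Finally, for any $d\in\Vrel_X$ I would conclude
\[
 (f_1\comp f_2)(b(d))=f_1(b(d))\comp f_2(b(d))\le b(f_1(d))\comp b(f_2(d))\le b\big(f_1(d)\comp f_2(d)\big)=b\big((f_1\comp f_2)(d)\big)\,,
\]
where the first inequality uses compatibility of $f_1$ and $f_2$ together with monotonicity of $\comp$, and the second is the lemma above. The only mildly delicate point is the reindexing lemma, which is a routine manipulation of joins in $\VV$; the real content is the hypothesis on $\barF$, which is precisely what makes $b$ lax-preserve $\comp$.
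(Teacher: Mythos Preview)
Your proof is correct. The paper's proof is a one-line appeal to \cite[Proposition~6.3]{bppr:general-coinduction-up-to}, an abstract compatibility result for operations that $b$ lax-preserves; you have instead unfolded that argument in this concrete instance.

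The decomposition you chose---first showing that reindexing $\reindexnobrk{\xi}$ lax-preserves $\comp$ (a one-line computation with joins), then combining this with the hypothesis on $\barF$ to get $b(p)\comp b(q)\le b(p\comp q)$, and finally chaining with the compatibility of $f_1,f_2$---is exactly the content of the cited proposition specialised to the present setting. What you gain is self-containment and transparency about \emph{why} the hypothesis on $\barF$ is the right one: it is precisely what is needed to make $b$ lax-monoidal for $\comp$. What the paper gains by citing the general result is uniformity: the same Proposition~6.3 is also invoked for the compatibility of $diag$ and $inv$ (see the proof just before Proposition~\ref{prop:tns} in Appendix~\ref{app:basic}), so one lemma covers all three cases at once rather than repeating the reindexing/chaining pattern.
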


\AP In the reminder of this section, we focus on quantitative
generalizations of the \intro{up-to contextual closure} technique,
which given an \kl[algebra for a functor]{algebra}
$\alpha\colon TX\to X$, is seen as the composite:
\begin{equation}
  \label{eq:metric-f-fib}
  \begin{tikzcd}
    f\colon\fibreRel{X}\ar[r,"\barT"]&\fibreRel{TX}\ar[r,"\directImage{\alpha}"]&\fibreRel{X}\,.
  \end{tikzcd}
\end{equation}
\begin{ex}\label{ex:ctx}
  Consider a signature $\Sigma$ and the algebra of $\Sigma$-terms with
  variables in $X$
  $\mu_X \colon T_{\Sigma} T_{\Sigma} X \to T_{\Sigma}X$. The
  contextual closure
  $\cctx\colon \fibreRel{T_{\Sigma}X} \to \fibreRel{T_{\Sigma}X}$ is
  defined as in~\eqref{eq:metric-f-fib} by taking the canonical
  lifting of the functor $T_{\Sigma}$. For all $t_1,t_2\in T_\Sigma X$
  and $d\in \fibreRel{T_{\Sigma}X}$
  \begin{equation*} \label{eq:ctx} ctx(d)(t_1,t_2)=\bigvee_{C} \{
    \bigwedge_j d(s_j^1,s_j^2) \mid t_i=C(s_0^i, \dots, s_n^i) \}
  \end{equation*}
  where $C$ ranges over arbitrary contexts and $s_j^i$ over
  terms. Notice that for $\VV=2$, this boils down to the usual notion
  of contextual closure of a relation. \full{All details are in
    Appendix~\ref{app:ctx}.}\short{Details can be found in
    \cite{bkp:up-to-behavioural-metrics-fibrations-arxiv}.}
\end{ex}

\begin{ex}\label{ex:cvx}
  Let $\VV = [0,1]$. In~\cite{cgv:up-to-bisim-metrics}, the \intro[up-to convex
  closure]{convex closure} of $d \in \fibreRel{\Distr(X)}$ is defined
  for $\Delta,\Theta \in \Distr(X)$ as 
  \begin{equation*} \label{eq:cvx} \ccvx(d)(\Delta,\Theta) = \inf
    \{\sum_i p_i\cdot d(\Delta_i,\Theta_i) \mid \Delta = \sum_i
    p_i\cdot \Delta_i, \Theta = \sum_i p_i\cdot \Theta_i\}
  \end{equation*}
  where $\Delta_i,\Theta_i\in\Distr(X)$, $p_i\in[0,1]$.
  This can be obtained as in~\eqref{eq:metric-f-fib} by taking the
  lifting of $\Distr$ from Example~\ref{ex:wasserstein} and the
  algebra given by the multiplication
  $\mu_X\colon \Distr \Distr X \to \Distr X$. \full{All details are in
    Appendix \ref{app:cvx}.}\short{Details can be found in
    \cite{bkp:up-to-behavioural-metrics-fibrations-arxiv}.}
\end{ex}

We consider next systems modelled as \kl[bialgebra]{bialgebras}
$(X,\alpha\colon TX\to X,\xi\colon X\to FX)$ for a natural
transformation $\distrLaw\colon T\circ F\To F\circ T$.  When $b$ and
$f$ are as in~\eqref{eq:metric-b-fib},
respectively~\eqref{eq:metric-f-fib}, we
use~\cite[Theorem~2]{bppr:up-to-fibration} to obtain
\begin{prop}
  \label{prop:soundness-follows-from-lift-distr}
  If there exists a lifting
  $\barDistr\colon\barT\circ\barF\To\barF\circ\barT$ of $\distrLaw$,
  then $f$ is \kl{$b$-compatible}.
\end{prop}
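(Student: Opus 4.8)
The plan is to derive the statement as an instance of \cite[Theorem~2]{bppr:up-to-fibration}, specialised to the \kl{bifibration} $\Vrel\to\Set$ of Section~\ref{sec:lifting-vrel-vcat}. All hypotheses of that theorem are available: $\barF$ and $\barT$ are \kl{lifting}s of the $\Set$-endofunctors $F$ and $T$ by Proposition~\ref{prop:Wasserstein-lifting}; $(X,\alpha,\xi)$ is a \kl{bialgebra} for $\distrLaw$ in the sense of Definition~\ref{def:bialgebra}; the direct images $\directImage{\alpha},\directImage{F\alpha}$ exist because $\Vrel\to\Set$ is a \kl{bifibration}; and, by assumption, $\barDistr$ is a \kl{lifting} of $\distrLaw$ of the right shape $\barT\circ\barF\To\barF\circ\barT$. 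Since $b$ and $f$ are assembled from exactly these data as in~\eqref{eq:metric-b-fib} and~\eqref{eq:metric-f-fib}, the cited theorem yields that $f$ is \kl{$b$-compatible}, i.e.\ $f\circ b\le b\circ f$.

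For completeness I would also record the underlying diagram chase. First note that, since every fibre of $\Vrel\to\Set$ is a poset, the existence of $\barDistr$ is equivalent to the family of inequalities $\barT\barF d\le\reindex{\distrLaw_X}(\barF\barT d)$ in $\fibreRel{TFX}$, one for each $d\in\fibreRel{X}$. Then, for a fixed $d\in\fibreRel{X}$, I would chase:
\begin{align*}
  f(b(d))
  &= \directImage{\alpha}(\barT(\reindexnobrk{\xi}(\barF d)))\\
  &\le \directImage{\alpha}(\reindex{T\xi}(\barT\barF d))\\
  &\le \directImage{\alpha}(\reindex{T\xi}(\reindex{\distrLaw_X}(\barF\barT d)))\\
  &= \directImage{\alpha}(\reindex{\distrLaw_X\circ T\xi}(\barF\barT d))\\
  &\le \reindexnobrk{\xi}(\directImage{F\alpha}(\barF\barT d))\\
  &\le \reindexnobrk{\xi}(\barF(\directImage{\alpha}(\barT d)))
  = b(f(d))\,.
\end{align*}
Here the first inequality is the canonical comparison~\eqref{eq:canonical-ineq} for the \kl{lifting} $\barT$ at the morphism $\xi$; the second applies the monotone maps $\reindex{T\xi}$ and $\directImage{\alpha}$ to the inequality encoding $\barDistr$; the middle equality uses that reindexing is functorial (the fibration is split), so $\reindex{T\xi}\circ\reindex{\distrLaw_X}=\reindex{\distrLaw_X\circ T\xi}$; the third inequality is the Beck--Chevalley comparison attached to the commuting square $F\alpha\circ\distrLaw_X\circ T\xi=\xi\circ\alpha$ of the bialgebra, obtained by transposing along $\directImage{\alpha}\dashv\reindexnobrk{\alpha}$ and using the unit of $\directImage{F\alpha}\dashv\reindexnobrk{F\alpha}$; and the last inequality is the direct-image form of~\eqref{eq:canonical-ineq} for $\barF$ at $\alpha$, namely $\directImage{F\alpha}\circ\barF\le\barF\circ\directImage{\alpha}$, again obtained by transposition. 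Together with the definition of compatibility this concludes the argument.

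I do not expect a genuine obstacle: the only non-formal ingredient is monotonicity, and everything else follows from the adjunctions $\directImage{g}\dashv\reindexnobrk{g}$ and functoriality of reindexing. The one point that requires care is the orientation of the various canonical $2$-cells --- a \kl{lifting} produces comparisons $\barF\circ\reindexnobrk{f}\le\reindex{Ff}\circ\barF$ and, dually, $\directImage{Ff}\circ\barF\le\barF\circ\directImage{f}$ for every $f$ --- together with the observation that it is precisely the bialgebra square, with no additional hypothesis (in particular it need not be a pullback), that supplies the comparison $\directImage{\alpha}\circ\reindex{\distrLaw_X\circ T\xi}\le\reindexnobrk{\xi}\circ\directImage{F\alpha}$ in exactly the orientation the chase requires.
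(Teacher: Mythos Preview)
Your proposal is correct and follows the same approach as the paper: the proposition is obtained directly from \cite[Theorem~2]{bppr:up-to-fibration}, and the paper says nothing more than that. Your additional diagram chase is accurate --- the orientations of the canonical comparisons and the mate attached to the bialgebra square are all as you describe --- so your write-up is simply a more detailed unfolding of the cited theorem.
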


The next theorem establishes sufficient conditions for the existence
of a lifting of $\distrLaw$. 
\begin{restatable}{theo}{thmLiftingDistrLawWass}
  \label{thm:Lifting-Distr-Law-Wass}
  Assume the natural transformation
  $\distrLaw\colon T\circ F\To F\circ T$ lifts to a natural
  transformation
  $\liftDistr\colon\liftT\circ\liftF\To\liftF\circ\liftT$ and that we
  have
  $\liftT\circ\directImage{\lambdaChBaseF{X}}
  \le\directImage{T\lambdaChBaseF{X}}\circ\liftT$.  Then $\distrLaw$
  lifts to a distributive law
  $\barDistr\colon\barT\circ\barF\To\barF\circ\barT$.
\end{restatable}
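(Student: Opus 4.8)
\medskip
\noindent\textbf{Proof plan.}\quad
A lifting of $\distrLaw$ to $\Vrel$ is a natural transformation $\barDistr\colon\barT\circ\barF\To\barF\circ\barT$ whose component at $p\in\fibreRel X$ lies above $\distrLaw_X\colon TFX\to FTX$; no monad-type law is required here, since, as in Definition~\ref{def:bialgebra}, ``distributive law'' only refers to the shape of $\distrLaw$. Since the fibres of $\Vrel\to\Set$ are posets and there is at most one morphism between two objects over a fixed map of $\Set$, such a component exists iff
\[
  \barT(\barF(p))\ \le\ \reindex{\distrLaw_X}\bigl(\barF(\barT(p))\bigr)\qquad\text{in }\fibreRel{TFX},
\]
and then naturality of $\barDistr$ is automatic. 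So I would reduce the theorem to proving this single inequality for every $p$.

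To evaluate both sides I would move along the change-of-base isomorphism $\isoPR\colon\Vrel\to\Vpred$, which intertwines reindexing, respectively direct image, along a map $g$ with reindexing, respectively direct image, along $\chbaseFunct g=g\times g$. Unfolding~\eqref{eq:def-V-rel-lift} and writing $q=\isoFibre X(p)\in\fibrePred{\chbaseFunct X}$, the two sides correspond in $\fibrePred{\chbaseFunct TFX}$ to
\[
  \directImage{\lambdaChBaseT{FX}}\,\liftT\,\directImage{\lambdaChBaseF X}\,\liftF(q)
  \qquad\text{and}\qquad
  \reindex{\chbaseFunct\distrLaw_X}\,\directImage{\lambdaChBaseF{TX}}\,\liftF\,\directImage{\lambdaChBaseT X}\,\liftT(q)\,.
\]
Two elementary facts drive the computation. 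First, pairing the defining formula $\lambdaChBaseF X=\langle F\pi_1,F\pi_2\rangle$ gives $\lambdaChBaseTF X=\lambdaChBaseT{FX}\circ T\lambdaChBaseF X$ and $\lambdaChBaseFT X=\lambdaChBaseF{TX}\circ F\lambdaChBaseT X$, while naturality of $\distrLaw$ on the projections $\pi_1,\pi_2\colon X\times X\to X$ yields the commuting square $\chbaseFunct(\distrLaw_X)\circ\lambdaChBaseTF X=\lambdaChBaseFT X\circ\distrLaw_{\chbaseFunct X}$. Second, for \emph{any} lifting $\liftF$ of a $\Set$-endofunctor $F$ to $\Vpred$ and any map $g$ one has the oplax Beck--Chevalley inequality $\directImage{Fg}\circ\liftF\le\liftF\circ\directImage g$: it is the mate of the canonical comparison $\liftF\circ\reindex g\le\reindex{Fg}\circ\liftF$ of~\eqref{eq:canonical-ineq}, obtained by composing with the unit $\id\le\reindex g\circ\directImage g$ and the counit $\directImage{Fg}\circ\reindex{Fg}\le\id$ and using monotonicity of $\liftF$ and $\directImage{Fg}$.

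With these in hand I would chain four steps, starting from the first predicate above. (i)~Apply the hypothesis $\liftT\circ\directImage{\lambdaChBaseF X}\le\directImage{T\lambdaChBaseF X}\circ\liftT$ and then contract $\directImage{\lambdaChBaseT{FX}}\circ\directImage{T\lambdaChBaseF X}=\directImage{\lambdaChBaseTF X}$, reaching $\directImage{\lambdaChBaseTF X}\,\liftT\liftF(q)$. (ii)~Instantiate the assumed lifting $\liftDistr$ at the predicate $q$ on the set $\chbaseFunct X$ to get $\liftT\liftF(q)\le\reindex{\distrLaw_{\chbaseFunct X}}\liftF\liftT(q)$. (iii)~Apply Beck--Chevalley for the commuting square of the first fact, in the form $\directImage{\lambdaChBaseTF X}\circ\reindex{\distrLaw_{\chbaseFunct X}}\le\reindex{\chbaseFunct\distrLaw_X}\circ\directImage{\lambdaChBaseFT X}$. (iv)~Split $\directImage{\lambdaChBaseFT X}=\directImage{\lambdaChBaseF{TX}}\circ\directImage{F\lambdaChBaseT X}$ and apply the oplax Beck--Chevalley inequality of the second fact with $g=\lambdaChBaseT X$. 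The composite bound is exactly the second predicate above, so transporting back along $\isoPR$ produces the components $\barDistr_p$ and hence $\barDistr\colon\barT\circ\barF\To\barF\circ\barT$.

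The bookkeeping around $\lambdaChBase$ and the compatibility of $\isoPR$ with the fibrational structure are routine. I expect the real subtlety to lie in steps~(iii)--(iv): one must feed in precisely the commuting $\distrLaw$-square, and observe that, although the hypotheses at the $\Vpred$-level are asymmetric — a genuine natural transformation $\liftDistr$, but only an inequality linking $\liftT$ with direct images along $\lambdaChBaseF X$ (which holds in particular whenever $\liftT$ is a \kl{fibred lifting}) — every other Beck--Chevalley comparison that is needed points in the direction that lets the chain close.
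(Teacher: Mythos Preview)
Your proof is correct and essentially coincides with the paper's argument. The appendix in fact contains an ``alternative proof'' that assembles exactly your four lax cells---the hypothesis on $\liftT$ and $\directImage{\lambdaChBaseF X}$, the inequality $\liftT\liftF\le\reindex{\distrLaw_{\chbaseFunct X}}\liftF\liftT$ encoding $\liftDistr$, the generic oplax inequality $\directImage{Fg}\circ\liftF\le\liftF\circ\directImage{g}$, and the Beck--Chevalley mate of the naturality square $\chbaseFunct\distrLaw_X\circ\lambdaChBaseTF X=\lambdaChBaseFT X\circ\distrLaw_{\chbaseFunct X}$---into one pasting diagram; your linear chain (i)--(iv) is the same diagram read cell by cell. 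The paper's main proof sketch packages the same computation a bit differently, factoring $\barT\circ\barF\Rightarrow\barF\circ\barT$ through the Wasserstein liftings $\barTF$ and $\barFT$ of the composite functors $T\circ F$ and $F\circ T$: step~(1) $\barT\circ\barF\Rightarrow\barTF$ is your step~(i), step~(2) $\barTF\Rightarrow\barFT$ (via Lemma~\ref{lem:Wass-is-Functorial}) is your steps~(ii)--(iii), and step~(3) $\barFT\Rightarrow\barF\circ\barT$ is your step~(iv).
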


\begin{proof}[Proof Sketch]
  \AP Notice that $\liftTcF:=\liftT\circ\liftF$ and
  $\liftFcT:=\liftF\circ\liftT$ are liftings of the composite functors
  $T\circ F$, respectively $F\circ T$. We will denote by
  $\intro\barTF$ and $\intro\barFT$ the corresponding \kl{Wasserstein
    liftings} obtained from $\liftTcF$, respectively $\liftFcT$ as in
  Section~\ref{sec:wass-lifting}.  We split the proof obligation into
  three parts:
  \[ \barT\circ \barF \stackbin[\bf (1)]{}{\Rightarrow} \barTF
    \stackbin[\bf (2)]{\tilde{\zeta}}{\Rightarrow} \barFT
    \stackrel[\bf (3)]{}{\Rightarrow} \barF\circ\barT\,. \]
  \begin{description}
  \item[(1)] lifts the identity natural transformation on $T\circ F$.
    Its existence is proved using the hypothesis
    $\liftT\circ\directImage{\lambdaChBaseF{X}}
    \le\directImage{T\lambdaChBaseF{X}}\circ\liftT$\full{, see
    Lemma~\ref{lem:tilde-sigma} in Appendix~\ref{app:liftingdistr}}.
  \item[(2)] is obtained by applying
    Lemma~\ref{lem:Wass-is-Functorial} to $\liftDistr$. Such liftings
    have already been studied
    in~\cite{bbkk:trace-metrics-functor-lifting}.
  \item[(3)] lifts the identity natural transformation on $F\circ T$.
    \full{It exists by Lemma~\ref{lem:sigma-tilde} in
      Appendix~\ref{app:liftingdistr}.}\qedhere
  \end{description}
\end{proof}

The first requirement of the previous theorem holds for the canonical
$\Vpred$-liftings under mild assumptions on $F$ and $T$.

\begin{restatable}{prop}{thmLiftNatTransCan}
  \label{thm:lift-nat-transf-canonical}
  Assume that $\zeta\colon T\circ F\Rightarrow F\circ T$ is a natural
  transformation and that, furthermore, $T$ preserves weak pullbacks
  and $F$ preserves intersections. Then $\zeta$ lifts to a natural
  transformation
  $\widehat{\zeta}\colon\liftTcan\circ\liftFcan\To\liftFcan\circ\liftTcan$.
\end{restatable}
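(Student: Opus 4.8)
The plan is to show that the canonical $\Vpred$-lifting $\liftFcan$ admits a simple pointwise description that makes the naturality square easy to check. Recall from Section~\ref{sec:Vpred-liftings} that a $\Vpred$-lifting of $F$ is equivalent to a monotone evaluation map $\ev_F\colon F\VV\to\VV$, and that for the canonical lifting this is $\evcan$. Since $\liftF$ and $\liftG$ correspond via $\liftF(p)=\ev\circ F(p)$, Lemma~\ref{lem:Wass-is-Functorial} (or rather its underlying $\Vpred$-level statement, obtained by the Yoneda argument in Proposition~\ref{prop:lift-correspondences}) tells us that a natural transformation $\distrLaw\colon T\circ F\To F\circ T$ lifts to $\widehat\distrLaw\colon\liftTcan\circ\liftFcan\To\liftFcan\circ\liftTcan$ if and only if the inequality
\[
  \ev^{TF}_{\mathsf{can}}\ \le\ \ev^{FT}_{\mathsf{can}}\circ\distrLaw_\VV
\]
holds between the canonical evaluation maps of the composite functors $T\circ F$ and $F\circ T$ on the object $\VV$. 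So the whole proof reduces to verifying this single inequality.

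First I would unwind the two canonical evaluation maps. For a composite functor $G\circ H$ the canonical evaluation map sends $w\in GH\VV$ to $\bigvee\{r\mid w\in GH(\uparr)\}$, where $\uparr=\{v\in\VV\mid v\ge r\}$ and $GH(\uparr)$ is the image of $GH(\truer)$. The key is to recognize this as an iterated construction: since $F$ preserves intersections and $T$ preserves weak pullbacks (hence monos, after a standard argument — $T$ preserving weak pullbacks together with $T$ preserving the terminal-like structure gives preservation of monos, or one works with the image factorization directly), the canonical lifting of $G\circ H$ coincides with $\liftGcan\circ\liftHcan$ at the level of predicates; concretely, $\ev^{GH}_{\mathsf{can}}$ factors through $G(\ev^{H}_{\mathsf{can}})$ followed by $\ev^{G}_{\mathsf{can}}$ up to the relevant coherence. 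The preservation hypotheses are exactly what is needed to make this factorization well-behaved: $F$ preserving intersections guarantees that the large join $\bigvee\{r\mid \dots\}$ defining $\evcan$ for $F$ is actually attained (the canonical predicate lifting of $F$ is given by a genuine pullback/intersection), and $T$ preserving weak pullbacks guarantees $\liftTcan$ is a fibred lifting, so that the direct images and reindexings involved commute as required.

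With both sides rewritten as nested applications of canonical liftings, the inequality $\ev^{TF}_{\mathsf{can}}\le\ev^{FT}_{\mathsf{can}}\circ\distrLaw_\VV$ becomes a diagram chase using naturality of $\distrLaw$ together with the universal (largest-such-$r$) property defining $\evcan$: given $w\in TF\VV$ with $w\in TF(\uparr)$, one pushes $w$ through $\distrLaw_\VV$ and uses naturality of $\distrLaw$ at the inclusion $\truer\colon\uparr\hookrightarrow\VV$ to conclude $\distrLaw_\VV(w)\in FT(\uparr)$, whence the same $r$ witnesses the right-hand join; taking suprema over all such $r$ gives the inequality. I expect the main obstacle to be the bookkeeping in the first step — establishing cleanly that $\ev^{TF}_{\mathsf{can}}$ and $\ev^{FT}_{\mathsf{can}}$ decompose as iterated canonical liftings — because this is where the two preservation hypotheses on $F$ and $T$ must be used precisely, and the asymmetry between them ($F$ needs intersections, $T$ needs weak pullbacks) reflects which functor is applied outermost in each composite. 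Once that decomposition is in hand, the naturality chase closing the argument is routine.
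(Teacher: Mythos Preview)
Your overall strategy matches the paper's: reduce via Lemma~\ref{lem:Wass-is-Functorial} to the inequality of evaluation maps, sandwich the easy naturality step $\ev^{TF}_{\mathsf{can}}\le\ev^{FT}_{\mathsf{can}}\circ\zeta_\VV$ between comparisons of the canonical evaluation of a composite and the composite of canonical evaluations. But your account of the decomposition step contains a real gap.

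You propose to establish \emph{equalities} $\ev^{TF}_{\mathsf{can}}=\evTcan\circ T(\evFcan)$ and $\ev^{FT}_{\mathsf{can}}=\evFcan\circ F(\evTcan)$. Only two \emph{inequalities} are needed, and they are not symmetric: you need $\evTcan\circ T(\evFcan)\le\ev^{TF}_{\mathsf{can}}$ on the $TF$ side and $\ev^{FT}_{\mathsf{can}}\le\evFcan\circ F(\evTcan)$ on the $FT$ side. The latter always holds (it is the commuting-square direction: $\evTcan$ restricts to a map ${\uparr}\to{\uparr}$, giving a commuting square which you apply $F$ to). The former is where \emph{both} hypotheses are spent, not one per composite as you suggest. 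Concretely: $F$ preserving intersections is what makes the square
\[
\begin{array}{ccc}
F(\uparr)&\hookrightarrow& F\VV\\
\big\downarrow\scriptstyle{\evFcan|_{\uparr}}&&\big\downarrow\scriptstyle{\evFcan}\\
\uparr&\hookrightarrow&\VV
\end{array}
\]
a \emph{weak pullback} (the set $\{s\mid t\in F(\upars)\}$ is downward closed and contains its supremum); then $T$ preserving weak pullbacks lets you apply $T$ to this square and still have a weak pullback, so from $T\evFcan(t)\in T(\uparr)$ you can lift $t$ to $TF(\uparr)$ and run your naturality chase. With only the stated hypotheses you cannot prove the missing direction $\evFcan\circ F(\evTcan)\le\ev^{FT}_{\mathsf{can}}$ (that would require $T$ to preserve intersections and $F$ to preserve weak pullbacks), so insisting on full equalities for both composites would leave you stuck. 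Once you adjust to the two correct inequalities and locate both hypotheses on the $TF$ side, your plan coincides with the paper's proof.
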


The next proposition establishes sufficient conditions for the second
hypothesis of Theorem~\ref{thm:Lifting-Distr-Law-Wass}. We need a
property on $\VV$ that holds for the quantales in
Example~\ref{ex:quantale} and was also assumed
in~\cite{h:closed-objects}. \AP Given $u,v\in\VV$ we write
$u\intro\VVlll v$ ($u$ is totally below $v$) if for every
$W\subseteq \mathcal{V}$, $v\le \bigvee W$ implies that there exists
$w\in W$ with $u\le w$.  The quantale $\VV$ is \intro{constructively
  completely distributive} iff for all $v\in\VV$ it holds that
$v = \bigvee\{u\in\VV\mid u\VVlll v\}$. In
\full{Appendix~\ref{app:ccdq}}\short{\cite{bkp:up-to-behavioural-metrics-fibrations-arxiv}}
we prove a more general statement in which the lifting of $T$ is not
assumed to be the canonical one, that is useful to guarantee the
result for interesting liftings, such as the one in
Example~\ref{ex:wasserstein}.

\begin{restatable}{prop}{propccdq}\label{prop:ccdq}
  Assume that $T$ preserves weak pullbacks and that $\VV$ is
  \kl{constructively completely distributive}. Then
  $\liftTcan\circ \directImage{f} \le
  \directImage{Tf}\circ \liftTcan$.\end{restatable}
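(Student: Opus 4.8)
The plan is to establish the pointwise inequality $\liftTcan(\directImage{f}(p))(u)\le\directImage{Tf}(\liftTcan(p))(u)$ for every $p\in\fibrePred{X}$, every $f\colon X\to Y$ in $\Set$ and every $u\in TY$, unfolding $\liftTcan(-)=\evcan\circ T(-)$, $\directImage{f}(p)(y)=\bigvee_{x\in f^{-1}(y)}p(x)$ and $\directImage{Tf}(h)(u)=\bigvee_{Tf(v)=u}h(v)$. It is worth noting first that the \emph{opposite} inequality $\directImage{Tf}\circ\liftTcan\le\liftTcan\circ\directImage{f}$ holds automatically: since $T$ preserves weak pullbacks the map $\evcan$ is monotone, so $\liftTcan$ is a \kl{fibred lifting} and hence commutes with reindexing; applying the monotone map $\liftTcan$ to the unit $p\le\reindexf(\directImage{f}(p))$ and transposing along $\directImage{Tf}\dashv\reindex{Tf}$ yields it. Thus the present statement is the non-formal half of a Beck--Chevalley equality, and this is precisely where constructive complete distributivity of $\VV$ enters.

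\textbf{Step 1 (a weak-pullback reformulation of $\liftTcan$).} For a predicate $\pi\colon Z\to\VV$ and $r\in\VV$, let $Z_{\ge r}=\{z\in Z\mid\pi(z)\ge r\}$ with inclusion $m_r\colon Z_{\ge r}\hookrightarrow Z$. I claim that
\[
  \liftTcan(\pi)(z)=\bigvee\{r\in\VV\mid\exists\,w\in T(Z_{\ge r}),\ T(m_r)(w)=z\}\,.
\]
The point is that $Z_{\ge r}$ is (isomorphic to) the pullback $Z\times_\VV\upar{r}$ of $\pi$ along the inclusion $\truer\colon\upar{r}\hookrightarrow\VV$, with $m_r$ the first projection. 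Since $T$ preserves weak pullbacks, $T(Z_{\ge r})$ surjects onto the pullback of $T(\pi)$ along $T(\truer)$; and $T(\pi)(z)\in T(\upar{r})$ says exactly that $z$ is the first component of an element of that pullback. This gives the nontrivial inclusion. The converse is immediate because $\pi\circ m_r$ factors through $\truer$, so $T(m_r)(w)=z$ forces $T(\pi)(z)\in T(\upar{r})$.

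\textbf{Step 2 (the main argument).} Fix $u\in TY$ and set $q=\directImage{f}(p)$. By Step 1, $\liftTcan(q)(u)$ is the join of those $r$ for which there is $w\in T(Y_{\ge r})$ with $T(n)(w)=u$, where $Y_{\ge r}=\{y\mid q(y)\ge r\}$ and $n\colon Y_{\ge r}\hookrightarrow Y$ is the inclusion. It suffices to bound each such $r$ by $\directImage{Tf}(\liftTcan(p))(u)$; since $\VV$ is \kl{constructively completely distributive}, $r=\bigvee\{s\mid s\VVlll r\}$, so it is enough to bound each $s$ with $s\VVlll r$. Now for every $y\in Y_{\ge r}$ we have $r\le q(y)=\bigvee_{x\in f^{-1}(y)}p(x)$, hence by the definition of $\VVlll$ there is some $x\in f^{-1}(y)$ with $p(x)\ge s$; choosing one such $x$ for each $y$ defines a map $g\colon Y_{\ge r}\to X_{\ge s}$ (with $X_{\ge s}=\{x\mid p(x)\ge s\}$ and inclusion $m\colon X_{\ge s}\hookrightarrow X$) satisfying $f\circ m\circ g=n$. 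Put $v=T(m\circ g)(w)\in TX$. Then $Tf(v)=T(f\circ m\circ g)(w)=T(n)(w)=u$, and $v=T(m)(Tg(w))$ with $Tg(w)\in T(X_{\ge s})$, so Step 1 applied to $p$ at value $s$ gives $\liftTcan(p)(v)\ge s$. Therefore $\directImage{Tf}(\liftTcan(p))(u)\ge\liftTcan(p)(v)\ge s$, as needed; taking joins over $s$ and then over $r$ finishes the proof.

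\textbf{Main obstacle.} I expect Step 1 to be the delicate part: one has to identify $Z_{\ge r}$ with the pullback of $\pi$ along $\truer$ and use weak-pullback preservation to turn ``lies in the image of $T(\truer)$'' into ``lies in the image of $T(m_r)$''. Once this is available, Step 2 is a clean use of constructive complete distributivity, with $\VVlll$ supplying exactly the fibrewise choice of preimages of $f$ with $p$-value $\ge s$ that can then be transported along $T$. The more general version mentioned after the statement, in which the lifting of $T$ need not be the canonical one, should follow the same Step 2 skeleton, with Step 1 replaced by a suitable abstract property of the evaluation map.
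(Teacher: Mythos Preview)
Your proof is correct. The paper takes a different route: it appeals to a more general Proposition (stated for arbitrary liftings $\liftT$, not only $\liftTcan$) which factors $f=m\circ e$ into an epi followed by a mono and treats the two cases separately. The surjective case rewrites $\directImage{e}(p)$ as a join $\bigvee_{g}\reindexg(p)$ over all sections $g$ of $e$, shows this join is ``constructively-convergent'', and then uses a lemma that $\liftTcan$ preserves such joins (proved via the same weak-pullback-along-$\truer$ argument you use in Step~1). The injective case is handled by a separate property of $\evcan$ (namely $\evcan(t)\neq\bot\Rightarrow t\in T(\VV\setminus\{\bot\})$). Your argument is more direct and unified: by passing to the level sets $Y_{\ge r}$ and $X_{\ge s}$ you guarantee that $f$ has the preimages needed to build the partial section $g$ (the edge case $f^{-1}(y)=\emptyset$ forces $r=\bot$, and then no $s\VVlll r$ exists), so you never need the epi--mono split or the $\bot$-reflection lemma. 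What the paper's decomposition buys is precisely the generalisation alluded to after the statement: it isolates two abstract conditions on $\liftT$ (preservation of constructively-convergent sups; the $\bot$-condition on $\ev$) under which the inequality holds for non-canonical liftings, whereas your Step~1 is tailored to $\evcan$.
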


Combining Theorem~\ref{thm:Lifting-Distr-Law-Wass} and
Propositions~\ref{prop:soundness-follows-from-lift-distr},~\ref{thm:lift-nat-transf-canonical}
and~\ref{prop:ccdq} we conclude:

\begin{restatable}{theo}{thmsummary}\label{thm:summary}
  Let $(X,\alpha\colon TX\to X,\xi\colon X\to FX)$ be a \kl{bialgebra}
  for a natural transformation
  $\zeta\colon T\circ F\Rightarrow F\circ T$.  If $\VV$ is
  \kl{constructively completely distributive}, $T$ preserves weak
  pullbacks and $F$ preserves intersections, then
  $f=\barTcan \circ \directImage{\alpha}$ is compatible
  with respect to $b=\barFcan \circ \reindexnobrk{\xi}$.
\end{restatable}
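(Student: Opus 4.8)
The plan is to obtain the statement purely by combining the results established earlier, with no genuinely new computation. I would instantiate Theorem~\ref{thm:Lifting-Distr-Law-Wass} with the canonical $\Vpred$-liftings $\liftT:=\liftTcan$ and $\liftF:=\liftFcan$, so that the associated Wasserstein $\Vrel$-liftings are precisely $\barTcan$ and $\barFcan$. The whole proof then reduces to checking the two hypotheses of Theorem~\ref{thm:Lifting-Distr-Law-Wass}.

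For the first hypothesis --- that $\distrLaw\colon T\circ F\To F\circ T$ lifts to a natural transformation $\liftDistr\colon\liftTcan\circ\liftFcan\To\liftFcan\circ\liftTcan$ --- I would simply invoke Proposition~\ref{thm:lift-nat-transf-canonical}, whose assumptions are exactly that $T$ preserves weak pullbacks and $F$ preserves intersections, both of which are among the hypotheses at hand. For the second hypothesis --- $\liftTcan\circ\directImage{\lambdaChBaseF{X}}\le\directImage{T\lambdaChBaseF{X}}\circ\liftTcan$ for every set $X$ --- I would specialise Proposition~\ref{prop:ccdq}, taking the morphism there to be $\lambdaChBaseF{X}\colon F(X\times X)\to FX\times FX$; this is precisely the inequality $\liftTcan\circ\directImage{g}\le\directImage{Tg}\circ\liftTcan$ for $g=\lambdaChBaseF{X}$, and Proposition~\ref{prop:ccdq} applies because $T$ preserves weak pullbacks and $\VV$ is constructively completely distributive.

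With both hypotheses verified, Theorem~\ref{thm:Lifting-Distr-Law-Wass} produces a distributive law $\barDistr\colon\barTcan\circ\barFcan\To\barFcan\circ\barTcan$ lifting $\distrLaw$. Finally, since $(X,\alpha\colon TX\to X,\xi\colon X\to FX)$ is a bialgebra for $\distrLaw$, Proposition~\ref{prop:soundness-follows-from-lift-distr} yields that $f$ --- built as in~\eqref{eq:metric-f-fib} from $\barTcan$ and $\alpha$ --- is compatible with respect to $b$ --- built as in~\eqref{eq:metric-b-fib} from $\barFcan$ and $\xi$ --- which is the claim.

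I do not expect any real obstacle at the level of this theorem: all the substance is already packaged in the cited results. The points requiring care are (i) recognising that the somewhat opaque technical hypothesis of Theorem~\ref{thm:Lifting-Distr-Law-Wass} is nothing more than Proposition~\ref{prop:ccdq} applied to reindexing along $\lambdaChBaseF{X}$, and (ii) checking that ``$F$ preserves intersections'' --- rather than the more familiar ``$F$ preserves weak pullbacks'' --- is indeed what is needed for Proposition~\ref{thm:lift-nat-transf-canonical} and, where relevant, for $\barFcan$ to behave as the Wasserstein lifting of $\liftFcan$. The genuinely harder content --- constructing the connecting natural transformations between $\barTcan\circ\barFcan$, the Wasserstein lifting $\barTF$ of the composite $T\circ F$, its counterpart $\barFT$, and $\barFcan\circ\barTcan$ --- lives inside the proof of Theorem~\ref{thm:Lifting-Distr-Law-Wass} and need not be reopened here.
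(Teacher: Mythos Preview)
Your proposal is correct and follows precisely the paper's own argument: immediately before the theorem the paper states that it is obtained by ``Combining Theorem~\ref{thm:Lifting-Distr-Law-Wass} and Propositions~\ref{prop:soundness-follows-from-lift-distr},~\ref{thm:lift-nat-transf-canonical} and~\ref{prop:ccdq}'', which is exactly the chain you spell out. Your identification of the second hypothesis of Theorem~\ref{thm:Lifting-Distr-Law-Wass} as the instance $f=\lambdaChBaseF{X}$ of Proposition~\ref{prop:ccdq} is the only non-obvious step, and you handle it correctly.
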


When $\alpha$ is the free algebra for a signature
$\mu_X\colon T_{\Sigma}T_\Sigma X \to T_{\Sigma}X$ (as in
Example~\ref{ex:ctx}), the above theorem guarantees that \kl{up-to
  contextual closure} is compatible with respect to $b$. By
\eqref{eq:inclusion}, the following holds.

\begin{restatable}{cor}{cornonex}\label{cor:nonex}
  For all terms $t_1,t_2$ and unary contexts $C$,
  $\nu b (t_1,t_2) \leq \nu b(C(t_1),C(t_2))$.
\end{restatable}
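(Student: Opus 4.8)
The plan is to derive the corollary from Theorem~\ref{thm:summary} together with the general principle~\eqref{eq:inclusion}. Instantiating Theorem~\ref{thm:summary} with $T=T_\Sigma$ the free monad of the signature, $\alpha=\mu_X\colon T_\Sigma T_\Sigma X\to T_\Sigma X$, and $\barTcan$ the canonical Wasserstein lifting of $T_\Sigma$, the resulting technique $f=\barTcan\circ\directImage{\mu_X}$ is precisely the up-to contextual closure $\cctx$ of Example~\ref{ex:ctx}. Since $T_\Sigma$ preserves weak pullbacks (it is the free monad on a polynomial functor), the hypotheses of Theorem~\ref{thm:summary} are met under our standing assumptions on $\VV$ and $F$, so $\cctx$ is $b$-compatible. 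By~\eqref{eq:inclusion} this yields $\cctx(\nu b)\le\nu b$ in $\fibreRel{T_\Sigma X}$.

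Next I would expand the explicit formula for $\cctx$ recorded in Example~\ref{ex:ctx}: for $d\in\fibreRel{T_\Sigma X}$ and $u_1,u_2\in T_\Sigma X$,
\[
  \cctx(d)(u_1,u_2)=\bigvee_{C'}\{\,\textstyle\bigwedge_j d(s_j^1,s_j^2)\mid u_i=C'(s_0^i,\dots,s_n^i)\,\},
\]
the join ranging over all contexts $C'$ and the corresponding substitutions. Now fix a unary context $C$ and terms $t_1,t_2$, and apply this with $d=\nu b$ at the pair $(C(t_1),C(t_2))$. Among the admissible decompositions is the one given by the context $C$ itself with a single hole filled by $t_1$, resp.\ $t_2$ --- that is, $C(t_i)=C(s_0^i)$ with $s_0^i=t_i$ --- whose contribution to the supremum is exactly $\nu b(t_1,t_2)$. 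Therefore $\cctx(\nu b)(C(t_1),C(t_2))\ge\nu b(t_1,t_2)$.

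Combining the two facts gives
\[
  \nu b(t_1,t_2)\ \le\ \cctx(\nu b)(C(t_1),C(t_2))\ \le\ \nu b(C(t_1),C(t_2)),
\]
which is the claim. There is no delicate computation here; the only thing to get right is the bookkeeping for unary contexts --- verifying that the trivial decomposition ``context $C$, one hole, filled with $t_i$'' really is one of the summands in the defining join of $\cctx$, so that the lower bound $\cctx(\nu b)(C(t_1),C(t_2))\ge\nu b(t_1,t_2)$ is legitimate. The substantive work is all packaged inside Theorem~\ref{thm:summary} and the implication~\eqref{eq:inclusion}.
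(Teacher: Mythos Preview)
Your proof is correct and follows essentially the same two-step argument as the paper: first use the explicit formula for $\cctx$ from Example~\ref{ex:ctx} to obtain $\nu b(t_1,t_2)\le\cctx(\nu b)(C(t_1),C(t_2))$, then use $b$-compatibility of $\cctx$ (via Theorem~\ref{thm:summary}) together with~\eqref{eq:inclusion} to get $\cctx(\nu b)\le\nu b$. You spell out a bit more detail than the paper does---in particular the verification that the unary-context decomposition appears as a summand in the defining join---but the structure is identical.
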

For $\VV=2$, since the canonical quantitative lifting coincides with
the canonical relational one, then $\nu b$ is exactly the standard
coalgebraic notion of behavioural
equivalence~\cite{hj:structural-induction-coinduction}. Therefore the
above corollary just means that behavioural equivalence is a
congruence.

For $\VV=[0,\infty]$ instead, this property boils down to
\emph{non-expansiveness} of contexts with respect to the behavioural
metric. It is worth to mention that this property often fails in
probabilistic process algebras when taking the standard Wasserstein
lifting which, as shown in Example~\ref{ex:wasserstein}, is \emph{not}
the canonical one. We leave as future work to explore the implications
of this insight.

\section{Example: distance between regular languages}
\label{sec:examples}

We will now work out the quantitative version of the up-to
congruence technique for non-deterministic automata. We consider the
\kl{shortest-distinguishing-word-distance} $\sdwDist$, proposed in
Section~\ref{sec:motivating-example}.  As explained, we will assume an
on-the-fly determinization of the non-deterministic automaton,
i.e., formally we will work with a coalgebra that corresponds to a
deterministic automaton on which we have a join-semilattice structure.

We explain next the various ingredients of the example:

\pseudopar{Coalgebra and algebra. } As outlined in
Section~\ref{sec:motivating-example} and Example~\ref{ex:NFA-as-bialg}
the determinization of an NFA with state space $Q$ is a \kl{bialgebra}
$(X,\alpha,\xi)$ for the \kl{distributive law}
$\distrLawNFA_X\colon \Pow(2\times X^A)\to 2\times (\Pow X)^A$, where
$X = \Pow Q$, $\alpha\colon\Pow X\to X$ is given by union and
$\xi\colon X\to 2\times X^A$ specifies the DFA structure of the
determinization.  Hence, we instantiate the generic results in the
previous section with $TX=\Pow X$, $FX=2\times X^A$ and $\distrLaw$
as defined in Example~\ref{ex:NFA-as-bialg}.

\pseudopar{Lifting the functors. } We take the quantale $\VV = [0,1]$
(Example~\ref{ex:quantale}) and consider the \kl{Wasserstein
  liftings} of the endofunctors $F$ and $T$ to $\Vrel$ corresponding
to the following evaluation maps:
\begin{itemize}
\item $\intro\evNFAF(b,f) := c\cdot \max_{a\in A} f(a)$, where
  $b\in \{0,1\}$, $f\colon A\to [0,1]$ and $c$ is the constant used
  in $\sdwDist$, and,
\item $\intro\evNFAT:=\evPcan=\sup$, the canonical evaluation map as in
  Example~\ref{ex:pow-funct-can-ev}.
\end{itemize}
These are \kl{monotone evaluation maps} that satisfy the hypothesis of
Theorem~\ref{thm:restricting-Wasserstein}\full{ (see
  Lemma~\ref{lem:well-behaved} in
  Appendix~\ref{sec:proofs-examples})}. Hence the corresponding
\kl{Wasserstein liftings} restrict to $\Vcat$. We computed the
\kl{Wasserstein lifting} of $T=\Pow$ in Example~\ref{ex:hausdorff}:
applying the lifted functor $\barT$ to a map
$d\colon X\times X\to [0,1]$, gives us the \kl{Hausdorff distance},
i.e., $\barT(d)(X_1,X_2) = d^H(X_1,X_2)$, where $X_1,X_2\subseteq X$
and $d^H$ denotes the \kl{Hausdorff metric} based on $d$.  On the
other hand, the Wasserstein lifting of $F$ corresponding to $\evNFAF$
associates to a metric $d\colon X\times X\to [0,1]$ the metric
$\barF(d)\colon FX\times FX \to [0,1]$ given by
\[
  ((b_1,f_1),(b_2,f_2))  \mapsto  \left\{
    \begin{array}{ll}
      1 & \mbox{if $b_1\neq b_2$} \\
      \max\limits_{a\in A} c\cdot \{ d(f_1(a),f_2(a)) \} &
      \mbox{otherwise} 
    \end{array}
    \right.
\]

\pseudopar{Fixpoint equation. } The map $b$ for the fixpoint equation
was defined in Section~\ref{sec:up-to-techniques} as the composite
$\reindexnobrk{\xi}\circ\barF$. Using the above lifting $\barF$, this
computation yields exactly the map $b$ defined in
~\eqref{eq:def-of-b-NFA}, whose largest fixpoint (smallest with
respect to the natural order on the reals) is the
\kl{shortest-distinguishing-word-distance} introduced in
Section~\ref{sec:motivating-example}.

\pseudopar{Up-to technique.} The next step is to determine the map
$f$ introduced in Section~\ref{sec:up-to-techniques} for the up-to
technique and defined as the composite
$\directImage{\alpha}\circ\barT$ on $\Vrel$. Combining the definition
of the \kl{direct image} functors on $\Vrel$ with the lifting $\barT$,
we obtain for a given a map $d\colon X\times X\to [0,1]$ that 
\[ f(d)(x_1,x_2) = \inf \{ d^H(X_1,X_2) \mid X_1,X_2 \subseteq X,
  \alpha(X_i) = x_i\} \]
To show that $f(d)(Q_1,Q_2) \leR r$ for two sets
$Q_1,Q_2\subseteq Q$ (i.e.,\ $Q_1,Q_2\in X$) and a constant $r$ we use
the following rules:
\begin{center}
    \begin{tabular}{c}
    $f(d)(\emptyset,\emptyset)\leR r$ \quad
    $\begin{array}{c}
      d(Q_1,Q_2) \leR r \\ \hline
      f(d)(Q_1,Q_2) \leR r
    \end{array}$ \quad
    $\begin{array}{c}
      f(d)(Q_1,Q_2) \leR r \quad f(d)(Q'_1,Q'_2) 
      \leR r \\ \hline
      f(d)(Q_1\cup Q'_1,Q_2\cup Q'_2) \leR r
    \end{array}$ 
  \end{tabular}
\end{center}

\pseudopar{Lifting of distributive law.} In order to prove that the
distributive law lifts to $\Vrel$ and hence that the up-to technique
is \kl{sound} by virtue of
Proposition~\ref{prop:soundness-follows-from-lift-distr}, we can prove
that the two conditions of Theorem~\ref{thm:Lifting-Distr-Law-Wass}
are met by the $\Vpred$ liftings of $F$ and $T$ corresponding to the
evaluation maps $\evNFAF$ and $\evNFAT$, see
\full{Lemma~\ref{lem:Lifting-of-Distributive-Law-NFA} in
  Appendix~\ref{sec:proofs-examples}}\short{\cite{bkp:up-to-behavioural-metrics-fibrations-arxiv}}.

Everything combined, we obtain a \kl{sound} up-to technique, which
implies that the reasoning in Section~\ref{sec:motivating-example} is
valid. Furthermore, as the example shows, the up-to technique can
significantly simplify behavioural distance arguments and speed up
computations.

\section{Related and future work}
\label{sec:conclusion}

Up-to techniques for behavioural metrics in a probabilistic setting
have been considered in~\cite{cgv:up-to-bisim-metrics} using a
generalization of the Kantorovich
lifting~\cite{DBLP:conf/concur/ChatzikokolakisGPX14}.
In Section~\ref{sec:up-to-techniques}, we have shown that the basic
techniques introduced in~\cite{cgv:up-to-bisim-metrics} (e.g., metric
closure, convex closure and contextual closure) as well as the ways to
combine them (composition, join and chaining) naturally fit within our
framework.
The main difference with our approach---beyond the fact that we
consider arbitrary coalgebras while the results
in~\cite{cgv:up-to-bisim-metrics} just cover coalgebras for a fixed
functor---is that the definition of up-to techniques and the criteria
to prove their soundness do not fit within the standard framework
of~\cite{ps:enhancements-coind-proofs}. Nevertheless, as illustrated
by a detailed comparison in \full{Appendix
  \ref{app:comparison}}\short{\cite{bkp:up-to-behavioural-metrics-fibrations-arxiv}},
the techniques of~\cite{cgv:up-to-bisim-metrics} can be reformulated
within the standard theory and thus proved sound by means of our
framework. An important observation brought to light by compositional
methodology inherent to the fibrational approach, is that for
probabilistic automata a bisimulation metric up-to convexity in the
sense of~\cite{cgv:up-to-bisim-metrics} is just a bisimulation metric,
see
\full{Lemma~\ref{lem:useless-up-to-conv}}\short{\cite{bkp:up-to-behavioural-metrics-fibrations-arxiv}}. Nevertheless,
the \kl{up-to convex closure} technique can find meaningful
applications in linear, trace-based behavioural metrics
(see~\cite{bbkk:trace-metrics-functor-lifting}).

The Wasserstein (respectively Kantorovich) lifting of the distribution
functor involving couplings was first used for defining behavioural
pseudometrics using final coalgebras
in~\cite{DBLP:conf/icalp/BreugelW01}.  Our work is based instead on
liftings for arbitrary functors, a problem that has been considered in
several works (see
e.g.~\cite{h:closed-objects,bkv:extensions-v-cat,bbkk:coalgebraic-behavioral-metrics,ks:codensity-liftings-monads}),
despite with different shades. The closest to our approach are
~\cite{h:closed-objects}
and~\cite{bbkk:coalgebraic-behavioral-metrics} that we discuss next.

In~\cite{h:closed-objects} Hofmann introduces a generalization of the
Barr extension (of $\Set$-functors to $\Rel$), namely he defines
extensions of $\Set$-monads to the bicategory of $\VV$-matrices, in
which $0$-cells are sets and the $\VV$-relations are $1$-cells. Some
of the definitions and techniques do overlap between the developments
in~\cite{h:closed-objects} and the results we presented in
Section~\ref{sec:wass-lifting}. However, there are also some (subtle)
differences which would not allow us to use off the shelf his
results.%\vspace{-0.2cm}

First, in order to reuse the results in~\cite{bppr:up-to-fibration},
we need to recast the theory in a fibrational setting, rather than the
bicategorical setting of~\cite{h:closed-objects}. The definition of
\emph{topological theory}~\cite[Definition~3.1]{h:closed-objects}
comprises what we call an \kl{evaluation map}, but which additionally
has to satisfy various conditions.  An important difference with what
we do is that the condition $(Q_{\bigvee})$ in the aforementioned
definition entails that the predicate lifting one would obtain from
such an evaluation map would be an \emph{opfibred lifting}, rather
than a \kl{fibred lifting} as in our setting. Indeed, the condition
$(Q_{\bigvee})$ can be equivalently expressed in terms of a natural
transformation involving the \emph{covariant} functor $P_\VV$, as
opposed to the \emph{contravariant} one $\VV^{-}$ that we used in
Section~\ref{sec:Vpred-liftings}. Lastly, in our framework we need to
work with arbitrary functors, not necessarily carrying a monad
structure.

In~\cite{bbkk:coalgebraic-behavioral-metrics} we provided a generic
construction for the Wasserstein lifting of a functor to the category
of pseudo-metric spaces, rather than on arbitrary quantale-valued
relations. The realisation that this construction is an instance as a
change-of-base situation between $\Vrel$ and $\Vpred$ allows us to
exploit the theory in \cite{bppr:up-to-fibration} for up-to techniques
and, as a side result, provides simpler (and cleaner) conditions for
the restriction $\Vcat$ (Theorem \ref{thm:restricting-Wasserstein}).

We leave for future work several open problems. What is a universal
property for the canonical \kl{Wasserstein lifting}?  Secondly, can
the \kl{Wasserstein liftings} presented here be captured in the
framework of~\cite{bkv:extensions-v-cat}
or~\cite{ks:codensity-liftings-monads}?  We also leave for future work
the development of up-to techniques for other quantales than $2$ and
$[0,1]$. We are particularly interested in weighted
automata~\cite{dkv:weighted-automata} over quantales and in
conditional transition systems, a variant of featured transition
systems.

\short{\bibliography{references}}

\full{

}

\short{\end{document}}

\appendix

\section{A detailed comparison with
  \cite{cgv:up-to-bisim-metrics}}\label{app:comparison}
In this appendix we discuss in details the relationship between our
work and~\cite{cgv:up-to-bisim-metrics} where a general framework of
up-to techniques for behavioural metric is introduced.

The systems of interest in~\cite{cgv:up-to-bisim-metrics} are
probabilistic automata which are
known~\cite{DBLP:journals/tcs/BartelsSV04} to be coalgebras for the
functor $\Pow(A \times \Distr(-))$. The behavioural metrics under
consideration are defined as the greatest fixed points of
\begin{equation}
  \label{eq:metric-b-fib2}
  \begin{tikzcd}
    b\colon\fibreRel{X}\ar[r,"K"]&\fibreRel{\Distr(X)}
    \ar[r,"\overline{\Pow(A \times -)}"]
    &\fibreRel{\Pow(\Distr(X))^A}\ar[r,"\reindexnobrk{\xi}"]&\fibreRel{X}
  \end{tikzcd}
\end{equation}
where $\xi \colon X \to \Pow(A \times \Distr(X))$ is a probabilistic
automaton, $\overline{\Pow(A \times -)}$ is the canonical lifting of
$\Pow(A \times -)$ (based on the Hausdorff distance,
Example~\ref{ex:hausdorff}) and $K$ is some lifting of $\Distr$.
Please note that the quantale $\VV$ in~\cite{cgv:up-to-bisim-metrics}
is $[0,\infty]$ (Example \ref{ex:quantale}) so the ordering used in
this paper and the one in~\cite{cgv:up-to-bisim-metrics} are always
inverted.

Observe that the definition of $b$ as in \eqref{eq:metric-b-fib2} is
an instance of \eqref{eq:metric-b-fib} by taking
$\barF= K \circ \overline{\Pow(A \times -)}$. It is worth to mention
that $K$ in~\cite{cgv:up-to-bisim-metrics} is not arbitrary, but it is
supposed to be an instance of a parametric construction called
\emph{generalized Kantorovic metric}. For a certain value of the
parameter, this coincides (via the well known duality) with the
Wasserstein metric from transportation theory
(Example~\ref{ex:wasserstein}).

The authors of~\cite{cgv:up-to-bisim-metrics} introduced several basic
techniques ---which can be easily defined in our framework, e.g.,
metric closure (Section \ref{sec:up-to-techniques}), convex closure
(Example~\ref{ex:cvx}) or contextual closure (Example~\ref{ex:ctx})---
and combine them via composition ($\circ$), supremum ($\bigvee$) and
chaining ($\comp$). In Proposition~\ref{prop:tns}, we have provided
sufficient conditions ensuring that $\comp$ preserves
compatibility. The same result for $\circ$ and $\bigvee$ follows
immediately from the standard theory of compatible up-to
techniques~\cite{ps:enhancements-coind-proofs}. This is not the case
for~\cite{cgv:up-to-bisim-metrics}, where these results need novel
proofs since the basic notions of up-to techniques and compatibility
(or respectfullness) do not fit within the standard lattice-theoretic
framework.

Indeed in~\cite{cgv:up-to-bisim-metrics}, an up-to technique is
defined to be some map $f$ of type
$\fibreRel{\Distr(X)}\to \fibreRel{\Distr(X)}$ and a bisimulation
up-to $f$ to be some $d \in \fibreRel{X}$ such that
$d \le (\reindexnobrk{\xi} \circ \overline{\Pow}) \circ f \circ K
d$.\footnote{Note that
  $d \le \reindexnobrk{\xi} \circ \overline{\Pow(A \times -)} d'$
  means, in the language of~\cite{cgv:up-to-bisim-metrics}, that $d$
  progresses to $d'$.} Soundness is defined in the expected way. The
notion to prove soundness (Definition 5 in
\cite{cgv:up-to-bisim-metrics}) amounts to the following, modulo the
usual difference between compatibility and respectfullness (that is
well-known and deeply discussed in several papers
\cite{bppr:general-coinduction-up-to,DBLP:conf/lics/Pous16})
\begin{defi}
  A monotone map
  $f \colon \fibreRel{\Distr(X)}\to \fibreRel{\Distr(X)}$ is a
  \emph{well-behaved} up-to technique iff there exists an
  $f'\colon \fibreRel{X} \to \fibreRel{X}$ such that
  \begin{enumerate}
  \item $f \circ K \leq K \circ f'$ and
  \item $f' \circ b \leq b \circ f' $.
  \end{enumerate}
\end{defi}
Observe that whenever $f$ is well-behaved, a bisimulation up-to $f$ in
the sense of~\cite{cgv:up-to-bisim-metrics}, can be transformed into a
bisimulation up-to $f'$ in our sense by mean of the first item:
\begin{equation}
  \label{eq:fprim}
  d \le (\reindexnobrk{\xi} \circ \overline{\Pow(A \times -)}) \circ f \circ K d \le (\reindexnobrk{\xi} \circ \overline{\Pow(A \times -)}) \circ K \circ f' d = bf'd\,.
\end{equation} 
Moreover, thanks to the second item, $f'$ is compatible w.r.t. $b$.

This observation shows that the techniques
in~\cite{cgv:up-to-bisim-metrics} can be reformulated within the
standard theory of~\cite{ps:enhancements-coind-proofs} and thus proved
compatible by means of our framework.

\begin{lem}
  \label{lem:useless-up-to-conv}
  Consider a probabilistic automaton and let $K$ denote a convex (in
  the sense of~\cite{cgv:up-to-bisim-metrics}) lifting of the
  probability distribution functor.  Then a bisimulation metric up-to
  convex closure in the sense of~\cite{cgv:up-to-bisim-metrics} is
  just a bisimulation metric, i.e., a post-fixpoint of $b$
  in~\eqref{eq:metric-b-fib2}.
\end{lem}

\begin{proof}
  As above, let $\xi\colon X\to \Pow(A \times \Distr(X))$ denote the
  coalgebra structure corresponding to the probabilistic
  automaton. The \kl{up-to convex closure} is defined as in
  Example~\ref{ex:cvx}. Recall that a bisimulation metric \kl{up-to
    convex closure} in the sense of~\cite{cgv:up-to-bisim-metrics} is
  a bisimulation metric $d$ such that $d$ \emph{progresses} to
  $\ccvx\circ K(d)$, written using the notation
  in~\cite[Definition~2]{cgv:up-to-bisim-metrics} as
  $d\rightarrowtail \ccvx\circ K(d)$. Spelling out that definition, we
  obtain that, in the quantale order (i.e., the reversed of the order
  on the reals used~\cite{cgv:up-to-bisim-metrics}) we have
  \begin{equation}
    \label{eq:comp-cat-1}
    d\le \reindexnobrk{\xi}\circ \overline{\Pow(A \times -)}\circ
    \ccvx\circ K(d)\,.
  \end{equation}
  On the other hand, the respectfulness of $\ccvx$---established
  via~\cite[Theorem~11]{cgv:up-to-bisim-metrics}---uses the fact that
  for all $d\in\Vrel_X$ we have that $K(d)$ is convex, hence the $f'$
  used above is simply the identity function on $\Vrel_{\Distr X}$.
  In other words we have
  \begin{equation}
    \label{eq:comp-cat-2}
    \ccvx\circ K(d)\le K(d)
  \end{equation}
  Combining~\eqref{eq:comp-cat-1} and~\eqref{eq:comp-cat-2} we obtain
  that
  \[
    d\le \reindexnobrk{\xi}\circ \overline{\Pow(A \times -)}\circ
    K(d)\,,
  \]
  or equivalently, that $d$ is simply a bisimulation metric.
\end{proof}

\section{Proofs and additional material}
\label{sec:proofs}

\subsection{Proofs and additional material for
  Section~\ref{sec:lifting-vrel-vcat}}
\label{sec:proofs-preliminaries}

We will use the the Beck-Chevalley condition for \kl{fibrations}
$p:\P\to\B$, which will be needed in some of the proofs. Assume we
have a commuting square:
\begin{equation}
  \label{eq:os-BCC}
  \xymatrix{
    A\ar[r]^-f\ar[d]_-u& B\ar[d]^-v\\
    C\ar[r]_-{g}& D\\
  }
\end{equation}
Since the \kl{fibration} is split we have a commuting diagram
\[
  \xymatrix{
    \P_A& \P_B\ar[l]_-{f^*}\\
    \P_C\ar[u]^-{u^*}& \P_D\ar[u]_-{v^*}\ar[l]^-{g^*}\\
  }
\]
Using the adjunctions $\directImage{f}\dashv \reindexf$ and
$\directImage{g}\dashv \reindexg$ we obtain the so-called mate of the
above square
\begin{equation}
  \label{eq:mate-BCC}
  \xymatrix{
    \P_A\ar[r]^-{\directImage{f}}\ar@{=>}[rd]& \P_B\\
    \P_C\ar[u]^-{\reindexnobrk{u}}\ar[r]_-{\directImage{g}}& \P_D\ar[u]_-{\reindexnobrk{v}}\\
  }
\end{equation}
obtained using the unit and the counit of the above adjunctions, as
the composite
\[
  \begin{tikzcd}[column sep=1em]
    \directImage{f}\reindexnobrk{u}\ar[rr,Rightarrow,"{\directImage{f}\reindexnobrk{u}\eta}"]&\
    &\directImage{f}\reindexnobrk{u}\reindexg\directImage{g}\ar[r,Rightarrow]&\directImage{f}\reindexf\reindexnobrk{v}\directImage{g}\ar[rr,Rightarrow,"{\varepsilon
      \reindexnobrk{v}\directImage{g}}"]&\
    &\reindexnobrk{v}\directImage{g}
  \end{tikzcd}
\]

\begin{defi}
  \label{def:beck-chevalley}
  We say that the square~\eqref{eq:os-BCC} has the
  \intro{Beck-Chevalley} condition if the mate~\eqref{eq:mate-BCC} is
  an isomorphism.
\end{defi}

\begin{ex}
  The bifibration $\Vpred\to\Set$ has the \kl{Beck-Chevalley}
  condition for weak pullback squares in $\Set$.  Essentially we have
  to show that if~\eqref{eq:os-BCC} is a weak pullback, then for every
  $p\in\Vpred_C$ and $b\in B$ we have
  \begin{equation}
    \label{eq:1}
    \bigvee\limits_{a\in f^{-1}(b)} p(u(a))= \bigvee\limits_{c\in g^{-1}(v(b))}p(c) 
  \end{equation}
  Proving $\le$ is easy (we just use that the square commutes), but
  for $\ge$ we need that~\eqref{eq:os-BCC} is a weak pullback.
\end{ex}

\subsection{Proofs and additional material for
  Section~\ref{sec:wass-lifting}}
\label{sec:proofs-wass-lifting}

\subsubsection{$\VV$-predicate liftings}
\label{app:predlift}

In order to state the following proposition we first have to spell out
what it means for an evaluation map to be monotone. For this, we first
define an (order) relation $\orderFV$ on $F\VV$.

\begin{defi}[Relation $\orderFV$ on $F\VV$]
  \label{def:orderFV}
  We define a relation $\intro{\orderFV}$ on $F\VV$: let
  $v_1,v_2\in F\VV$. We define $v_1\orderFV v_2$ whenever
  \[ \exists r\in F[\le] \mbox{ s.t. } F(\pi_1\circ o)(r) = v_1\mbox{
      and } F(\pi_2\circ o)(r) = v_2 \] The relation $\orderFV$ will
  also be denoted by $\le^F$ (order $\le$ lifted under $F$ via the
  standard relation lifting).
\end{defi}

According to \cite{bk:finitary-functors-set-preord-poset} relation
lifting transforms preorders into preorders whenever $F$ preserves
weak pullbacks (but not necessarily orders into orders).

\propLiftCorrespondences*

\begin{proof}
  The equivalence of the first two bullets is well-known in
  coalgebraic modal logic for $\VV=2$. For the sake of completeness we
  include here full details.
  
  $\liftF$ is a lifting of $F$ to $\Vpred$ if and only if the
  following two conditions are met for all sets $X$ and functions
  $f\colon X\to Y$:
  \begin{enumerate}
  \item\label{it-lift-1}
    $\liftFrest{X}\colon\fibrePred{X}\to\fibrePred{FX}$ is monotone,
    and,
  \item\label{it-lift-2} the inequality
    $\liftFrest{X}\circ \reindexf(R)\le\reindex{Ff}\circ\liftFrest{Y}$
    holds.
  \end{enumerate}

  These two conditions alone are equivalent to the laxness of the
  following square
  \[
    \begin{tikzcd}
      \VV^Y\ar[d,swap,"\liftFrest{Y}"]\ar[r,"f"]&\VV^X\ar[d,"\liftFrest{X}"]\ar[dl,color=white,"\textcolor{black}{\rotle}" description]\\
      \VV^{FY}\ar[r,swap,"Ff"]& \VV^{FX}
    \end{tikzcd}
  \]
  However, $\liftF$ is a \kl{fibred lifting} of $F$ if and only if
  item~\ref{it-lift-1} holds and the inequality in
  item~\ref{it-lift-2} above is in fact an equality. Hence $\liftF$ is
  a \kl{fibred lifting} if and only if the above square is actually
  commutative, which ammounts to the existence of a natural
  transformation $\gamma\colon\VV^{-}\to\VV^{F-}$ with each component
  $\gamma_X$ being \emph{monotone}.

  We have thus proved the equivalence of the two first bullets. Now,
  let us turn to the third one. By Yoneda lemma we know that natural
  transformations $\VV^{-}\to\VV^{F-}$ are in one-to-one
  correspondence with evaluation maps $\ev\colon F(\VV)\to\VV$. It
  remains to characterize the monotonicity condition.  We show that
  this is equivalent to requiring that $\ev_F\colon F\VV\to \VV$ is
  monotone for the order $\orderFV$ on $F\VV$ and $\le$ on $\VV$.

  \noindent ``$\Leftarrow$'' Assume that $\ev_F$ is monotone and take
  $f_1,f_2\colon X\to \VV$ such that $f_1\le f_2$.  This means that
  $\pair{f_1}{f_2}$ factors through $o$ as depicted below, where
  $u\colon X\to [\le]$ is defined as $u(x) = (f_1(x),f_2(x))$.
  \[
    \xymatrix{ X \ar[r]_-u \ar@/^1pc/[rr]^{\pair{f_1}{f_2}} & {[\le]}
      \ar@{^{(}->}[r]_-o & \VV\times \VV }
  \]
  If we apply $F$ to the diagram above and post-compose with
  $F\pi_1,F\pi_2,\ev_F$, we obtain the following diagram.
  \[
    \xymatrix{ FX \ar[r]_-{Fu} \ar@/^1pc/[rr]^{F\pair{f_1}{f_2}} &
      {F[\le]} \ar@{^{(}->}[r]_-{Fo} & F(\VV\times \VV)
      \ar@<1ex>[r]^-{F\pi_1} \ar@<-1ex>[r]_-{F\pi_2} & F\VV
      \ar[r]^-{\ev_F} & \VV }
  \]
  Let $t\in FX$. Our aim is to show $\liftF f_1(t) \le \liftF f_2(t)$,
  which implies $\liftF f_1 \le \liftF f_2$.

  First, define $r = Fu(t) \in F[\le]$. Now observe that
  $F(\pi_1\circ o)(r) = F(\pi_1\circ o\circ u)(t) = F(\pi_1\circ
  \pair{f_1}{f_2})(t) = Ff_1(t)$. Analogously,
  $F(\pi_2\circ o)(r) = Ff_2(t)$. Hence $Ff_1(t) \orderFV Ff_2(t)$.

  Using the monotonicity of $\ev_F$ we can conclude that
  \[\liftF f_1(t) = \ev_F(Ff_1(t)) \le \ev_F(Ff_2(t)) = \liftF
    f_2(t)\,.\]

  \medskip

  \noindent ``$\Rightarrow$'' Assume that $\liftF$ is monotone. In
  order to show monotonicity of $\ev_F$ take $v_1,v_2\in F\VV$ such
  that $v_1\orderFV v_2$.  This means that there exists $r\in F[\le]$
  such that $F(\pi_1\circ o)(r) = v_1$, $F(\pi_2\circ o)(r) = v_2$.

  Now consider $\pi_1\circ o$, $\pi_2\circ o\colon [\le]\to \VV$. It
  holds that $\pi_1\circ o\le \pi_2\circ o$ and with monotonicity of
  $\liftF$ we can conclude
  $\liftF(\pi_1\circ o) \le \liftF(\pi_2\circ o)$.

  Hence
  $\ev_F(v_1) = \ev_F(F(\pi_1\circ o)(r)) = \liftF(\pi_1\circ o)(r)
  \le \liftF(\pi_2\circ o)(r) = \ev_F(F(\pi_2\circ o)(r)) =
  \ev_F(v_1)$, i.e., we have shown that $\ev_F$ is monotone.
\end{proof}

\begin{lem}
  \label{lem:evfct-monad}
  Assume that $T$ is a monad and $\VV = T\objTruth$ a quantale as
  detailed above. Assume that there is a partial order $\sqsubseteq$
  on $\objTruth$ such that the lattice order $\le$ of the quantale is
  obtained by lifting $\sqsubseteq$ under $T$, i.e.,
  $\le\ =\ \sqsubseteq^T$ (as in Definition~\ref{def:orderFV}).  Then
  $\ev=\mu_{\objTruth}\colon (T\VV,\le^T) \to (\VV,\le)$ is monotone,
  and consequently corresponds to a fibred lifting $\liftT$ of $T$.
\end{lem}
  
\begin{proof}
  Let $t'_1,t'_2\in T\VV$ such that $t'_1\orderFV t'_2$, i.e.,\
  $t'_1\le^T t'_2$. We have to show that
  $\mu_{\objTruth}(t'_1) \le \mu_{\objTruth}(t'_2)$.

  Since $\le$ is obtained by lifting $\sqsubseteq$ under $T$ we can
  infer that there exists a witness function
  $w\colon \le\ \to T(\sqsubseteq)$ that assigns to every pair of
  elements $t_1,t_2\in \VV$ with $t_1\le t_2$ a witness
  $w(t_1,t_2)\in T(\sqsubseteq)$ with $T\pi_i(w(t_1,t_2)) =
  t_i$. Hence $T\pi_i\circ w = \pi'_i$, where
  $\pi_i\colon \sqsubseteq\ \to {\objTruth}$ and
  $\pi'_i\colon \le\ \to \VV$ are the usual projections.

  Since $t'_1\le^T t'_2$, there exists a witness $t'\in T(\le)$ with
  $T\pi'_i(t') = t'_i$.

  We show that $t = \mu_\sqsubseteq(Tw(t'))$ is a witness for
  $\mu_{\objTruth}(t'_1) \le \mu_{\objTruth}(t'_2)$. It holds that
  $T\pi_i\circ \mu_\sqsubseteq \circ Tw = \mu_{\objTruth}\circ TT\pi_i
  \circ Tw = \mu_{\objTruth}\circ T(T\pi_i\circ w) =
  \mu_{\objTruth}\circ T\pi'_i$, where the first equality holds since
  $\mu$ is a natural transformation.  This implies
  $T\pi_i(t) = (T\pi_i\circ \mu_\sqsubseteq \circ Tw)(t') =
  (\mu_{\objTruth}\circ T\pi'_i)(t') = \mu_{\objTruth}(t'_i)$.
\end{proof}

\subsubsection{From predicates to relations via
  Wasserstein}\label{app:predrel}

\propWassersteinLifting*

\begin{proof}
  To prove that $\barF$ is a well defined functor on $\Vrel$ it
  remains to show that $\barF p\le \reindex{Ff}(\barF q)$ whenever
  $p\le \reindexf q$ (for $f\colon X\to Y$).  From the definition of
  $\barF$ as given in~\eqref{eq:def-V-rel-lift}, we know that on each
  fibre $\barF$ is monotone, hence $\barF p\le\barF
  (\reindexf(q))$. Hence it suffices to show that
  $\barF (\reindexf(q))\le \reindex{F f}\circ\barF(q)$.

  This follows from the sequence of
  (in)equalities~\eqref{eq:prop-wass:1}-\eqref{eq:prop-wass:6}, where
  on each line we underlined the sub-expression that was rewritten and
  which we will explain in turn.
  \begin{align}
    \barF\circ \reindexf(q) & = \isoFibre{FX}^{-1} \circ
                              \directImage{\lambdaChBase_{X}}\circ\liftFrest{\chbaseFunct X}\circ \underline{ \isoFibre{X}\circ\reindexf} (q)\label{eq:prop-wass:1}\\
                            & =\isoFibre{FX}^{-1} \circ
                              \directImage{\lambdaChBase_{X}}\circ\underline{\liftFrest{\chbaseFunct X}\circ\reindex{\chbaseFunct f}}\circ  \isoFibre{Y}(q)\label{eq:prop-wass:2}\\
                            &\le \isoFibre{FX}^{-1} \circ
                              \underline{\directImage{\lambdaChBase_{X}}\circ\reindex{F\chbaseFunct f}}\circ \liftFrest{\chbaseFunct Y}\circ  \isoFibre{Y}(q)\label{eq:prop-wass:3}\\
                            &\le \underline{\isoFibre{FX}^{-1} \circ\reindex{F\chbaseFunct f}}\circ
                              \directImage{\lambdaChBase_{Y}}\circ \liftFrest{\chbaseFunct Y}\circ  \isoFibre{Y}(q)\label{eq:prop-wass:4}\\
                            &=  \reindex{F f}\circ\underline{\isoFibre{FY}^{-1}\circ
                              \directImage{\lambdaChBase_{Y}}\circ \liftFrest{\chbaseFunct Y}\circ  \isoFibre{Y}}(q)\label{eq:prop-wass:5}\\
                            &=  \reindex{F f}\circ\barF(q)\label{eq:prop-wass:6}
  \end{align}

  We obtained~\eqref{eq:prop-wass:1} and~\eqref{eq:prop-wass:6} using
  the definition of $\barF$. To derive the equalities
  in~\eqref{eq:prop-wass:2} and ~\eqref{eq:prop-wass:5} we used the
  fact that $\isoPR$ is a fibred lifting of $\chbaseFunct$. The
  inequality~\eqref{eq:prop-wass:3} follows from the fact that
  $\liftF$ is a \kl{lifting} of $F$ and hence we have the inequality
  \begin{equation}
    \label{eq:prop-wass:7}
    \liftFrest{\chbaseFunct X}\circ\reindex{\chbaseFunct f}\le
    \reindex{F\chbaseFunct f}\circ \liftFrest{\chbaseFunct Y}\,.    
  \end{equation}
  Finally the inequality~\eqref{eq:prop-wass:4} follows from the
  commutativity of the naturality squares of $\lambdaChBase$ and the
  \kl{Beck-Chevalley} condition (see
  Appendix~\ref{sec:proofs-preliminaries}):
  \begin{equation}
    \label{eq:prop-wass:8}
    \directImage{\lambdaChBase_{X}}\circ\reindex{F\chbaseFunct f}\le \reindex{F\chbaseFunct f}\circ
    \directImage{\lambdaChBase_{Y}}\,.
  \end{equation}

  Now let us focus on the second part of the proof. Since $\liftF$ is
  a \kl{fibred lifting} by assumption, then the
  inequality~\eqref{eq:prop-wass:7} becomes an equality. When the
  functor $F$ preserves weak pullbacks, then by
  Lemma~\ref{lem:wpb-naturality} we know that the naturality squares
  of $\lambdaChBase$ are weak pullbacks. Hence, since the fibration
  $\Vrel$ has the \kl{Beck-Chevalley} property for weak pullback
  squares, it follows that~\eqref{eq:prop-wass:8} is also an
  equality. We obtain that all the
  inequalities~\eqref{eq:prop-wass:1}-\eqref{eq:prop-wass:6} are in
  fact equalities. This amounts to the fact that $\barF$ is a
  \kl{fibred lifting}.
\end{proof}

We now prove Lemma~\ref{lem:Wass-is-Functorial}:

\lemWassIsFunctorial*

\begin{proof} The existence (and in this case uniqueness) of the
  lifting $\liftDistr$ is equivalent to the fact that
  $\liftFrest{X}\le\reindex{\distrLaw_X}\circ\liftGrest{X}$ for all
  $X$. This is fairly standard, but we include here an explanation for
  the sake of completeness. If $\liftDistr$ exists, then for all
  $p\in\fibrePred{X}$ we have the next diagram, where the dashed arrow
  exists and is unique by the universal property in
  Definition~\ref{def:fibration}.
  \[
    \begin{tikzcd}[column sep=4em,row sep=2em]
      \liftF(p)\ar[r,"\liftDistr_{p}"]\ar[d,dashed,"\exists !"] &\liftG(p)\\
      \reindex{\distrLaw_X}(\liftG(p))\ar[ur,swap,"\cartLift{\distrLaw_X}{\liftG(p)}"] & \\
      & \\
      X\ar[r,"\distrLaw_{X}"]& Y
    \end{tikzcd}
  \]
  Since the fibres in $\Vpred$ are posets, this means that
  $\liftF(p)\le\reindex{\distrLaw_X}\circ\liftG(p)$, since there is a
  unique morphism in the fibre from $\liftF(p)$ to $\liftG(p)$.  For
  the same reason any two liftings of $\distrLaw$ must
  coincide. Conversely, if the inequality
  $\liftF(p)\le\reindex{\distrLaw_X}\circ\liftG(p)$ holds, we compose
  with $\cartLift{\distrLaw_X}{\liftG(p)}$ in order to obtain
  $\liftDistr_{p}$.

  We have to show that
  $\barFrest{X}\le\reindex{\distrLaw_X}\circ\barGrest{X}$.  We obtain:
  \begin{eqnarray*}
    \reindex{\distrLaw_X}\circ \barG & = & \reindex{\distrLaw_X}\circ \isoFibre{GX}^{-1} \circ
                                           \directImage{\lambdaChBaseG{X}} \circ {\liftG} \circ \isoFibre{X} \\
                                     & = &  \isoFibre{FX}^{-1} \circ \reindex{\distrLaw_X\times \distrLaw_X}\circ 
                                           \directImage{\lambdaChBaseG{X}} \circ {\liftG} \circ \isoFibre{X} \\
                                     & \ge &  \isoFibre{FX}^{-1} \circ 
                                             \directImage{\lambdaChBaseF{X}} \circ \reindex{\distrLaw_{X\times X}}\circ {\liftG} \circ
                                             \isoFibre{X} \\
                                     & \ge & \isoFibre{FX}^{-1}\circ \directImage{\lambdaChBaseF{X}}\circ {\liftF}
                                             \circ \isoFibre{X} \\
                                     & = & \barF
  \end{eqnarray*}
  The inequality on the third line follows from the fact that the
  square below commutes (this follows from naturality of $\distrLaw$
  and uniqueness of mediating morphisms into the product)
  \[
    \xymatrix{ F(X\times X) \ar[r]^{\lambdaChBaseF{X}}
      \ar[d]_{\distrLaw_{X\times X}} &
      FX \times FX \ar[d]^{\distrLaw_X\times \distrLaw_X} \\
      G(X\times X) \ar[r]_{\lambdaChBaseG{X}} & GX\times GX }
  \]
  and that this implies
  $\directImage{\lambdaChBase_X^F}\circ\reindex{\distrLaw_{X\times X}}
  \le \reindex{\distrLaw_X\times \distrLaw_X}\circ
  \directImage{\lambdaChBase_X^G}$ (see
  Appendix~\ref{sec:proofs-preliminaries}).

  To summarize, the proof of the first part of the lemma follows from
  the next lax diagram, by composing with the isomorphisms
  $\isoFibre{X}$ and $\isoFibre{FX}^{-1}$.
  \[
    \begin{tikzcd}[column sep=1em]
      \fibrePred{\Delta X}\ar[rr,"\liftF"]\ar[rrdd,bend right,swap,"\liftG"]&&    \fibrePred{F\Delta X}\ar[rr,"\directImage{\lambdaChBaseF{X}}"]&&    \fibrePred{\Delta FX}\\
      &\rotlen& &\rotlen& \\
      && \fibrePred{G\Delta
        X}\ar[rr,swap,"\directImage{\lambdaChBaseG{X}}"]\ar[uu,"\reindex{\distrLaw_{\Delta
          X}}"]&& \fibrePred{\Delta GX}\ar[uu,swap,"\reindex{\Delta
        \distrLaw_{X}}"]
    \end{tikzcd}
  \]
  It remains to prove that ${\liftF} \le \distrLaw^*_X\circ {\liftG}$
  is equivalent to $\ev_F\le \ev_G\circ \distrLaw_\VV$. The
  implication from left to right is obtained by setting $X = \VV$ and
  applying the functors on both sides to $\id_\VV$. We get the other
  direction by taking $p\colon X\to \VV$ and computing
  $(\reindex{\distrLaw_X}\circ {\liftG})(p) = \ev_G\circ Gp\circ
  \distrLaw_X = \ev_G \circ \distrLaw_V\circ Fp \ge \ev_F\circ Fp =
  {\liftF}(p)$. Note that this uses the naturality of $\distrLaw$.
\end{proof}

We will now focus on proving
Theorem~\ref{thm:restricting-Wasserstein}.

\thmRestrictionOfWasserstein*

The proof is immediate from
Lemmas~\ref{lem:pres-refl},~\ref{lem:pres-trans}
and~\ref{lem:lifting-preserves-symmetry} which we prove next.

\AP Let us denote by $\constPred{X}\colon X\to\VV$ the predicate on
$X$ constant to $1$.  Let $\diago{X}:X\to X\times X$ be the
\intro{diagonal function} on a set $X$. A relation $r:X\times X\to\VV$
is reflexive if and only if
\begin{equation}
  \label{eq:refl-rel-char}
  \reindexnobrk{\diago{X}}\circ \isoFibre{X}(r)\ge \constPred{X}\,.
\end{equation}
\begin{lem}
  \label{lem:pres-refl}
  Assume ${\liftF}$ is a lifting of $F$ such that
  \[
    {\liftF}(\constPred{X})\ge\constPred{FX}\,.
  \]
  Then $\barF(\diagRel{X}) \ge \diagRel{FX}$, hence $\barF$ preserves
  \kl{reflexive} relations.
\end{lem}

\begin{proof}
  Notice that
  \[\diagRel{X}=\isoFibre{X}^{-1}\directImage{\diago{X}}(\constPred{X})\,.\]
  Using this observation, we obtain that
  \begin{align}
    \barF(\diagRel{X}) & = \isoFibre{FX}^{-1}\circ\directImage{\lambdaChBase_{X}}\circ\liftF\circ \directImage{\diago{X}}(\constPred{X})\label{lemma42:1}\\
                       & \ge \isoFibre{FX}^{-1}\circ\directImage{\lambdaChBase_{X}}\circ\directImage{F\diago{X}}\circ \liftF (\constPred{X})\label{lemma42:2}\\
                       & \ge \isoFibre{FX}^{-1}\circ\directImage{\lambdaChBase_{X}}\circ\directImage{F\diago{X}} (\constPred{FX})\label{lemma42:3}\\
                       & = \isoFibre{FX}^{-1}\circ\directImage{\diago{FX}} (\constPred{FX})\label{lemma42:4}\\
                       & = \diagRel{FX}
  \end{align}
  In~\eqref{lemma42:1} we used the definition of $\barF$. For the
  inequality~\eqref{lemma42:2} we used that $\liftF$ is a lifting of
  $F$ and the mate of~\eqref{eq:canonical-ineq}, i.e.,
  $\liftF\circ \directImage{\diago{X}}\ge
  \directImage{F\diago{X}}\circ \liftF$. The
  inequality~\eqref{lemma42:3} is the hypothesis, while
  in~\eqref{lemma42:4} we used that
  $\lambdaChBase_{X}\circ F\diago{X}=\diago{FX}$.

  Preservation of \kl{reflexive} relations is now immediate. For
  $r\in\Vrel_{X}$ is \kl{reflexive} when $r\ge\diagRel{X}$.  Hence
  $\barF(r)\ge\barF(\diagRel{X})\ge\diagRel{FX}$, which entails that
  $\barF$ is \kl{reflexive}.
\end{proof}

We now turn our attention to the preservation of \kl{composition} of
reltions and of the \kl{transitivity} property.

We will use the notations $\pi_i:X\times X\times X\to X$ to denote the
$i^\mathit{th}$ projection on $X^3$ and
$\tau_i:FX\times FX\times FX\to FX$ to denote the $i^\mathit{th}$
projection on $(FX)^3$.

We will use the fact that the \kl{composition} $p\comp q$ of two
relations $p,q\colon X\times X\to \VV$ can be written as the composite
\begin{equation}
  \label{eq:rel-comp-fib}
  p\comp q=\isoFibre{X}^{-1}\directImage{\pair{\pi_1}{\pi_3}}(\reindexnobrk{\pair{\pi_2}{\pi_3}}(\isoFibre{X}q)
  \tensor
  \reindexnobrk{\pair{\pi_1}{\pi_2}}(\isoFibre{X}p))
\end{equation}

\begin{lem}
  \label{lem:pres-trans}
  Assume $F$ preserves weak pullbacks and ${\liftF}$ is a fibred
  lifting of $F$ such that
  \begin{equation}
    \label{eq:cond-F-trans}
    {\liftF}(u\tensor v)\ge{\liftF}(u)\tensor{\liftF}(v)
  \end{equation}
  Then $\barF(p\comp q) \ge \barF(p)\comp \barF(q)$, hence $\barF$
  preserves \kl{transitive} relations.
\end{lem}
\begin{proof}
  We denote by $\nu_X\colon F(X^3)\to (FX)^3$ the map defined as
  $\nu_X = \langle F\pi_1,F\pi_2,F\pi_3\rangle$.
  \begin{align}
    \barF(p\comp q)& = \isoFibre{X}^{-1}\directImage{\lambdaChBase_X}\liftF
                     \directImage{\pair{\pi_1}{\pi_3}}(\reindexnobrk{\pair{\pi_2}{\pi_3}}(\isoFibre{X}q)\tensor\reindexnobrk{\pair{\pi_1}{\pi_2}}(\isoFibre{X}p))\label{lemma43:1}\\
                   & \ge \isoFibre{X}^{-1}\directImage{\lambdaChBase_X}\directImage{F\pair{\pi_1}{\pi_3}}\liftF
                     (\reindexnobrk{\pair{\pi_2}{\pi_3}}(\isoFibre{X}q)\tensor\reindexnobrk{\pair{\pi_1}{\pi_2}}(\isoFibre{X}p))\label{lemma43:2}\\
                   & \ge \isoFibre{X}^{-1}\directImage{\lambdaChBase_X}\directImage{F\pair{\pi_1}{\pi_3}}\liftF(\reindexnobrk{\pair{\pi_2}{\pi_3}}(\isoFibre{X}q))\tensor\liftF(\reindexnobrk{\pair{\pi_1}{\pi_2}}(\isoFibre{X}p))\label{lemma43:3}\\
                   & = \isoFibre{X}^{-1}\directImage{\lambdaChBase_X}\directImage{F\pair{\pi_1}{\pi_3}}(\reindexnobrk{F\pair{\pi_2}{\pi_3}}\liftF(\isoFibre{X}q)\tensor\reindexnobrk{F\pair{\pi_1}{\pi_2}}\liftF(\isoFibre{X}p))\label{lemma43:4}\\
                   & = \isoFibre{X}^{-1}\directImage{\pair{\tau_1}{\tau_3}}\directImage{\nu_X}(\reindexnobrk{F\pair{\pi_2}{\pi_3}}\liftF(\isoFibre{X}q)\tensor\reindexnobrk{F\pair{\pi_1}{\pi_2}}\liftF(\isoFibre{X}p))\label{lemma43:5}\\
                   & =\isoFibre{X}^{-1}\directImage{\pair{\tau_1}{\tau_3}}(\reindexnobrk{\pair{\tau_2}{\tau_3}}\directImage{\lambdaChBase_X}(\liftF(\isoFibre{X}q))\tensor\reindexnobrk{\pair{\tau_1}{\tau_2}}\directImage{\lambdaChBase_X}(\liftF(\isoFibre{X}p)))\label{lemma43:6}\\
                   & =\isoFibre{X}^{-1}\directImage{\pair{\tau_1}{\tau_3}}(\reindexnobrk{\pair{\tau_2}{\tau_3}}(\isoFibre{X}\barF q)\tensor\reindexnobrk{\pair{\tau_1}{\tau_2}}(\isoFibre{X}\barF p))\label{lemma43:7}\\
                   & =\barF p\comp\barF q\label{lemma43:8}
  \end{align}
  The equalities~\eqref{lemma43:1},~\eqref{lemma43:7}
  and~\eqref{lemma43:8} follow by unraveling the definition of $\barF$
  and from~\eqref{eq:rel-comp-fib}.  The inequality~\eqref{lemma43:2}
  follows using by the mate of~\eqref{eq:canonical-ineq}.  The
  inequality~\eqref{lemma43:3} follows from the hypothesis on
  $\liftF$.  The equality~\eqref{lemma43:4} is obtained using the
  $\barF$ is a \kl{fibred lifting}.  To prove the equality
  in~\eqref{lemma43:5} we use that
  $\lambdaChBase_X\circ
  F\pair{\pi_1}{\pi_3}=\pair{\tau_1}{\tau_3}\circ\nu_X$.
  Finally~\eqref{lemma43:6} follows from Lemma~\ref{lem:BCC}.

  Assume $r\in\Vrel_X$ is transitive, that is, $r\comp r\le r$. Then
  we have $\barF r\comp \barF r\le\barF(r\comp r)\le \barF r$, hence
  $\barF r$ is transitive.
\end{proof}

\begin{lem}
  \label{lem:BCC}
  Assume $F$ preserves weak pullbacks and
  $u,v\colon F(X\times X)\to \VV$ is in $\fibrePred{F(X\times
    X)}$. Then we have:
  $\directImage{\nu_X}(\reindex{F\pair{\pi_2}{\pi_3}}(u)\tensor\reindex{F\pair{\pi_1}{\pi_2}}(v))=\reindexnobrk{\pair{\tau_2}{\tau_3}}\directImage{\lambdaChBase_X}(u)\tensor\reindexnobrk{\pair{\tau_1}{\tau_2}}\directImage{\lambdaChBase_X}(v)$.
\end{lem}

\begin{proof}
  It is easy to verify that the square below is a pullback.
  \[
    \xymatrix{
      & X^3 \ar[ld]_{\pair{\pi_2}{\pi_3}} \ar[rd]^{\pair{\pi_1}{\pi_2}} & \\
      X^2 \ar[rd]_{\pi_1} & & X^2 \ar[dl]^{\pi_2} \\
      & X & }
  \]
  By applying $F$ to the diagram we obtain the diagram below where the
  square is a weak pullback (since $F$ preserves weak pullbacks).
  \[
    \xymatrix{ & & F(X^3) \ar[ld]_{F\pair{\pi_2}{\pi_3}}
      \ar[rd]^{F\pair{\pi_1}{\pi_2}} & \\
      & F(X^2) \ar[ld]_{F\pi_2} \ar[rd]_{F\pi_1} & & F(X^2)
      \ar[dl]^{F\pi_2}  \ar[rd]^{F\pi_1} \\
      FX & & FX & & FX }
  \]
  Using this diagram we can show that the square below is a pullback
  as well. Assume that $t_1,t_2\in F(X^2)$, $(s_1,s_2,s_3)\in (FX)^3$
  are given such that $\lambdaChBase_X(t_1) = (s_2,s_3)$ (which means
  $F\pi_1(t_1) = s_2$, $F\pi_2(t_1) = s_3$) and
  $\lambdaChBase_X(t_2) = (s_1,s_2)$ (which means $F\pi_1(t_2) = s_1$,
  $F\pi_2(t_2) = s_2$). That is, $t_1,t_2$ live on the middle level
  and $s_3,s_2,s_1$ on the lower level (in that order) in the diagram
  above. Since the square is a weak pullback, there exists
  $t\in F(X^3)$ such that $F\pair{\pi_2}{\pi_3}(t) = t_1$ and
  $F\pair{\pi_1}{\pi_2}(t) = t_2$. It remains to verify that
  $\nu_X(t) = (s_1,s_2,s_3)$: for instance
  $F\pi_1(t) = (F\pi_1\circ F\pair{\pi_1}{\pi_2})(t) = F\pi_1(t_2) =
  s_1$. (Analogously for $s_2,s_3$.)

  \[
    \xymatrix{
      F(X^3)\ar[r]^-{\nu_X}\ar[d]_-{\pair{F\pair{\pi_2}{\pi_3}}{F\pair{\pi_1}{\pi_2}}} & (FX)^3\ar[d]^-{\pair{\pair{\tau_2}{\tau_3}}{\pair{\tau_1}{\tau_2}}} \\
      F(X^2)\times F(X^2)\ar[r]_{\lambdaChBase_X\times \lambdaChBase_X} & (FX)^2\times (FX)^2 \\
    }
  \]
  Since the \kl{Beck-Chevalley} condition holds we obtain
  \[
    \directImage{\nu_X}\reindexnobrk{\pair{F\pair{\pi_2}{\pi_3}}
      {F\pair{\pi_1}{\pi_2}}} =
    \reindexnobrk{\pair{\pair{\tau_2}{\tau_3}}{\pair{\tau_1}{\tau_2}}}\directImage{\lambdaChBase_X\times
      \lambdaChBase_X}.
  \]
  Then we will apply this to a predicate of the form
  $\otimes \circ (u\times v)$ and using the facts
  \begin{itemize}
  \item
    $\reindexnobrk{\pair{h_1}{h_2}}(\otimes\circ(u\times
    v))=\reindexnobrk{h_1}(u) \tensor \reindexnobrk{h_2}(v)$.
  \item
    $\directImage{f\times f}(\otimes\circ (u\times v)) = \otimes
    \circ(\directImage{f}(u)\times \directImage{f}(v))$.
  \end{itemize}
  we derive the desired equality.

  While the first item above is straightforward, the second has to be
  further explained. Whenever $f\colon X\to Y$, $p,p'\colon X\to \VV$,
  $y,y'\in Y$, we have (using distributivity):
  \begin{eqnarray*}
    && \directImage{f\times f}(\otimes\circ (p\times p'))(y,y') \\
    & = & 
          \bigvee\{p(x)\otimes p'(x')\mid f(x)=y, f(x')=y'\} \\
    & = & \left(\bigvee_{f(x)=y} p(x)\right) \otimes
          \left(\bigvee_{f(x')=y'} p'(x')\right) \\
    & = & \directImage{f}(p)(y)\otimes \directImage{f}(p')(y') \\
    & = & \otimes\circ (\directImage{f}(p)\times \directImage{f}(p'))(y,y')\qquad\qquad\qquad\qquad\qedhere
  \end{eqnarray*}
\end{proof}

\begin{lem}
  \label{lem:lifting-preserves-symmetry}
  The lifting preserves symmetric $\VV$-valued relations.
\end{lem}

\begin{proof}
  We first observe that the square below commutes.
  \[
    \begin{tikzcd}
      F(X\times X) \ar[r,"\lambdaChBase_X"] \ar[d,swap,"F\sym{X}"]
      & FX\times FX \ar[d,"\sym{FX}"] \\
      F(X\times X) \ar[r,"\lambdaChBase_X"] & FX\times FX
    \end{tikzcd}
  \]
  Knowing that $\lambdaChBase_X = \pair{F\pi_1^X}{F\pi_2^X}$ and that
  $\sym{X} = \pair{\pi_2}{\pi_1^X}$, where
  $\pi_i^X\colon X\times X\to X$, we can easily show that the square
  commutes:
  \begin{eqnarray*}
    && \sym{FX} \circ \lambdaChBase_X \\
    & = & \pair{\pi_2^{FX}}{\pi_1^{FX}} \circ \pair{F\pi_1^X}{F\pi_2^X} \\
    & = & \pair{\pi_2^{FX}\circ \pair{F\pi_1^X}{F\pi_2^X}}{\pi_1^{FX}\circ \pair{F\pi_1^X}{F\pi_2^X}}  \\
    & = & \pair{F\pi_2^X}{F\pi_1^X} \\
    & = & \pair{F(\pi_1^X\circ \pair{
          \pi_2^X}{\pi_1^X})}{F(\pi_2^X\circ \pair{
          \pi_2^X}{\pi_1^X})} \\
    & = & \pair{F(\pi_1^X\circ \sym{X})}{F(\pi_2^X\circ 
          \sym{X})} \\
    & = & \pair{F\pi_1^X}{F\pi_2^X}\circ F\sym{X} \\
    & = & \lambdaChBase_X\circ F\sym{X}
  \end{eqnarray*}
  Recall that $p\in\fibreRel{Y}$ is symmetric when $p=p\circ
  \sym{Y}$. We cannot perform a reindexing along $\sym{Y}$ in the
  fibration $\Vrel$, since $\sym{Y}$ is not a morphims on $Y$, but on
  $Y\times Y$. Instead, we have that $p$ is \kl{symmetric} if and only
  if \[\isoFibre{Y}p=\reindex{\sym{Y}}(\isoFibre{Y}p)\] in $\Vpred$.
  Hence, we want to show that for any $r\in\fibreRel{X}$ the
  implication holds
  \[\isoFibre{X}r=\reindex{\sym{X}}(\isoFibre{X}r)\Rightarrow
    \isoFibre{FX}\barF r=\reindex{\sym{FX}}(\isoFibre{FX}\barF r)\] We
  have the following inequalities:
  \begin{align*}
    \isoFibre{FX}\barF r & = \directImage{\lambdaChBase_{X}}\circ\liftFrest{\chbaseFunct X}\circ \isoFibre{X} (r)\\
                         & = \directImage{\lambdaChBase_{X}}\circ\liftFrest{\chbaseFunct X} \circ \reindex{\sym{X}} (\isoFibre{X} r)\\
                         & \le \directImage{\lambdaChBase_{X}} \circ \reindex{F \sym{X}} \circ\liftFrest{\chbaseFunct X} (\isoFibre{X} r)\\
                         & \le  \reindex{\sym{FX}} \circ \directImage{\lambdaChBase_{X}} \circ \liftFrest{\chbaseFunct X} (\isoFibre{X} r)\\
                         & =   \reindex{\sym{FX}} (\isoFibre{FX}\barF r)
  \end{align*}
  However, using the idempotency of $\sym{FX}$ and the monotonicity of
  $\reindex{\sym{FX}}$ from the inequality
  \[\isoFibre{FX}\barF r\le \reindex{\sym{FX}} (\isoFibre{FX}\barF
    r)\] that we have just proved above we can infer that the equality
  also holds.
\end{proof}

\begin{lem}[Corollary~2.7 in~\cite{h:closed-objects}]
  \label{lem:wpb-naturality}
  If $F\colon\Set\to\Set$ is weak pullback-preserving, then the
  naturality squares of the binatural transformation
  $\pair{F\pi_1d}{,F\pi_2}\colon F(X\times Y)\to FX\times FY$ are weak
  pullbacks, where $\pi_1\colon X\times Y\to X$ and
  $\pi_2\colon X\times Y\to X$ denote the projections. In particular,
  the naturality squares of the natural transformation $\lambdaChBase$
  are weak pullbacks.
\end{lem}

\begin{proof}
  Consider morphisms $f\colon X\to X'$ and $g\colon Y\to Y'$. We want
  to prove that the square
  \begin{equation}
    \label{eq:wpb1}
    \begin{tikzcd}
      F(X\times Y) \ar[dd,"F(f\times g)" description]
      \ar[rr,"\pair{F\pi_1}{F\pi_2}"] & & F(X)\times F(Y)
      \ar[dd,"Ff\times Fg" description]
      \\
      & & \\
      F(X'\times Y')\ar[rr,"\pair{F{\pi_1}}{F{\pi_2}}"] & & F(X')\times F(Y')\\
    \end{tikzcd}
  \end{equation}
  is a weak pullback. To this end we will consider the following
  diagram:
  \begin{equation}
    \label{eq:wpb2}
    \begin{tikzcd}[column sep=0.9em]
      & & F(X\times Y)\ar[ld,swap,"F(X\times g)" ]\ar[rd,"F(f\times Y)"] & & \\
      & F(X\times Y')\ar[ld,swap,"F\pi_1"]\ar[rd,"F(f\times Y')" description]& & F(X'\times Y)\ar[ld,"F(X'\times g)" description]\ar[rd,"F\pi_2"] &\\
      F(X)\ar[rd,swap,"Ff"] & & F(X'\times Y')\ar[ld,"F\pi_1"]\ar[rd,swap,"F\pi_2"] & & F(Y)\ar[ld,"Fg"]\\
      & F(X') & & F(Y') & \\
    \end{tikzcd}
  \end{equation}
      
  The three squares above are obtained by applying the functor $F$ to
  weak pullbacks, hence, by the assumption on $F$, they are also weak
  pullbacks.

  Assume $s'\in F(X')$ and $t'\in F(Y')$ are such that there exist
  $s\in F(X)$, $t\in F(Y)$ and $u\in F(X'\times Y')$ satisfying
  $Ff(s)=s'$, $Fg(t)=t'$ and
  $\langle F\pi_1,F\pi_2\rangle(u)=(s',t')$. Proving
  that~\eqref{eq:wpb1} is a weak pullback amounts to showing the
  existence of $v\in F(X\times Y)$ so that $F\pi_1(v)=s$,
  $F\pi_2(v)=t$ and $F(f\times g)(v)=u$.

  From the fact that the lower left square in~\eqref{eq:wpb2} is a
  weak pullback we infer the existence of $u_1\in F(X\times Y')$ such
  that $F\pi_1(u_1)=s$ and $F(f\times Y')(u_1)=u$.

  Analoguosly, using that the lower right square in~\eqref{eq:wpb2} is
  a weak pullback we obtain the existence of $u_2\in F(X'\times Y)$
  such that $F\pi_2(u_2)=t$ and $F(X'\times g)(u_2)=u$.

  Since the upper square is also a weak pullback, we deduce the
  existence of $v\in F(X\times Y)$ satisfying $F(X\times g)(v)=u_1$
  and $F(f\times Y)(v)=u_2$. Upon noticing that
  $F\pi_1\circ F(X\times g)=F(\pi_1)$,
  $F\pi_2\circ F(f\times Y)=F(\pi_2)$, and
  $F(f\times Y')\circ F(X\times g)=F(f\times g)$, we conclude that $v$
  is the element we were looking for in $F(X\times Y)$.
\end{proof}

\begin{lem}
  \label{lem:can-ev-map-is-monotone}
  Assume the functor $F$ preserves weak pullbacks. The map
  $\evcan\colon F\VV\to\VV$ is a monotone evaluation map, that is, it
  is monotone with respect to the order $\orderFV$ on $F\VV$ defined
  in~\ref{def:orderFV} and the order $\le$ on $\VV$.
\end{lem}

\begin{proof}
  It is suffient to show that
  $\evcan\colon (F\VV,\orderFV) \to (\VV,\le)$ is monotone.

  Hence, let $u'_1,u'_2\in F\VV$ such that $u'_1\orderFV u'_2$, which
  implies that there exists $u'\in F(\VV\times \VV)$ with
  $F\pi_i(u') = u'_i$, where $\pi_i\colon \le\ \to \VV$ are the
  projections.

  We have to show that $\evcan(u'_1) \le \evcan(u'_2)$. It is
  sufficient to show that $u'_1\in F(\uparr)$ implies
  $u'_2\in F(\uparr)$.  Assume $r\in\VV$ is such that
  $u'_1\in F(\uparr)$. Then there exists $u_1\in\, \uparr$ such that
  $F\truer(u_1) = u'_1$.

  Now consider the diagram on the left
  in~\eqref{eq:monotonicity-ev-map}, where
  \[ e_r\colon \restrictionle\ \to\ \le\] embeds $\le$ restricted to
  $\uparr$ into the full relation. Furthermore the functions
  $\overline{\pi}_i$ are the projections for $\restrictionle$. This
  diagram commutes for $i=1,2$ and is a weak pullback for $i=1$. Hence
  the diagram on the right in~\eqref{eq:monotonicity-ev-map} is also a
  weak pullback.
  \begin{equation}
    \label{eq:monotonicity-ev-map}
    \xymatrix{ \restrictionle \ar[r]^{e_r} \ar[d]_{\overline{\pi}_i} &
      \le \ar[d]^{\pi_i} \\
      \uparr \ar[r]^{\truer} & \VV } \qquad \xymatrix{
      F(\restrictionle) \ar[r]^{Fe_r} \ar[d]_{F\overline{\pi}_1} &
      F(\le) \ar[d]^{F\pi_1} \\
      F(\uparr) \ar[r]^{F\truer} & F\VV }
  \end{equation}
  We have $u_1\in F(\uparr)$ and $u'\in F(\le)$ with
  $F\truer(u_1) = u'_1 = F\pi_1(u')$. Hence there must be an element
  $u\in F(\restrictionle)$ with $F\overline{\pi}_1(u) = u_1$ and
  $Fe_r(u) = u'$.

  We set $u_2 = F\overline{\pi}_2(u)\in F(\uparr)$ and observe that
  $F\truer(u_2) = F(\truer\circ \overline{\pi}_2)(u) = F(\pi_2\circ
  e_r)(u) = F\pi_2(u') = u'_2$. This means that $u'_2\in F(\uparr)$ as
  required.
\end{proof}

\begin{prop}
  \label{propCharEvMapWellBeh}
  Assume $\ev\colon F\VV\to\VV$ is monotone evaluation map and let
  $\liftF$ be the corresponding \kl{fibred lifting} of $F$. Then we
  have:
  \begin{enumerate}
  \item $\liftF(p\tensor q)\ge\liftF(p)\tensor\liftF(q)$ holds
    whenever the map $\tensor\colon\VV\times\VV\to\VV$ is a lax
    $F$-algebra morphism, in the sense that we have a lax diagram:
    \[
      \begin{tikzcd}[row sep=0.5em]
        F(\VV\times\VV)\ar[dd,swap,"F(\tensor)"]\ar[r,"\lambdaChBase_\VV"] &F\VV\times F\VV\ar[r,"\ev\times\ev"] & \VV\times\VV\ar[dd,"\tensor"] \\
        & \ge & \\
        F\VV\ar[rr,swap,"\ev"] & & \VV
      \end{tikzcd}
    \]
  \item $\liftF(\constPred{X})\ge\constPred{X}$ holds whenever the map
    $\constPred{\onebb}\colon\onebb\to\VV$ is a lax algebra morphism,
    i.e., we have the lax diagram
    \[
      \begin{tikzcd}[row sep=0.5em, column sep=0.5em]
        F\onebb\ar[rr, "!"]\ar[dd,swap, "F\constPred{\onebb}"] & & \onebb\ar[dd,"\constPred{\onebb}"] \\
        & \ge & \\
        F\VV\ar[rr,swap,"\ev"] & & \VV
      \end{tikzcd}
    \]
  \end{enumerate}
\end{prop}

\begin{proof}
  1. We start with the observation that the predicate $p\tensor q$ is
  computed as the composite
  \[
    \begin{tikzcd}
      X\ar[r,"\diago{X}"] & X\times X\ar[r,"p\times q"] & \VV\times
      \VV\ar[r,"\tensor"] & \VV
    \end{tikzcd}
  \]
  Upon recalling that $\liftF(p)=\ev\circ F(p)$, we notice that the
  leftmost path from $FX$ to $\VV$ in the next diagram evaluates to
  $\liftF(p\tensor q)$. Similarly, the rightmost path from $FX$ to
  $\VV$ evaluates to $\liftF(p)\tensor\liftF(q)$. Now, the desired
  inequality $\liftF(p\tensor q)\ge\liftF(p)\tensor\liftF(q)$ follows
  using the fact that the upper triangle commutes, the naturality of
  $\lambdaChBase$ and the lax diagram from the hypothesis.
  \[
    \begin{tikzcd}
      & FX\ar[dl,swap,"F\diago{X}"]\ar[dr,"\diago{FX}"] & \\
      F(X\times X)\ar[d,swap,"F(p\times q)"]\ar[rr,"\lambdaChBase_{X}"] & & FX\times FX\ar[d,"Fp\times Fq"]  \\
      F(\VV\times \VV)\ar[rr,"\lambdaChBase_{\VV}"]\ar[d,swap,"F(\tensor)"] & & F\VV\times F\VV\ar[d,"\ev\times\ev"]\\
      F\VV\ar[rd,swap,"\ev"] & \ge & \VV\times\VV\ar[dl,"\tensor"]\\
      & \VV &
    \end{tikzcd}
  \]

  \medskip

  2. We consider the following diagram
  \[
    \begin{tikzcd}[row sep=0.5em, column sep=0.8em]
      &&  F\onebb\ar[rr, "!"]\ar[dd,swap, "F\constPred{\onebb}"] & & \onebb\ar[dd,"\constPred{\onebb}"] \\
      && & \ge & \\
      FX\ar[rr,swap,"F\constPred{X}"]\ar[rruu,bend left,"F!"] &\quad &
      F\VV\ar[rr,swap,"\ev"] & & \VV
    \end{tikzcd}
  \]
  Since $\constPred{X}=\constPred{\onebb}\circ !$ the left traingle
  commutes. Hence we obtain
  \[
    \ev\circ F\constPred{X}\ge \constPred{\onebb}\circ !\circ F!\,,
  \]
  or, equivalently,
  \[
    \liftF(\constPred{X})\ge\constPred{X}\,.
  \]
\end{proof}

We show how one of the conditions for well-behavedness that we
required for the Wasserstein lifting in \cite[Definition
5.14]{bbkk:coalgebraic-behavioral-metrics} for the quantale
$\VV = [0,\infty]$ is related to the conditions in
Proposition~\ref{thm:restricting-Wasserstein}. Our original condition
was
$d_e\circ (\ev_F\times \ev_F)\circ \pair{F\pi_1}{F\pi_2} \le
\liftF(d_e)$ where
$d_e({-}_1,{-}_2) = \righthom{{-}_1}{{-}_2} \land
\righthom{{-}_2}{{-}_1}$ (which evaluates to $d_e(r,s) = |r-s|$ on the
reals). This clearly implies the non-symmetric variant stated in the
lemma below.

\todo{BK: the following lemma was only stated vor $\VV =
  [0,\infty]$. I checked and the proof generalizes to any quantale. I
  updated the lemma accordingly.}

\begin{lem}
  \label{lem:wbhvd-additivity-compare}
  $\liftF(p\tensor q)\ge\liftF(p)\tensor\liftF(q)$ holds for all
  $p,q\colon X\to \VV$ if and only if
  $\righthom{\pi_1}{\pi_2} (\ev_F \times \ev_F) \pair{F\pi_1}{F\pi_2}
  \geq \liftF\righthom{\pi_1}{\pi_2}$.
\end{lem}

\begin{proof}
  \begin{align}
    \quad& \righthom{\pi_1}{\pi_2} (\ev_F \times \ev_F) \pair{F\pi_1}{F\pi_2} \geq \liftF\righthom{\pi_1}{\pi_2}\label{eq:remark:conf-F-trans:2}\\
    \iff\quad &\righthom{\pi_1}{\pi_2}\pair{\liftF\pi_1}{\liftF\pi_2} \geq  \liftF\righthom{\pi_1}{\pi_2}\label{eq:remark:conf-F-trans:3}\\
    \iff\quad &\righthom{\liftF\pi_1}{\liftF\pi_2} \geq  \liftF\righthom{\pi_1}{\pi_2}\label{eq:remark:conf-F-trans:4}\\
    \iff\quad &\liftF\pi_2 \geq \liftF\righthom{\pi_1}{\pi_2} \tensor
                \liftF\pi_1\label{eq:remark:conf-F-trans:5}
  \end{align}
  using
  \begin{itemize}
  \item For the equivalences \eqref{eq:remark:conf-F-trans:2} $\iff$
    \eqref{eq:remark:conf-F-trans:3} $\iff$
    \eqref{eq:remark:conf-F-trans:4} just simple rewriting along with
    $\liftF = \ev_F \circ F$.
  \item For the equivalence \eqref{eq:remark:conf-F-trans:4} $\iff$
    \eqref{eq:remark:conf-F-trans:5} the tensor property
    $x\tensor y\le z \iff x\le\righthom{y}{z}$.
  \end{itemize}
  Using this, we now aim to show that \eqref{eq:remark:conf-F-trans:5}
  $\iff$ \eqref{eq:cond-F-trans}.
  \begin{itemize}
  \item Showing \eqref{eq:cond-F-trans} $\implies$
    \eqref{eq:remark:conf-F-trans:5} is straightforward: from
    $\righthom{\pi_1}{\pi_2} \le \righthom{\pi_1}{\pi_2}$ we can infer
    $\righthom{\pi_1}{\pi_2} \tensor \pi_1 \leq \pi_2$. Using this,
    the monotonicity of $\liftF$ and \eqref{eq:cond-F-trans} (by
    taking $u = \righthom{\pi_1}{\pi_2} \colon X \to \VV$ and
    $v = \pi_1 \colon X \to \VV$) we obtain inequality
    \eqref{eq:remark:conf-F-trans:5} as follows:
    \begin{align*}
      \liftF\pi_2 \geq \liftF\left(\righthom{\pi_1}{\pi_2} \tensor \pi_1\right) \geq \liftF\righthom{\pi_1}{\pi_2} \tensor \liftF\pi_1
    \end{align*}
	
  \item The implication \eqref{eq:remark:conf-F-trans:5} $\implies$
    \eqref{eq:cond-F-trans} can be shown by rewriting
    $u \tensor v = \pi_2 \circ \langle v, u \tensor v\rangle$ and then
    using \eqref{eq:remark:conf-F-trans:5} as follows
    \begin{align*}
      &\liftF(u \tensor v) \\
      =\quad&\liftF\left(\pi_2 \circ \pair{v}{u \tensor v}\right) = \liftF\pi_2 \circ F\pair{v}{u \tensor v} \\
      \geq\quad&\left(\liftF\righthom{\pi_1}{\pi_2} \tensor \liftF\pi_1\right) \circ F\pair{v}{u \tensor v}\\
      =\quad&\liftF\left(\righthom{\pi_1}{\pi_2}  \circ\pair{v}{u \tensor v} \right) \tensor \liftF\left(\pi_1 \circ \pair{v}{u \tensor v}\right)\\
      =\quad&\liftF\righthom{v}{u\tensor v} \tensor \liftF v \geq \liftF u \tensor \liftF v
    \end{align*}
    where the last inequality follows again from monotonicity of
    $\liftF$ and the definitions of $\tensor$ and $[-,-]$. In
    particular $u\tensor v\le u\tensor v$ and hence
    $u \le \righthom{v}{u\tensor v}$.\qedhere
  \end{itemize}
\end{proof}

Next we will prove Proposition~\ref{prop:can-lifting-is-well-behaved},
which captures the properties of the canonical lifting $\liftFcan$.

\propCanLiftingIsWellBehaved*

\begin{proof}
  1.  Given $t\in FX$, we have on one hand that
  \[
    \begin{aligned}
      \liftFcan(p\tensor q)(t) &=       \evcan(F(p\tensor q)(t)) \\
      &= \bigvee\{ r\mid F(p\tensor q)(t) \in F(\uparr)\}\,,
    \end{aligned}
  \]
  and on the other, that
  \begin{eqnarray*}
    &   & (\liftFcan (p)\tensor \liftFcan (q))(t) \\
    & = & \evcan(Fp(t)) \tensor\evcan(Fq(t)) \\
    & = & \bigvee\{ r\mid Fp(t) \in
          F(\uparr)\} \tensor
          \bigvee\{ s\mid Fq(t) \in F(\upars)\} \\
    & = & \bigvee\{ r\tensor s\mid Fp(t) \in F(\uparr), Fq(t)
          \in F(\upars) \}\,.
  \end{eqnarray*}
  Hence, in order to show the desired inequality it is sufficient to
  show that
  \[
    Fp(t)\in F(\uparr),\ Fq(t)\in F(\upars)\textrm{ imply } F(p\otimes
    q)(t) \in F(\uparrs)\,.
  \]
  Let $r,s\in\VV$ so that $Fp(t)\in F(\uparr)$ and
  $Fq(t)\in F(\upars)$.  Note that $p\tensor q\colon X\to \VV$ is the
  composite:
  \[
    \xymatrix{ X \ar[r]^(.4){\diago{X}} \ar@/_1pc/[rrr]_{p\tensor q} &
      X\times X \ar[r]^{p\times q} & \VV\times \VV
      \ar[r]^(.6){\tensor} & \VV }
  \]
  Hence $F(p\otimes q)$ is the composite of the arrows on the top line
  of the diagram below:
  \[
    \xymatrix{ FX \ar[r]^(.4){F\diago{X}} \ar@/_1pc/[dr]_{\diago{FX}}
      & F(X\times X) \ar[d]_{\lambdaChBase_X} \ar[r]^{F(p\times q)} &
      F(\VV\times \VV) \ar[r]^(.6){F(\tensor)}
      \ar[d]^{\lambdaChBase_{\VV}} & F\VV \\
      & FX\times FX \ar[r]^{Fp\times Fq} & F\VV\times F\VV \\
    }
  \]
  Note that the triangle and the square above are commutative. Using
  the abbreviation $\theta = F((p\times q)\circ \diago{X})(t)$ we have
  that:
  \begin{eqnarray}
    F(p\tensor q)(t) & = & F(\tensor)(\theta)
    \\
    ((Fp)(t),(Fq)(t)) & = & \lambdaChBase_\VV(\theta)\label{eq:trans-ev-can-aux}
  \end{eqnarray}
  From Lemma~\ref{lem:wpb-naturality} we know that the square in the
  diagram below is a weak pullback.
  \begin{equation}
    \label{eq:wpb-r-s-VV}
    \begin{tikzcd}[column sep=6em]
      F((\uparr)\times (\upars)) \ar[r,"{F(\truer\times \trues)}"]
      \ar[d,"{\langle F\pi_1,F\pi_2\rangle}"] &
      F(\VV\times \VV) \ar[d,"{\lambdaChBase_\VV}"] \\
      F(\uparr) \times F(\upars) \ar[r,"{F\truer\times F\trues}"] &
      F\VV\times F\VV
    \end{tikzcd}
  \end{equation}
  By hypothesis we know that there exists $u\in F(\uparr)$ and
  $v\in F(\upars)$ such that $Fp(t) = F\truer(u)$ and
  $Fq(t) = F\trues(v)$. Hence
  \[
    (F\truer\times F\trues)(u,v)=\lambdaChBase_\VV(\theta)\,.
  \]
  Using the fact that the square~\eqref{eq:wpb-r-s-VV} is a weak
  pullback, there exists $w\in F((\uparr)\times (\upars))$ such that
  $F(\truer\times \trues)(w) = \theta$, $F\pi_1(w) = u$ and
  $F\pi_2(w) = v$.

  Thus far we have shown that
  \[
    F(p\otimes q)(t)=F(\otimes)(\theta) =F(\otimes)\circ
    F(\truer\times \trues)(w)
  \]
  for some $w\in F((\uparr)\times (\upars))$. To finish the proof of
  the first item, we will prove that
  $F(\otimes)\circ F(\truer\times \trues)$ factors through
  $F\truers\colon F(\uparrs)\to F\VV$.
    
  To this end, notice that due to monotonicity of the tensor product,
  we know that $(\uparr)\otimes (\upars) \subseteq \,\uparrs$. Hence,
  $\tensor\colon\VV\times\VV\to\VV$ restricts to a function
  ${\otimes_{|\uparr,\upars}}$ on $\uparr\times\upars$ so that the
  square below commutes.
  \[
    \begin{tikzcd}[column sep=6em]
      \uparr\times \upars \ar[r,"{\truer\times \trues}"]
      \ar[d,"{\otimes_{|\uparr,\upars}}"] &
      \VV\times\VV \ar[d,"\otimes"] \\
      \uparrow (r\otimes s) \ar[r,"{\truers}"] & \VV
    \end{tikzcd}
  \]
  Now we put $z:=F(\otimes_{|\uparr,\upars})(w)$ and observe that
  \[
    \begin{aligned}
      F(p\otimes q)(t) & =       F(\otimes)\circ F(\truer\times \trues)(w) \\
      & = F(\truers)\circ F(\otimes_{|\uparr,\upars})(w)\\
      & = F(\truers)(z)\,.
    \end{aligned}
  \]
  We conclude that $F(p\otimes q)(t)\in F(\uparrs)$.

  \medskip

  2. Now let us prove the second item.  Given $t\in FX$, we know that
  \[
    \begin{aligned}
      \liftFcan\constPred{X}(t) & = \evcan(F\constPred{X}(t))\\
      & = \bigvee\{ r\mid F\constPred{X}(t)\in F(\uparr)\}\,.
    \end{aligned}
  \]
  In order to show that $\liftFcan\constPred{X}(t) \ge 1$ it is
  sufficient to prove that $F\constPred{X}(t) \in F(\upar{1})$.

  Let $e\colon X\to \upar{1}$ a constant mapping with $e(x) = 1$. Then
  the diagram to the left below commutes and by applying the functor
  $F$ we obtain the diagram below.
  \[
    \xymatrix{
      & X \ar[dl]_e \ar[d]^{\constPred{X}} \\
      \upar{1} \ar[r]^{\trueone} & \VV } \qquad \xymatrix{
      & FX \ar[dl]_{Fe} \ar[d]^{F\constPred{X}} \\
      F(\upar{1}) \ar[r]^{F\trueone} & F\VV }
  \]
  Now $F\constPred{X}(t)=F\trueone(Fe(t))$, hence
  $F\constPred{X}(t)\in F(\upar{1})$.
\end{proof}

\subsection{Proofs and additional material for
  Section~\ref{sec:up-to-techniques}}
\label{sec:proofs-up-to-techniques}

\subsubsection{Basic up-to techniques and chaining}
\label{app:basic}

In Section \ref{sec:up-to-techniques} we have introduced the technique
$\cbhv$ and mentioned in passing other basic techniques \kl{up-to
  reflexivity} $\cref$, \kl{up-to transitivity} $\ctrn$ and \kl{up-to
  symmetry} $\csym$. In this appendix we give precise definitions for
these techniques and show sufficient criteria ensuring their
soundness.

Inductively, take $(-)^0 = id\colon \fibreRel{X} \to \fibreRel{X}$ and
$(-)^{n+1}= id \comp (-)^n$.  Call $diag\colon \Vrel_X \to \Vrel_X $
the constant function to $\diagRel{X}$ and
$inv \colon \Vrel_X \to \Vrel_X $ be the inversion function mapping
$d$ into $d\circ \sym{X}$. Then $\cref$,$\ctrn$ and $\csym$ are
defined as follows.
\begin{equation*}
  \cref=id\vee diag \qquad\qquad\qquad
  \ctrn= \bigvee_i (-)^i \qquad\qquad\qquad \csym = id \vee inv
\end{equation*}

With these definitions and two results
in~\cite{bppr:general-coinduction-up-to}, it is immediate to prove the
following %\marginpar{DP: care! you used the macro for the Wasserstein lifting $\barF$}
\begin{prop}
  Let $\tilde{F} \colon \Vrel \to \Vrel$ be a an arbitrary lifting of
  $F\colon \Set \to \Set$. Assume that
  $b=\reindexnobrk{\xi} \circ \tilde{F}$.
  \begin{itemize}
  \item If $\tilde{F} (\diagRel{X}) \geq \diagRel{FX}$, then $\cref$
    is \kl{$b$-compatible}.
  \item If $\tilde{F}(p\comp q) \geq \tilde{F}(p) \comp \tilde{F}(q)$
    for all $p,q\in \Vrel_X$, then $\ctrn$ is \kl{$b$-compatible}.
  \item If
    $\tilde{F}(d)\circ \sym{X} \leq \tilde{F}(d \circ \sym{X})$, then
    $\csym$ is \kl{$b$-compatible}.
  \item If $\tilde{F}$ is a fibred lifting, then $\cbhv$ is
    \kl{$b$-compatible}.
  \end{itemize}
\end{prop}
\begin{proof}
  Observe that
  \begin{itemize}
  \item $diag$ is compatible by the hypothesis and
    ~\cite[Proposition~6.3]{bppr:general-coinduction-up-to}. Then
    $\cref$ is compatible since $id$ is compatible and the join of
    compatible functions is compatible.
  \item For all $i$, $(-)^i$ is compatible (the proof goes by
    induction: for the base case, $id$ is compatible; for the
    inductive case, we use Proposition~\ref{prop:tns}). Then $\ctrn$
    is compatible (following the same argument as above).
  \item $inv$ is compatible by the hypothesis
    and~\cite[Proposition~6.3]{bppr:general-coinduction-up-to}. Then
    $\csym$ is compatible.
  \item Theorem 6.1 in~\cite{bppr:general-coinduction-up-to} entails
    that $\cbhv$ is compatible.\qedhere
  \end{itemize}
\end{proof}

Notice that whenever $\tilde{F}$ is the \kl{Wasserstein lifting}
corresponding to some $\Vpred$-lifting $\liftF$ which satisfies the
conditions of Theorem \ref{thm:restricting-Wasserstein}, then the
hypotheses of the above proposition immediately hold, so all the basic
up-to techniques are compatible.

\proptns*
\begin{proof}
  Follows immediately from Proposition 6.3
  in~\cite{bppr:general-coinduction-up-to}.
\end{proof}

The proof of Theorem~\ref{thm:Lifting-Distr-Law-Wass}, which we recall
next, follows immediately from
Lemmas~\ref{lem:tilde-sigma},~\ref{lem:Wass-is-Functorial}
and~\ref{lem:sigma-tilde}.

\subsubsection{Details on the contextual closure
  (Example~\ref{ex:ctx})}
\label{app:ctx}

Recall that the functor $T_\Sigma \colon \Set \to \Set$ maps a set $X$
into the set of $\Sigma$-terms with variables in $X$.  It carries the
structure of a monad: the unit $\eta_X \colon X \to T_\Sigma X$ maps
an elements into variables and the multiplication
$\mu_X \colon T_\Sigma T_\Sigma X \to T_\Sigma X$ is just term
composition.

Let $T_1,T_2\in T_\Sigma T_\Sigma X$ with
$T_1=C_1(s_1^1, \dots, s_n^1 )$ and $T_2= C_2(s_1^2, \dots, s_m^2)$.
The \kl[canonical Wasserstein lifting]{canonical lifting} of
$T_{\Sigma}$ is defined for all $d\in \fibreRel{T_{\Sigma}X}$ as
\begin{equation} \label{eq:ctx}
  \overline{T_{\Sigma}}_\mathsf{can}(d)(T_1,T_2)=
  \begin{cases}
    \bot               & C_1\neq C_2\\
    \bigwedge_j d(s_j^1,s_j^2)             & \text{otherwise}\\
  \end{cases}
\end{equation}
Now, for all $t_1, t_2\in T_{\Sigma}X$, we have that
\begin{align*}
  ctx(d)(t_1,t_2) & = \Sigma_{\mu_X} (\overline{T_{\Sigma}}_\mathsf{can}(d)) (t_1,t_2) \\
                  & =\bigvee \{ \overline{T_{\Sigma}}_\mathsf{can}(d) (T_1,T_2) \mid \mu_X(T_i)=t_i  \} \\
                  &= \bigvee_{C} \{ \bigwedge_j d(s_j^1,s_j^2)   \mid t_i=C(s_0^i, \dots, s_n^i)  \}\,.\\
\end{align*}

\subsubsection{Details on the convex closure (Example~\ref{ex:cvx})}
\label{app:cvx}

We show that the \kl{up-to convex closure} as defined
in~\cite{cgv:up-to-bisim-metrics} coincides with the one obtained as
the composite in~\eqref{eq:metric-f-fib}:
\[
  \begin{tikzcd}
    f\colon\fibreRel{\Distr
      X}\ar[r,"\overline{\Distr}"]&\fibreRel{\Distr\Distr
      X}\ar[r,"\directImage{\mu_X}"]&\fibreRel{\Distr X}\,,
  \end{tikzcd}
\]
by taking the \kl{Wasserstein lifting} $\overline{\Distr}$ of $\Distr$
from Example~\ref{ex:wasserstein} corresponding to the evaluation map
of Example~\ref{ex:evaluation-as-mult-distr-2}, and the free algebra
structure on $\Distr X$ given by the monad multiplication
$\mu_X\colon \Distr \Distr X \to \Distr X$.

Let $\VV=[0,1]$, $\Delta,\Theta\in\Distr X$ and
$d\colon \Distr X\times \Distr X\to [0,1]$. Then, by expanding the
definitions of the \kl{direct image} and of the \kl{Wasserstein
  lifting} we obtain:
\begin{eqnarray*}
  f(d)(\Delta,\Theta) 
  & = &
        \directImage{\mu_X}(\overline{\Distr}(d))(\Delta,\theta)
  \\
  & = & \inf\{\overline{\Distr}(d)(\tilde{\Delta},\tilde{\Theta}) \mid
        \tilde{\Delta},\tilde{\Theta} \in \Distr\Distr X,
        \mu_X(\tilde{\Delta}) = \Delta, \mu_X(\tilde{\Theta}) = \Theta \} \\
  & = & \inf\{\inf\{\widehat{\Distr}(d)(\Gamma) \mid \Gamma\in
        \Distr(\Distr X\times \Distr X),
        \mathcal{D}\pi_1(\Gamma)=\tilde{\Delta},
        \mathcal{D}\pi_2(\Gamma)=\tilde{\Theta}\}
        \mid \\
  && \qquad\qquad \tilde{\Delta},\tilde{\Theta} \in \Distr\Distr X,
     \mu_X(\tilde{\Delta}) = \Delta, \mu_X(\tilde{\Theta}) = \Theta \} \\
  & = & \inf\{\widehat{\Distr}(d)(\Gamma) \mid \Gamma\in \Distr(\Distr
        X\times \Distr X), \mu_X(\mathcal{D}\pi_1(\Gamma))=\Delta,
        \mu_X(\mathcal{D}\pi_2(\Gamma))=\Theta\}
\end{eqnarray*}
Now observe that $\Gamma\in \Distr(\Distr X\times \Distr X)$ can be
written as a formal sum $\Gamma = \sum_i p_i\cdot (\Delta_i,\Theta_i)$
where $\Delta_i,\Theta_i\in\Distr X$ and
$p_i = \Gamma(\Delta_i,\Theta_i)$. Then
\begin{eqnarray*}
  \widehat{\mathcal{D}}(d)(\Gamma) &=& \ev(\mathcal{D}(d)(\Gamma)) \\
                                   &=&\sum_{r\in[0,1]} r\cdot \sum_{\Gamma(\Delta',\Theta')=r}
                                       d(\Delta',\Theta') \\
                                   & = & \sum_{\Delta',\Theta'}
                                         d(\Delta',\Theta')\cdot \Gamma(\Delta',\Theta')\\ & = &  
                                                                                                 \sum_i p_i\cdot d(\Delta_i,\Theta_i) 
\end{eqnarray*}

In addition $\mu_X(\Distr\pi_1(\Gamma)) = \Delta$ means
$\sum_i p_i\cdot\Delta_i = \Delta$ and similarly
$\mu_X(\Distr\pi_2(\Gamma)) = \Theta$ means
$\sum_i p_i\cdot\Theta_i = \Theta$.  Therefore
$f(d)(\Delta,\Theta) = \ccvx(d)(\Delta,\Theta)$.

\subsubsection{Lifting of the distributive
  law}\label{app:liftingdistr}

\begin{lem}
  \label{lem:tilde-sigma}
  Whenever
  $\liftT\circ\directImage{\lambdaChBaseF{X}}
  \le\directImage{T\lambdaChBaseF{X}}\circ\liftT$, the identity
  natural transformation $TF\Rightarrow TF$ lifts to
  $\barT\circ\barF \Rightarrow \barTF$.
\end{lem}

\begin{proof}
  We have to show that the (identity) maps underlying the natural
  transformation are non-expansive, in particular
  $\barT\circ\barF(r) \le \barTF(r)$ for every $r$ in $\fibreRel{X}$.

  Let $\lambdaChBaseF{X}\colon F(X\times X)\to FX\times FX$,
  $\lambdaChBaseT{FX}\colon T(FX\times FX)\to TFX\times TFX$,
  $\lambdaChBaseTF{X}\colon TF(X\times X)\to TFX\times TFX$ be the
  natural transformations on which the three liftings are based. From
  the uniqueness of the mediating morphism of the product we obtain
  $\lambdaChBaseTF{X} = \lambdaChBaseT{FX}\circ
  T\lambdaChBaseF{X}$. This allows us to prove:
  \begin{eqnarray*}
    \barT\circ\barF & = & \isoFibre{TFX}^{-1}\circ
                          \directImage{\lambda^T_{FX}}\circ \liftT\circ \isoFibre{FX} \circ \isoFibre{FX}^{-1}\circ
                          \directImage{\lambda^F_X}\circ{\liftF} \circ \isoFibre{X} \\
                    & = & \isoFibre{TFX}^{-1}\circ
                          \directImage{\lambda^T_{FX}}\circ \liftT\circ
                          \directImage{\lambda^F_X}\circ{\liftF} \circ \isoFibre{X} \\
                    & \ge & \isoFibre{TFX}^{-1}\circ\directImage{\lambda^T_{FX}}\circ\directImage{T\lambda^F_X}\circ
                            \liftT\circ{\liftF} \circ \isoFibre{X} \\
                    & \ge & \isoFibre{TFX}^{-1}\circ \directImage{\lambda^{TF}_{FX}}\circ
                            \liftTF \circ \isoFibre{X} = \barTF\,.\qquad\qquad\qedhere
  \end{eqnarray*}
\end{proof}

\begin{lem}
  \label{lem:sigma-tilde}
  Using the notations in Theorem~\ref{thm:Lifting-Distr-Law-Wass}, the
  identity natural transformation $F\circ T\Rightarrow F\circ T$ lifts
  to $\barFT \Rightarrow \barF\circ \barT$.
\end{lem}

\begin{proof}
  It always holds that
  $\directImage{Ff}\circ {\liftF} \le {\liftF}\circ \directImage{f}$
  for all $f\colon X\to Y$. Indeed, by~\eqref{eq:canonical-ineq},
  ${\liftF}\circ f^* \le (Ff)^*\circ {\liftF}$ holds in $\Vrel$.
  Then, using the fact that $\directImage{f}$, $f^*$ (respectively
  $\directImage{Ff}, (Ff)^*$) are adjoint, we obtain the desired
  inequality.

  The rest of the proof is analogous to Lemma~\ref{lem:tilde-sigma}.
\end{proof}

We also include an alternative proof of
Theorem~\ref{thm:Lifting-Distr-Law-Wass} which brings the pieces of
proof in one large diagram.\todo{DP: we can remove this, though it
  helped me understand what is going on. BK: It's fine, we can also
  keep it.}

\begin{proof}[Alternative proof of Theorem~\ref{thm:Lifting-Distr-Law-Wass}]
  The existence of the lifting $\liftDistr$ is equivalent to
  \[\liftT\circ\liftF\le\reindex{\distrLaw_{X}}\circ\liftF\circ\liftT\,,\]
  while the existence of the lifting $\barDistr$ is equivalent
  to
  \[\barT\circ\barF\le\reindex{\distrLaw_{X}}\circ\barF\circ\barT\,.\]
  The latter inequality, which we have to prove, is in turn equivalent
  to the inequality obtained by using the isomorphism $\isoPR$.
  \[
    \begin{aligned}
      \isoFibre{TFX}\circ\barT\circ\barF\circ\isoFibre{X}^{-1}& \le \isoFibre{TFX}\circ\reindex{\distrLaw_{X}}\circ\barF\circ\barT\circ\isoFibre{X}^{-1}\\
      &=
      \reindex{\chbaseFunct\distrLaw_{X}}\circ\isoFibre{FTX}\circ\barF\circ\barT\circ\isoFibre{X}^{-1}
    \end{aligned}
  \]
  The left hand side of the above inequality rewrites using the
  definitions of the \kl{Wasserstein liftings} as the composite of the
  outermost right-then-down path $\fibrePred{\chbaseFunct X}$ to
  $\fibrePred{\chbaseFunct TFX}$ in the next diagram.  The right hand
  side similarly evaluates to the outermost down-then-right path in
  the diagram. So it suffices to establish the inequality between
  these two paths. We do this by decomposing the diagram into smaller
  pieces (see Figure~\ref{fig:existence-liftDistr}) and explaining
  each inequality in turn.

\begin{figure*}
  \centering $
  \begin{tikzcd}[row sep =3em]
    \fibrePred{\chbaseFunct X}\ar[rr,"\liftF"]\ar[d,swap,"\liftT"] & &
    \fibrePred{F\chbaseFunct
      X}\ar[r,"\directImage{\lambdaChBaseF{X}}"]\ar[d,"\liftT"]\ar[dll,color=white,"\textcolor{black}{\rotle}"
    description] & \fibrePred{\chbaseFunct
      FX}\ar[d,"\liftT"]\ar[dl,color=white,"\textcolor{black}{\rotle}"
    description]
    \\
    \fibrePred{T\chbaseFunct
      X}\ar[d,swap,"\directImage{\lambdaChBaseT{X}}"]\ar[r,"\liftF"] &
    \fibrePred{FT\chbaseFunct
      X}\ar[dr,"\directImage{\lambdaChBaseFT{X}}"
    description]\ar[d,"\directImage{F\lambdaChBaseT{X}}"]\ar[r,"\reindex{\distrLaw_{\chbaseFunct
        X}}"]\ar[dl,color=white,"\textcolor{black}{\rotle}"
    description] & \fibrePred{TF\chbaseFunct
      X}\ar[dr,"\directImage{\lambdaChBaseTF{X}}"
    description]\ar[r,"\directImage{T\lambdaChBaseF{X}}"]\ar[d,color=white,"\textcolor{black}{\rotle}"
    description] & \fibrePred{T\chbaseFunct
      FX}\ar[d,"\directImage{\lambdaChBaseT{FX}}"]
    \\
    \fibrePred{\chbaseFunct TX}\ar[r,swap,"\liftF"] &
    \fibrePred{F\chbaseFunct
      TX}\ar[r,swap,"\directImage{\lambdaChBaseF{TX}}"] &
    \fibrePred{\chbaseFunct
      FTX}\ar[r,swap,"\reindex{\chbaseFunct\distrLaw_{X}}"] &
    \fibrePred{\chbaseFunct TFX}
  \end{tikzcd}
  $
  \caption{Existence of the lifting
    $\liftDistr$ \label{fig:existence-liftDistr}}
\end{figure*}

The two inequalities in the top pentagon and top-right square follow
from the hypothesis. The two triangles at the bottom are equalities
that follow from the fact that
\[
  \lambdaChBaseFT{X}=\lambdaChBaseF{TX}\circ
  F\lambdaChBaseT{X}\textrm{\quad and\quad }
  \lambdaChBaseTF{X}=\lambdaChBaseT{FX}\circ T\lambdaChBaseF{X}
\]
The inequality in the left-down square holds since $\liftF$ is a
lifting and is obtained via adjoint transposes
from~\eqref{eq:canonical-ineq}.\todo{add this explicitly.}

Using the naturality of $\distrLaw$ one can show that the next square
commutes
\[
  \begin{tikzcd}
    TF\chbaseFunct X\ar[d,swap,"\distrLaw_{\chbaseFunct X}"]\ar[r,"\lambdaChBaseTF{X}"] & \chbaseFunct TF X\ar[d,"\chbaseFunct\distrLaw_{X}"] \\
    FT\chbaseFunct X\ar[r,"\lambdaChBaseFT{X}"] & \chbaseFunct FT X\,. \\
  \end{tikzcd}
\]
and hence, the inequality in the bottom rhombus can be derived as an
instance of a generic result for bifibrations,
see~\eqref{eq:mate-BCC}.
\end{proof}

We now turn to proving
Proposition~\ref{thm:lift-nat-transf-canonical}.

\thmLiftNatTransCan*

\begin{proof}
  We use the following notations:
  \[\ev_{TF}:=\evTcan\circ T(\evFcan)\] and
  \[\ev_{FT}:=\evFcan\circ F(\evTcan)\,.\]
  Notice that $\ev_{TF}$ and $\ev_{FT}$ are exactly the evluation maps
  corresponding to the liftings $\liftTcan\circ\liftFcan$,
  respectively $\liftFcan\circ\liftTcan$.  Using the second part of
  Lemma~\ref{lem:Wass-is-Functorial}, it suffices to show
  $\ev_{TF} \le \ev_{FT}\circ \zeta_\VV$.

  We will consider the inclusion maps $\truer\colon \uparr\to \VV$ and
  write $t\in F(\uparr)$ for $t\in (F\uparr)$.

  We first consider the diagram below. We will show that the dotted
  arrow exists and that the resulting square is a weak pullback.
  \[
    \xymatrix{ F(\uparr)\ \ \ar@{>->}[r]^{F\truer}
      \ar@{.>}[d]_{\evFcan|_{\uparr}} & F\VV
      \ar[d]^{\evFcan} \\
      \uparr \ \ \ar@{>->}[r]^{\truer} & \VV }
  \]
  Let $t\in F(\uparr)$. This means that
  $\evFcan(t) = \bigvee \{s\mid t\in F(\upars)\} \ge r$, since the set
  contains $r$ itself. Hence $\evFcan$ restricts to
  $\evFcan|_{\uparr}$.

  In order to show that the square is a weak pullback take
  $t'\in F\VV$ such that
  $\evFcan(t') = \bigvee\{s\mid t\in F(\upars)\} = \bar{s} \ge r$. We
  have to show that $t\in F(\uparr)$, i.e., that $r$ is contained in
  the set, which we will do by showing that $\{s\mid t\in F(\upars)\}$
  is downward-closed and contains its supremum. The set
  $\{s\mid t\in F(\upars)\}$ is downward-closed since $F$ as a
  $\Set$-functor preserves injections with non-empty domains and hence
  $s'\le s$ implies $\upars\subseteq \ \upars'$ and thus
  $t\in F(\upars) \subseteq F(\upars')$. If $F$ preserves
  intersections, it also contains its supremum:
  $\uparrow\!\bar{s} = \bigcap \{\upars\mid t\in F(\upars)\}$ and
  hence
  $F(\uparrow\!\bar{s}) = \bigcap \{F(\upars)\mid t\in F(\upars )\}
  \ni t$.
  
  Similarly one obtains such a commuting square (not necessarily a
  weak pullback) for $T$ and $\evTcan$. This results in the following
  diagram where the right-hand square and the upper ``square'' commute
  and the left-hand square is a weak pullback (using pullback
  preservation of $T$).
  \[
    \xymatrix{ TF(\uparr)\ \ \ar@{>->}[r]^{TF\truer}
      \ar@{.>}[d]_{T(\evFcan|_{\uparr})}
      \ar@/^2pc/[rrr]^{\zeta_{\uparr}} & TF\VV \ar[r]^{\zeta_\VV}
      \ar[d]^{T\evFcan} & FT\VV \ar[d]_{F\evTcan} & \ \ FT(\uparr)
      \ar@{>->}[l]_{FT\truer} \ar@{.>}[d]^{F(\evTcan|_{\uparr})} \\
      T(\uparr) \ \ \ar@{>->}[r]^{T\truer} & T\VV & F\VV & \
      F(\uparrow r) \ar@{>->}[l]_{F\truer} }
  \]
  In order to prove that
  \[\ev_{TF} = \evTcan\circ T\evFcan \le \evFcan\circ F\evTcan\circ
    \zeta_\VV = \ev_{FT}\circ\zeta_\VV\,,\] let $t\in TF\VV$. Since
  \[\evTcan(T\evFcan(t)) = \bigvee \{r\mid T\evFcan(t)\in
    T(\uparr)\}\] and
  \[\evFcan(F\evTcan(\zeta_\VV(t))) = \bigvee \{r\mid
    F\evTcan(\zeta_\VV(t))\in F(\uparr)\}\] it suffices to show that
  \[ T\evFcan(t)\in T(\uparr)\textrm{ implies }
    F\evTcan(\zeta_\VV(t))\in F(\uparr)\,.\]

  So let $T\evFcan(t)\in T(\uparr)$ and the fact that the left-hand
  square is a weak pullback implies that there exists
  $t'\in TF(\uparr)$ with $TF\truer(t') = t$.

  Then, using naturality of $\zeta$, we obtain
  \[
    \begin{aligned}
      F\evTcan(\zeta_\VV(t))&  = F\evTcan(\zeta_\VV(TF\truer(t')))\\
      & =       F\evTcan(FT\truer(\zeta_{\uparr}(t')))\\
      & = F\truer(F(\evTcan|_{\uparr})(\zeta_{\uparr}(t')))\\
      & \in F(\uparr)\qquad\qquad\qquad\qquad\qquad\qquad \qedhere
    \end{aligned}
  \]
\end{proof}

\propccdq*
\begin{proof}
  In Appendix~\ref{app:ccdq}, we prove a more general result
  (Proposition~\ref{prop:ccdqgeneral}). Proposition~\ref{prop:ccdq}
  follows thus from Proposition \ref{prop:ccdqgeneral}, whose
  conditions are shown to be satisfied by the canonical lifting in
  Lemmas~\ref{lemma:ccdq1} and~\ref{lemma:ccdq2}.
\end{proof}

\subsubsection{Details on constructively completely distributive
  quantales}\label{app:ccdq}
In this appendix, we provide a result
(Proposition~\ref{prop:ccdqgeneral} below) for proving
$\liftT\circ \directImage{f} \le \directImage{Tf}\circ \liftT$ that is
more general than Proposition~\ref{prop:ccdq}. This is useful, for
instance to prove such property for liftings different than the
canonical one.

We assume that the quantale $\VV$ is \kl{constructively completely
  distributive} and we start with two examples of such quantales, in
order to give some more intuition.

\begin{ex}
  In the reals the order $\VVlll$ coincides with $>_\mathbb{R}$,
  whereas in a powerset lattice $\Pow M$ we have that $M_1\VVlll M_2$
  for $M_1,M_2\subseteq M$ whenever $M_1\subseteq M_2$ and $M_1$
  contains at most one element. Both lattices are \kl{constructively
    completely distributive}.
\end{ex}

For this more general result, we need some additional properties, in
particular the lifting $\liftT$ must preserve a special type of
supremum of predicates (even stronger than uniform convergence).

\begin{defi}
  Let $(p_i\colon X\to\VV)_{i\in I}$ be a family of predicates. We
  call its sup \intro{constructively-convergent} \todo{Find a
    different name for \emph{constructively-convergent}.}  if for
  every predicate $q\colon X\to \VV$ with
  $q\VVlll \bigvee_{i\in I} p_i$ (pointwise), there exists $i\in I$
  with $q\le p_i$.
\end{defi}

\begin{prop}\label{prop:ccdqgeneral}
  Assume $\VV$ is a \kl{constructively completely distributive}
  quantale and assume $\liftT$ is a lifting a $\Set$-functor $T$. Then
  we have that
  $\liftT\circ \directImage{f} \le \directImage{Tf}\circ \liftT$
  whenever either of the conditions below is met
  \begin{itemize}
  \item $f$ is surjective and $\liftT$ preserves
    \kl{constructively-convergent} sups.
  \item $f$ is injective, $T$ preserves weak pullbacks, $\liftT$ is a
    \kl{fibred lifting} corresonding to an \kl{evaluation map} $\ev$
    such that for every $t\in T\VV$, $\ev(t)\neq \bot$ implies
    $t\in T(\VV\backslash\{\bot\})$. (In other words:
    $\ev^{-1}(\VV\backslash \{\bot\})\subseteq
    T(\VV\backslash\{\bot\})$.)
  \item $f$ is an arbitrary function and all the above properties are
    satisfied.
  \end{itemize}
\end{prop}

\begin{proof}
  Let $f = m\circ e$ be the epi-mono factorization of $f$, i.e.,
  $e\colon X\to Z$ is surjective and $m\colon Z\to Y$ is injective. We
  will show the inequality separately for $m$, $e$, from which we can
  straightforwardly derive the inequality for $f$.

  \begin{itemize}
  \item
    $\liftT\circ \directImage{e} \le \directImage{Te}\circ \liftT$:

    Let $p\colon X\to\VV$, $z\in Z$. Observe that
    \[ \directImage{e}(p)(z) = \bigvee \{p(x)\mid e(x) = z\} = \bigvee
      \{(p\circ g)(z) \mid g\in G\}
    \]
    where $G = \{g\colon Z\to X\mid e\circ g = \id_Z\}$ is the set of
    all choice functions. Note that the last equality in the displayed
    equation above requires surjectivity of $e$, because otherwise no
    choice functions exist.

    So
    $\directImage{e}(p) = \bigvee_{g\in G} p\circ g = \bigvee_{g\in G}
    \reindexg(p)$ and we show that this sup is
    \kl{constructively-convergent}. Let $q\colon Z\to \VV$ with
    $q\VVlll \directImage{e}(p)$. Now for a $z\in Z$ we observe that
    $q(z) \VVlll \bigvee_{e(x)=z} p(x)$ and hence, since we are
    working in a \kl{cccd-lattice}, there exists an $x_z\in X$ with
    $e(x_z)=z$ and $p(x_z)\ge q(z)$. On $z$ we define the choice
    function $g$ as $g(z) = x_z$. \todo{This argument requires the
      axiom of choice.} We have $e\circ g = \id_Z$ and furthermore for
    all $z\in Z$ we have $q(z) \le p(x_z) = (p\circ g)(z)$, hence
    $q\le p\circ g$ as desired.

    According to our assumption $\liftT$ preserves such suprema and we
    get:
    \[
      \liftT(\directImage{e}(p)) = \liftT(\bigvee_{g\in G}
      \reindexg(p)) = \bigvee_{g\in G}\liftT(\reindexg(p))\le
      \bigvee_{g\in G} \reindex{Tg}(\liftT p)
    \]
    We will now show $\reindex{Tg} \le \directImage{Te}$ as an
    intermediate result: Let $p\colon TX\to\VV$ and $t\in TZ$. Then
    \begin{eqnarray*}
      && \reindex{Tg}(p)(t) = (p\circ Tg)(t) = p(Tg(t)) \le
         \bigvee_{Te(s)=t} p(s) = \directImage{Te}(p)(t)
    \end{eqnarray*}
    since $s=Tg(t)$ satisfies
    $Te(s) = Te(Tg(t)) = T\mathit{id}_Z(t) = t$.  This implies
    \[ \bigvee_{g\in G} \reindex{Tg}(\liftT p) \le \bigvee_{g\in G}
      \directImage{Te}(\liftT p) = (\directImage{Te}\circ
      \liftT)(p) \] By combining everything we obtain the desired
    result.
  \item
    $\liftT\circ \directImage{m} \le \directImage{Tm}\circ \liftT:$

    Let $p\colon Z\to\VV$, $t\in FY$, we have to show that
    $\liftT(\directImage{m}(p))(t) \le \directImage{Tm}(\liftT p)(t)$.

    We consider the following two cases:
    \begin{itemize}
    \item $t$ is in the image of $Tm$: in this case there exists
      $s\in TX$ with $Tm(s) = t$.

      Since $m$ is injective we have that for any $y\in Y$ in the
      image of $m$ $\directImage{m}(p)(y) = p(z)$, where $z\in Z$ is
      the unique preimage of $y$. Hence
      $\directImage{m}(p)\circ m = p$.
      
      Using the fact that we have a fibred lifting
      (Proposition~\ref{prop:lift-correspondences}), this means that
      \begin{eqnarray*}
        && \liftT(\directImage{m}(p))(t) =
           \liftT(\directImage{m}(p))(Tm(s)) =
           \liftT(\directImage{m}(p)\circ m)(s) = \liftT p(s) \\
        & \le &
                \bigvee_{Tm(s)=t} \liftT p(s) = \directImage{Tm}(\liftT p))(t)
      \end{eqnarray*}
    \item $t$ is not in the image of $Tm$: we show that in this case
      $\liftT(\directImage{m}(p))(t) = \bot$. (The right-hand side of
      the inequality is also $\bot$, due to the empty supremum.) Note
      that $\directImage{m}(p)(y) = \bot$ for all $y\in Y$ which are
      not in the image of $m$.

      Now assume that $\liftT(\directImage{m}(p))(t) \neq \bot$. Take
      the pullback on the left below and observe that
      $Y' = \{y\in Y\mid \directImage{m}(p)(y) \neq \bot\} \subseteq
      m[X]$.
      \[
        \xymatrix{
          \VV\backslash\{\bot\} \ar@{>->}[r] & \VV \\
          Y' \ar[u] \ar@{>->}[r]_m & Y \ar[u]^{\directImage{m}(p)} }
        \qquad \xymatrix{ T(\VV\backslash\{\bot\}) \ar@{>->}[r] & T\VV
          \ar[r]^{\ev}
          & \VV \\
          T(Y') \ar[u] \ar@{>->}[r]_{Tm} & TY
          \ar[u]^{T(\directImage{m}(p))}
          \ar[ur]_{\liftT(\directImage{m}(p))} & }
      \]
      Since $T$ is weak pullback preserving, the square on the right
      above is a weak pullback.

      By assumption we have
      $\ev(T(\directImage{m}(p))(t)) = \liftT(\directImage{m}(p))(t)
      \neq \bot$. This implies that
      $T(\directImage{m}(p))(t) \in T(\VV\backslash\{\bot\})$.

      Since the square to the right above is a weak pullback, this
      means that $t\in T(Y')$, hence $t\in T(m[X])$, which is a
      contradiction, since $t$ is not in the image of $Tm$. \qedhere
    \end{itemize}
  \end{itemize}
\end{proof}

\begin{lem}\label{lemma:ccdq1}
  Assume that $T$ preserves weak pullbacks and $\VV$ is a
  \kl{constructively completely distributive} quantale. Then the
  canonical predicate lifting $\liftTcan$ preserves
  \kl{constructively-convergent} sups, i.e.,
  $\liftTcan(\bigvee_{i\in I} p_i) = \bigvee_{i\in I} \liftTcan(p_i)$.
\end{lem}

\begin{proof}
  First, we obviously have
  $\liftTcan(\bigvee_{i\in I} p_i) \ge \bigvee_{i\in I}
  \liftTcan(p_i)$ due to monotonicity.

  We now show
  $\liftTcan(\bigvee_{i\in I} p_i) \le \bigvee_{i\in I}
  \liftTcan(p_i)$: first denote $\bigvee_{i\in I} p_i$ by $p$. Let
  $t\in TX$, hence the inequality spells out to
  \[ \bigvee \{r\mid Tp (t) \in T(\uparr)\} \le \bigvee_{i\in
      I}\bigvee\{s\mid Tp_i(t)\in T(\upars)\} \] Now let $r$ be such
  that $Tp(t)\in T(\uparr)$ and take the pullback on the left
  below. Note that $X_r = \{x\in X\mid p(x)\ge r\}$.
  \[
    \begin{tikzcd}
      \uparr \ar[r,rightarrowtail] & \VV  & &       T(\uparr) \ar[r,rightarrowtail] & T\VV  \\
      {X_r} \ar[u]\ar[r,rightarrowtail] &X \ar[u,swap,"p"] & & T(X_r)
      \ar[u] \ar[r,rightarrowtail] & TX \ar[u,swap,"Tp"]
    \end{tikzcd}
  \]
  Due to weak pullback preservation the square above on the right is a
  weak pullback. This means that $t\in TX$, which satisfies
  $t\in T(\uparr)$, is also contained in $T(X_r)$.

  Now let $u\VVlll r$. We define a predicate $p'\colon X\to\VV$ with
  \[ p'(x) = \left\{
      \begin{array}{ll}
        u & \mbox{if $x\in X_r$} \\
        \bot & \mbox{otherwise} 
      \end{array}
    \right.
  \]
  Note that $p'$ satisfies $p'\VVlll p$. This is also true whenever
  $p'(x) = \bot$, since $\bot$ is totally below everything. Then, due
  to constructively-convergence of the sup there exists an index
  $i\in I$ with $p'\le p_i$.

  Now obtain the set $X_u^i$ with the following pullback on the left,
  where $X_u^i = \{x\in X\mid p_i(x) \ge u\}$.
  \[
    \begin{tikzcd}
      \uparu \ar[r,rightarrowtail] & \VV & & T(\uparu) \ar[r,rightarrowtail] & T\VV  \\
      X_u^i \ar[u] \ar[r,rightarrowtail] & X \ar[u,swap,"{p_i}"] & &
      T(X_u^i) \ar[u] \ar[r,rightarrowtail] & TX \ar[u,swap,"{Tp_i}"]
    \end{tikzcd}
  \]
  We can observe that $X_r\subseteq X_u^i$: let $x\in X_r$, then
  $u = p'(x) \le p_i(x)$, hence $x\in X_u^i$.

  This means that $t\in T(X_u^i)$ and since the square on the right
  above commutes, this gives us $Tp_i(t)\in T(\uparu)$. From this we
  infer
  \[ \bigvee_{i\in I}\bigvee\{s\mid Tp_i(t)\in T(\upars)\} \ge u \]
  Since this holds for all $u\VVlll r$, we have
  \[ r = \bigvee_{u\VVlll r} u \le \bigvee_{i\in I}\bigvee\{s\mid
    Tp_i(t)\in T(\upars)\} \] which entails the required inequality.
\end{proof}

\begin{lem}\label{lemma:ccdq2}
  For a \kl{constructively completely distributive} quantale, it holds
  for the canonical predicate lifting $\liftTcan$ that
  $\evcan(t)\neq \bot$ implies $t\in T(\VV\backslash\{\bot\})$ for
  $t\in T\VV$.
\end{lem}

\begin{proof}
  Assume that
  \[ \evcan(t) = \bigvee \{ r\mid t\in T(\uparr)\} \neq \bot \] Hence,
  there is at least one $r\neq \bot$ with
  $t\in T(\uparr) \subseteq T(\VV\backslash\{\bot\})$, otherwise the
  supremum would equal $\bot$.
\end{proof}

Whenever $T = \mathcal{D}$ (the probability distribution functor), we
have that $\mathcal{D}$ preserves \kl{constructively-convergent} sups
(since it preserves uniform convergence). We now consider the
\kl{evaluation map} as in Example~\ref{ex:evaluation-as-mult-distr-2},
namely the expectation $\ev(t) = \sum_{r\in [0,1]} r\cdot t(r)$ with
$t\in \mathcal{D}[0,1]$. Note that in the quantale $[0,1]$ we have
$\bot = 1$.

However the property
$\ev(t)\neq 1\ \Rightarrow\ t\in \mathcal{D}[0,1)$ for all
$t\in \mathcal{D}[0,1]$ does not hold in this case. If $t(0) = 1$,
$t(r) = 0$ for all $r\neq 0$, we have $\ev(t) = 0 \neq 1$, but
$t\not\in \mathcal{D}[0,1)$.

Nevertheless, the corresponding $\Vpred$-lifting of $\Distr$ still
satisfies the property
$\widehat{\Distr}\circ \directImage{\lambdaChBase} \le
\directImage{\Distr \lambdaChBase}\circ \widehat{\Distr}$, required in
Theorem~\ref{thm:Lifting-Distr-Law-Wass}. Here we can rely on the fact
that for $\mathcal{D}$ the components
$\lambdaChBase_X\colon \mathcal{D}(X\times X)\to \mathcal{D}X\times
\mathcal{D}X$ are \emph{surjective}, so we can apply the second item
of Proposition~\ref{prop:ccdqgeneral}.

\subsubsection{Non-expansiveness of contexts}

\cornonex*
\begin{proof} First observe that by definition of $ctx$ (Example
  \ref{ex:ctx}), we have that
  \begin{equation*}\nu b (t_1,t_2) \leq ctx(\nu b)
    (C(t_1),C(t_2))\,. \end{equation*}
  Moreover, since $ctx$ is $b$-compatible, by \eqref{eq:inclusion}, it holds that 
  \begin{equation*}ctx(\nu b) (C(t_1),C(t_2))\leq (\nu b)
    (C(t_1),C(t_2))\,. \end{equation*}
\end{proof}

\subsection{Proofs and additional material for
  Section~\ref{sec:examples}}
\label{sec:proofs-examples}

\begin{lem}
  \label{lem:well-behaved}
  The evaluation maps $\evNFAF$ and $\evNFAT$ defined in
  Section~\ref{sec:examples} induce liftings which satisfy the
  requirements of Theorem~\ref{thm:restricting-Wasserstein}.
\end{lem}

\begin{proof}
  These evaluation maps satisfy the required properties: $\emph{ev}_T$
  is the canonical evaluation map (see
  Section~\ref{sec:wasserstein-lifting}), thus the statement follows
  from Proposition~\ref{prop:can-lifting-is-well-behaved}.  For
  $\emph{ev}_F$, notice that this is of the form $g\circ \ev$, where
  $\ev$ satisfies the requirements of
  Proposition~\ref{propCharEvMapWellBeh} (since it is canonical) and
  $g\colon \VV\to \VV$ with $g(r) = c\cdot r$ is monotone,
  $g(a\otimes b) \ge g(a)\otimes g(b)$ and $g(1)\ge 1$. It is thus
  straightforward to see that $\emph{ev}_F$ fulfils the conditions of
  Proposition~\ref{propCharEvMapWellBeh} and thus, the corresponding
  lifting, those of Theorem~\ref{thm:restricting-Wasserstein}.
\end{proof}

The next lemma establishes the fact that the distributive law
considered in the example in Section~\ref{sec:examples} satisfies the
hypothesis of Theorem~\ref{thm:Lifting-Distr-Law-Wass}.

\begin{lem}
  \label{lem:Lifting-of-Distributive-Law-NFA}
  Assume $\liftF$ and $\liftT$ are the $[0,1]$-$\Pred$ liftings of
  $FX=2\times X^A$ and $TX=\Pow X$ which correspond to the evaluation
  maps $\evNFAF$ and $\evNFAT$ defined in the example in
  Section~\ref{sec:examples}.  Then we have:
  \begin{enumerate}
  \item
    $\liftT\circ\directImage{\lambdaChBaseF{X}}
    \le\directImage{T\lambdaChBaseF{X}}\circ\liftT$, and,
  \item $\distrLawNFA\colon T\circ F\To F\circ T$ lifts to a natural
    transformation
    $\liftDistr\colon\liftT\circ\liftF\To\liftF\circ\liftT$
  \end{enumerate}
\end{lem}

\begin{proof}
  Recall that on the quantale $[0,1]$ the quantale order is the
  reversed order on the reals, so in order to avoid confusion we use
  $\le, \lor,\land$ in the quantale and $\geR,\inf,\sup$ in the reals.

  \noindent\emph{1.} To prove the first item, we can rely on
  Proposition~\ref{prop:ccdq}, since $\liftTcan$ is the canonical
  lifting and we are working in the quantale $\VV=[0,1]$, which is
  \kl{constructively completely distributive}.

  \noindent\emph{2.} Recall that $\liftT\circ\liftF$ is a lifting of $T\circ F$
  which corresponds to the evaluation map
  $\ev_{TF}=\evNFAT\circ T(\evNFAF)$. Similarly, $\liftF\circ\liftT$
  corresponds to the evaluation map
  $\ev_{FT}=\evNFAF\circ F(\evNFAT)$.  The existence of $\liftDistr$
  is then equivalent to the inequality
  \begin{equation}
    \label{eq:lemmaD2}
    \ev_{TF}\geR \ev_{FT}\circ \zeta_\VV\,.    
  \end{equation}
  Here we are almost in the setting of canonical liftings treated in
  Proposition~\ref{thm:lift-nat-transf-canonical}, apart from the fact
  that $\evNFAF=g\circ\evFcan$, where the function $g$ is given by
  $g(r) = c\cdot r$. Recall $\evNFAT=\evTcan$. Furthermore $T$
  preserves weak pullbacks and $F$ preserves intersections, hence by
  (the proof of) Proposition~\ref{thm:lift-nat-transf-canonical}, we
  know that
  \[
    \evTcan\circ T(\evFcan)\geR \evFcan\circ F(\evTcan)\circ
    \zeta_\VV\,.
  \]
  In order to obtain the desired lifting of natural transformations,
  we first notice that $\evTcan\circ Tg = g\circ \evTcan$. Indeed, for
  all $R\subseteq [0,1]$ we have
  \[ \evTcan(Tg(R)) = \sup c\cdot R = c \cdot \sup R =
    g(\evTcan(R))\,.\] To conclude, we use the above equalities and
  the monotonicity of $g$:
  \[
    \begin{aligned}
      \ev_{TF} & = \evNFAT\circ T(\evNFAF)\\
      & = \evTcan\circ T(g)\circ T(\evNFAF)\\
      & \geR g\circ \evTcan \circ T(\evNFAF)\\
      & \geR g\circ \evFcan\circ F(\evTcan)\circ \zeta_\VV\\
      & = \evNFAF\circ F(\evTcan)\circ \zeta_\VV\\
      & =\ev_{FT}\circ \zeta_\VV\,.
    \end{aligned}
  \]
\end{proof}
\end{document}